\newtheorem{definition}{Definition}
\newtheorem{lemma}[definition]{Lemma}
\newtheorem{proposition}[definition]{Proposition}
\newtheorem{corollary}[definition]{Corollary}
\newtheorem{theorem}[definition]{Theorem}
\newtheorem{remark}[definition]{Remark}
\newcommand{\N}{\mathbb{N}}
\renewcommand{\P}{\mathbb{P}}
\newcommand{\Z}{\mathbb{Z}}
\newcommand{\R}{\mathbb{R}}
\newcommand{\Card}{{\rm Card\,}}
\newcommand{\Ker}{{\rm Ker}}
\newcommand{\UN}{(1,1,\ldots,1)}
\renewcommand{\a}{{\vec{a}}}
\renewcommand{\b}{{\vec{b}}}
\newcommand{\F}{{\mathcal F}}
\newcommand{\Fa}{\F_\a}
\newcommand{\Fb}{\F_\b}
\newcommand{\Fao}{\F_{\a,\omega}}
\newcommand{\KerF}{\Ker\, \Fa}
\newcommand{\G}{{\mathcal G}}
\renewcommand{\H}{{\mathcal H}}
\newcommand{\D}{{\mathcal D}}
\renewcommand{\P}{{\mathcal P}}
\renewcommand{\S}{{\mathcal S}}
\newcommand{\Ga}{\G_\a}
\newcommand{\Ia}{I_\a}
\newcommand{\Hao}{\H_{\a,\omega}}
\newcommand{\Iao}{I_{\a,\omega}}
\newcommand{\Gao}{\G_{\a,\omega}}
\newcommand{\Ha}{\H_\a}
\newcommand{\Hb}{\H_\b}
\newcommand{\bE}{{\mathbb E}}
\newcommand\til{\widetilde}
\newcommand{\Qzero}{{\mathcal Q}}
\newcommand{\flip}{\textsc{flip}}
\begin{document}

\title{A $d$-dimensional extension of Christoffel words\thanks{With the support of NSERC (Canada)}}
\author{{\sc S. Labb\'e and C. Reutenauer }\\  \\
\small LIAFA, Université Paris Diderot - Paris 7,\\ [-0.6ex]
\small Case 7014, 75205 Paris Cedex 13, France\\ [-0.6ex]
\small \tt labbe@liafa.univ-paris-diderot.fr\\[1.9ex]
\small Laboratoire de Combinatoire et d'Informatique Math\'ematique,\\ [-0.6ex]
\small Universit{\'e} du Qu{\'e}bec {\`a} Montr{\'e}al,\\[-0.6ex]
\small C. P. 8888 Succursale ``Centre-Ville'', Montr{\'e}al (QC), CANADA H3C  3P8\\[-0.6ex]
\small \tt reutenauer.christophe@uqam.ca
}
\date{\small Mathematics Subject Classifications: 05C75, 52C35, 68R15.}

\maketitle

\begin{abstract}
In this article, we extend the definition of Christoffel words to
directed subgraphs of the hypercubic lattice in arbitrary dimension that we
call Christoffel graphs. Christoffel graphs when $d=2$ correspond to
well-known Christoffel words.
Due to periodicity, the $d$-dimensional Christoffel graph can be embedded in a
$(d-1)$-torus (a parallelogram when $d=3$).
We show that Christoffel graphs have similar properties to those of
Christoffel words: symmetry of their central part and conjugation with their
reversal. Our main result extends Pirillo's theorem (characterization of
Christoffel words which asserts that a word $amb$ is a Christoffel word if and
only if it is conjugate to $bma$) in arbitrary dimension.
In the generalization, the map $amb\mapsto bma$ is seen as a flip operation on
graphs embedded in $\Z^d$ and the conjugation is a translation.  
We show that a fully periodic subgraph of the hypercubic lattice is a
translate of its flip if and only if it is a Christoffel graph.
\end{abstract}


\section{Introduction}

This article is a contribution to the study of discrete planes and
hyperplanes in any dimension~$d$. We study only rational
hyperplanes, that is, those which are defined by an equation with rational
coefficients. We extract from such an hyperplane a finite pattern that we
call, for $d=3$, a \emph{Christoffel parallelogram}. We show that they are a
generalization of Christoffel words.

Discrete planes were introduced by \cite{Reveilles_1991} and further studied
\cite{debled-rennesson_reconnaissance_1995,MR1382845,A,MR1732895}.  Recognition
algorithms were proposed in \cite{MR1368203,MR2305655,MR1603656}.  See
\cite{MR2296869} for a complete review about many aspects of digital
planarity, such as characterizations in arithmetic geometry, periodicity,
connectivity and algorithms.
Discrete planes can be seen as an union of square faces. Such stepped
surface, introduced in \cite{MR1279568,MR1247666} as a way to construct
quasiperiodic tilings of the plane, can be generated from multidimensional
continued fraction algorithms by introducing substitutions on square faces
\cite{MR1888763, MR1906478}.

While discrete planes are a satisfactory generalization of Sturmian words, it
is still unclear what is the equivalent notion of Christoffel words in higher
dimension. In \cite[Fig. 6.6 and 6.7]{fernique_these_2007}, fundamental
domain of rational discrete planes are constructed from the iteration of
generalized substitutions on the unit cube.  Recently \cite{MR3052947}
generalized central words to arbitrary dimension using palindromic closure. In
both cases the representation is nonconvex and has a boundary like a fractal.

In this article, we propose to extend the definition of Christoffel words to
directed subgraphs of the hypercubic lattice in arbitrary dimension that we
call Christoffel graphs. 
A similar construction, called \emph{roundwalk}, but serving a different
purpose was given in \cite{MR2074953} producing multi-dimensional words
that are closely related to $k$-dimensional Sturmian words.
Christoffel graphs when $d=2$ correspond to Christoffel words.
Due to its periods, the $d$-dimensional Christoffel graph can be embedded in a
$(d-1)$-torus and when $d=3$, the torus is a parallelogram.
This extension is motivated by Pirillo's theorem
which asserts that a word $amb$ is a Christoffel word if and only if it is
conjugate to $bma$.  In the generalization, the map $amb\mapsto bma$ is seen
as a flip operation on graphs embedded in $\Z^d$ and the conjugation is
replaced by some translation.  When $d=3$, our flip corresponds to a flip in a
rhombus tiling \cite{MR2856174,MR2440650,MR2330996}.
We show that these Christoffel graphs have similar properties to those
of Christoffel words: symmetry of their central part
(Lemma~\ref{lem:bodysymmetric}) and conjugation with their reversal
(Corollary~\ref{cor:conjugatetoreversal} and \ref{ChristParal}).
Our main result is Theorem~\ref{thm:pirilloforalld} which extends Pirillo's
theorem in arbitrary dimension.

We recall in Section~\ref{sec:christwords} the basic notion on Christoffel
words and discrete planes. The discrete hyperplane graphs are defined in
Section~\ref{sec:IH}.
The operation on them (flip, reversal and translation)
are introduced in Section~\ref{sec:fliprevtrans}.  We show that the flip of a
Christoffel graph is a translate of itself in
Section~\ref{sec:flipistranslating}. This is the sufficiency of the Pirillo's
theorem.  In Section~\ref{sec:Ddimpirillothm}, we consider the necessity and
obtain a $d$-dimensional Pirillo's theorem, our main result.  Finally, we
construct in the Section~\ref{sec:observ} in appendix, the mathematical
framework for the definition of discrete hyperplanes, since we could not find
explicit and complete references.


\section{Christoffel words and discrete planes}\label{sec:christwords}

\subsection{Christoffel words}

Recall that Christoffel words are obtained by discretizing a line segment in the plane as follows: let $(p,q) \in \N^2$
with $\gcd(p,q) = 1$, and let $S$  be the line segment with endpoints $(0,0)$
and $(p,q)$.
\begin{figure}[ht]
\begin{center}
\includegraphics[width=3.50in]{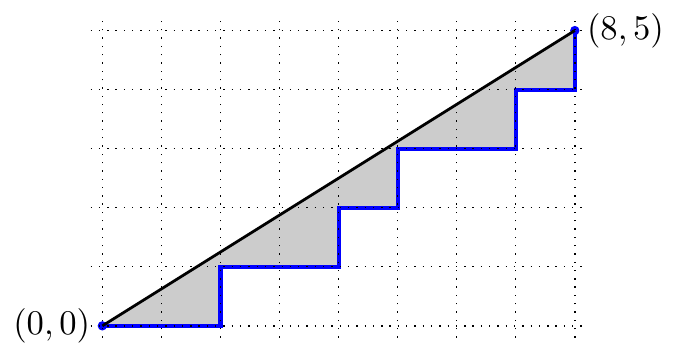}
\end{center}
\caption{The lower Christoffel word $w = aabaababaabab$.}
\label{F:christoffel}
\end{figure}
The word $w$ is a \emph{lower Christoffel
word} if the path induced by $w$ is under $S$ and if they both delimit a polygon with no
integral interior point. An \emph{upper Christoffel word} is defined similarly, by taking the path which is above the segment.
A \emph{Christoffel word} is a lower Christoffel word.
See Figure~\ref{F:christoffel} and reference \cite{BLRS08}.
An astonishing result about Christoffel words is the following characteristic
property given by Pirillo \cite{MR1854493}.

\begin{theorem}[Pirillo]\label{thm:pirillo}
A word $w=amb\in\{a,b\}^*$ is a Christoffel word if and only if $amb$ and
$bma$ are conjugate.
\end{theorem}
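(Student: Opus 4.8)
The plan is to prove both implications using the standard combinatorics-on-words machinery: the geometric definition of Christoffel words, their factorization into primitive pieces, and the palindromic/symmetry structure of the central word $m$. For the forward direction, I would start from the geometric picture. Let $w = amb$ be the lower Christoffel word of slope $q/p$ with $\gcd(p,q)=1$; the well-known facts are that $w$ is primitive, that $w = amb$ with $m$ a palindrome (the central word), and that the two Christoffel words of the same slope are $amb$ (lower) and $bma$ (upper), with the upper one being the reversal $\widetilde{w}$ of the lower one. I would then recall (or quickly reprove from the no-interior-point condition) that cyclically shifting the path of $w$ corresponds to starting the discretization at a different lattice point on the line, so all conjugates of $w$ arise as paths between consecutive integer points near the segment; among these conjugates one obtains $\widetilde w = bma$. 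Hence $amb$ and $bma$ are conjugate. The cleanest way to package this is: $w$ is a Christoffel word $\Rightarrow$ $w = amb$ with $m$ palindrome $\Rightarrow$ $bma = b\widetilde m a = \widetilde{amb}$, and a primitive Christoffel word is conjugate to its reversal (this last fact follows from the standard description of the conjugates of a Christoffel word as the Christoffel-like necklaces, or from the $SL_2(\Z)$/continued-fraction description).

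For the converse — which is the substantive direction — assume $w = amb$ and $bma$ are conjugate; I must show $w$ is a Christoffel word. The key combinatorial input is the characterization of Christoffel words among binary words: $w$ over $\{a,b\}$ is a Christoffel word iff $w = aub$ where $u$ is a \emph{central word}, equivalently iff $w$ is primitive and $w = amb$ with $m$ (possibly empty) a palindrome such that moreover $amb$ is \emph{balanced} (a Lyndon word), or via Borel–Laubie: $w$ is Christoffel iff $w$ is conjugate to a word of the form obtained by the Christoffel construction. The route I would take: first, from conjugacy of $amb$ and $bma$ deduce that $w$ is primitive (a non-primitive word $z^k$ with $k \ge 2$ cannot satisfy this unless $z$ does, and one reduces to the primitive case; in fact conjugacy of $amb$ and $bma$ forces $|w|_a$ and $|w|_b$ coprime by a counting/gcd argument on the shift amount). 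Second, write the conjugacy as $bma = v\,u$ where $amb = u\,v$ for some factorization $w = uv$; comparing first and last letters pins down that $u$ begins with $b$-side data and forces, by induction on $|w|$ using the structure of the standard factorization (the two factors of the standard factorization of a Christoffel word are themselves Christoffel words), that $m$ is a palindrome and that $w$ is balanced. Third, conclude via Pirillo-type or Borel–Laubie-type characterization that a primitive balanced word $amb$ with palindromic center is exactly a lower Christoffel word.

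The main obstacle is the converse, and within it the step showing that conjugacy of $amb$ with $bma$ forces $w$ to be \emph{balanced} (equivalently, that its path bounds a polygon with no interior integer point) — primitivity and the palindrome property are comparatively easy, but balancedness is what actually makes $w$ a discrete line segment. I expect to handle this by an induction on $|w|$: translate the conjugacy relation into an identity among prefixes/suffixes, use the Fine–Wilf theorem or the structure of bispecial factors to split $w$ as a product $w = w_1 w_2$ of two shorter words that inherit an analogous conjugacy property, invoke the inductive hypothesis to get that $w_1, w_2$ are Christoffel, and finally use the fact that the product (in the correct order) of two Christoffel words forming a standard pair is again Christoffel. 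An alternative, possibly slicker, route for the converse is to argue directly with the geometry: realize any binary word as a lattice path, show that $amb \sim bma$ means the path and its reversal tile the same cylinder, and deduce the no-interior-point property by an area/Pick's-theorem argument. I would present the combinatorial induction as the primary proof and remark on the geometric one.
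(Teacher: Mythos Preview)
The paper does not contain a proof of this theorem. Theorem~\ref{thm:pirillo} is stated as a known result due to Pirillo, with a citation to the original reference, and is used as background motivation for the paper's higher-dimensional generalization; no argument for it is given anywhere in the text. So there is nothing in the paper to compare your proposal against.

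As for the proposal itself: your forward direction is essentially the standard one and is fine --- once you know $m$ is a palindrome, $bma=\widetilde{amb}$, and a Christoffel word is conjugate to its reversal. Your converse, however, is not yet a proof but a list of possible strategies. In particular, the step you flag as the main obstacle (deducing balancedness from $amb\sim bma$) is genuinely the heart of the matter, and none of your suggested routes is carried far enough to be convincing: the inductive splitting $w=w_1w_2$ into shorter words ``inheriting an analogous conjugacy property'' is asserted but not established, and the Pick's-theorem idea is only a hint. If you want a self-contained argument, the cleanest published proofs go through the characterization of central words (e.g.\ via the two coprime periods, or via the equivalence $amb$ Christoffel $\iff$ $amb$ and $bma$ both Lyndon/balanced), rather than a direct induction on length.
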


It is even known that the two words $amb$ and $bma$ are conjugate by palindromes \cite{MR1483437} Theorem 3.1 (see also \cite{MR2197281} Proposition 6.1): for example, the Christoffel word in Figure~\ref{F:christoffel} can be
factorized as a product of two palindromes, 
but also as a letter, a {\em central
word} $m$ and a last letter:
\[
w  = aabaa\cdot babaabab
   = a\cdot abaababaaba\cdot b
   = amb,
\]
and the conjugate word $w'$ of $w$ obtained by exchanging of the two
palindromes can also be factorized as the product of a letter, the same central word
$m$ and a last letter:
\[
w' = babaabab\cdot aabaa
   = b\cdot abaababaaba\cdot a
   = bma.
\]
Centrals words are the words $m$ such that $amb$ is a Christoffel word. They can be defined independently of Christoffel words:
a word $m$ is a \emph{central word} if and only if for some coprime integers $p$ and $q$, the length of $m$ is $p+q-2$ and $p$ and $q$
are periods of $m$.
In this case, the Christoffel word $amb$ is associated as above to the vector $(p,q)$.  See \cite{carpi_central_2005} for more informations
and \cite{berstel_sturmian_2007} for fourteen different characterizations of
central words.
There are also some properties which are satisfied by 
Christoffel words but do not characterize them.
\begin{lemma}\label{lem:periodcentral}
Let $w=amb$ be a Christoffel word of vector $(p,q)$. Then,
\begin{enumerate}[\rm(i)]
  \item the central word $m$ is a palindrome: $\til{m}=m$;
  \item $p$ is a period of $am$ and $q$ is a period of $mb$;
  \item the reversal $\til{w}$ of a Christoffel word $w$ is conjugate to $w$.
\end{enumerate}
\end{lemma}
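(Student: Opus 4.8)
The plan is to prove the three properties for a Christoffel word $w = amb$ of vector $(p,q)$ by relating the combinatorics of the word to the geometry of the underlying line segment $S$ from $(0,0)$ to $(p,q)$, exploiting the symmetry of the defining configuration under the half-turn rotation $\rho$ about the center $(p/2, q/2)$.

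First I would establish (i). The key observation is that the point set consisting of the segment $S$ together with the lattice path of $w$ bounds a polygon $\Pi$ with no interior lattice point, and $\Pi$ is invariant under the central symmetry $\rho(x,y) = (p-x, q-y)$, since $\rho$ fixes $S$ setwise and sends the lower path to... well, a priori to the upper path. So instead I would use the standard description: the lower Christoffel path visits exactly the lattice points $(i,j)$ with $0 \le i \le p$, $0 \le j \le q$ that lie on or just below $S$, specifically those with $0 \le pj - qi < p$ (or a similar inequality). Reading off the letters $a$ (horizontal step) and $b$ (vertical step) from consecutive path vertices, one checks that applying $\rho$ permutes these vertices and reverses the order of traversal, while swapping horizontal-versus-vertical type would occur only at the two extreme steps; the net effect on the inner word $m$ (obtained by deleting the first letter $a$ and last letter $b$) is that $\rho$ reverses $m$ letter-for-letter and preserves each letter's type. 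Hence $\widetilde m = m$. Alternatively, and more cheaply, I would just invoke the cited characterization that $m$ is a central word, and quote from \cite{berstel_sturmian_2007} or \cite{carpi_central_2005} that central words are palindromes; but a self-contained geometric argument is cleaner here.

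Next, for (ii): $w = amb$ has length $p + q$, and the number of $a$'s is $p$, the number of $b$'s is $q$ (the path makes $p$ horizontal and $q$ vertical unit steps). Since $m$ is central of length $p+q-2$ with coprime periods $p$ and $q$, I would argue that $am$, of length $p+q-1$, still has $p$ as a period: the period $p$ of $m$ means $m_i = m_{i+p}$ for all valid $i$, and prepending $a$ consistently extends this because the letter at the position $p$ before the start of $m$ would, by the Christoffel/mechanical-word formula, again be $a$ — concretely, $w$ itself is a factor of the bi-infinite mechanical (Sturmian) word of slope $q/p$, which has $p$ (and $q$) as periods up to the relevant length, so any window of length $\le p + q$ has $p$ as a period wherever it fits, in particular $am$. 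Symmetrically $q$ is a period of $mb$. The cleanest route is to use the mechanical word $s_n = \lfloor (n+1)q/p \rfloor - \lfloor nq/p \rfloor$ and note $am = s_0 s_1 \cdots s_{p+q-2}$ read appropriately, then periodicity of the floor-difference sequence gives the claim.

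Finally, for (iii): the reversal $\widetilde w = \widetilde{amb} = \widetilde b\,\widetilde m\,\widetilde a = b m a$ using part (i). But $bma$ is exactly the conjugate of $amb$ obtained by swapping the two palindromic factors (as displayed in the excerpt for the example, and in general by Pirillo's theorem, Theorem~\ref{thm:pirillo}, which is available to us). Hence $\widetilde w = bma$ is conjugate to $w = amb$. I should be careful whether I am allowed to cite Pirillo's theorem — if the intent is that Lemma~\ref{lem:periodcentral} be used en route to reproving it, I would instead derive directly that $amb$ and $bma$ are conjugate by exhibiting the explicit palindromic factorization $w = uv$ with $u,v$ palindromes (which follows from (i) plus the fact that both $au'$ and $u''b$ blocks around the center of $m$ split $m$ symmetrically), so that $vu = bma$ is a rotation of $uv = amb$; this is the classical fact and only needs the palindromicity of $m$ together with a length/period bookkeeping already in hand. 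The main obstacle is the careful index bookkeeping in matching the geometric path description to the letters of $am$, $mb$ and verifying the periods survive prepending/appending a letter — everything else is short once (i) is in place.
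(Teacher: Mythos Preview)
Your argument for (iii) is exactly what the paper does: from (i) one has $\widetilde w=\widetilde{amb}=b\widetilde m a=bma$, and then Pirillo's theorem (Theorem~\ref{thm:pirillo}) gives that $bma$ is conjugate to $amb$. The paper states this in one sentence and does not prove (i) and (ii) at all, treating them as classical facts (the definition of central words already records that $m$ has periods $p$ and $q$, and the palindromicity of central words is cited from the literature).

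So the difference is only that you go further than the paper by sketching self-contained arguments for (i) and (ii). Your geometric idea for (i) is sound, though as you notice the naive central symmetry swaps lower and upper paths; the fix is indeed that the \emph{interior} portion of the path (the vertices strictly between the endpoints) is what gets matched to its reversal, which is the content of (i). For (ii) your approach via the mechanical-word formula is correct; the delicate point you flag (that prepending $a$ preserves the period $p$, i.e.\ that the $p$-th letter of $m$ is $a$) does require a small computation, but it goes through. None of this is needed to match the paper's level of detail.
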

The proof of (iii) follows from (i) and from Theorem~\ref{thm:pirillo}.
Words conjugate to their reversal were studied in \cite{bhnr}, are product of
two palindromes and are not necessarily Christoffel words. Moreover, not
every palindrome is a central word. In this article, we generalize
Theorem~\ref{thm:pirillo} to dimension 3. We also show that properties
like the one enumerated in Lemma~\ref{lem:periodcentral} hold.

\subsection{Discrete planes}

Given $\a=(a_1,a_2,a_3)\in\R^3$ and $\mu, \omega\in\R$, the
\emph{lower arithmetical discrete plane} \cite{Reveilles_1991}
$\P$ is the set of point
$x=(x_1,x_2,x_3)\in\Z^3$ satisfying
\[
\mu \leq a_1x_1 + a_2x_2 + a_3x_3 < \mu + \omega.
\]
The parameter $\omega$ is called the (arithmetic) \emph{width}.
If $\omega = \Vert \a \Vert_1 = |a_1|+|a_2|+|a_3|$, then the discrete plane is
said to be \emph{standard}.
Standard arithmetical discrete plane
can be furnished
with a canonical structure of a two-dimensional, connected, orientable
combinatoric manifold without boundary, whose faces are quadrangles and whose
vertices are points on the plane \cite{MR1382845}. 
See the appendix in Section~\ref{sec:observ} where we provide the mathematical
framework for the definition of discrete hyperplanes $\P$ and
\emph{stepped surfaces} $\S$ \cite{MR2330996}.

Let $k$ be an integer such that $0\leq k<d$.
We say that $u,v\in\Z^d$ are \emph{$k$-neighbor} if and only if
\[
\Vert v-u \Vert_\infty = 1
\text{ and }
\Vert v-u \Vert_1 \leq d - k.
\]
In this article, we are interested in the graph representing
the $2$-neighboring relation for the discrete plane  $\P$
and in general the
$(d-1)$-neighboring relation for the discrete hyperplane in $\Z^d$.
Note that $u,v\in\Z^d$ are $(d-1)$-neighbors if and only if their difference is
$\pm e_i$ for some $i$ such that $1\leq i\leq d$.

\section{Discrete hyperplane graphs}\label{sec:IH}

Let $a_1,\ldots,a_d$ be relatively prime positive integers and
$s=\Vert\a\Vert_1=\sum a_i$ be their sum.
We denote $\a=(a_1,a_2,\ldots,a_d) \in \N^d$.
We define the mapping $\Fa:\Z^d\to \Z / s\Z$ sending each integral vector $(x_1,\ldots,x_d)$ onto $\sum_ia_ix_i \bmod s$. 
We identify $\Z/s\Z$ and $\{0,1,\ldots, s-1\}$. A total order on $\Z/s\Z$
is defined correspondingly; it is this order that is used in the definition of
$\Ha$ below. 
The map $\Fa$ induces a $\Z^d$-action $x\cdot g = g + \Fa(x)$ on the cyclic
group $\Z/s\Z$, so that it is a rational case of the $\Z^2$-action on the
torus as studied in \cite{MR1782038,MR1906478}.
%
We consider $\bE_d = \{ (u,u+e_i) : u\in\Z^d \,\text{and}\, 1\leq i\leq d\}$,
the set of oriented edges of the hypercubic lattice.
Note that the set $\bE_d$ also corresponds to the Cayley graph of $\Z^d$ with
generators $e_i$ for all $i$ with $1\leq i\leq d$.

\subsection{The Christoffel graph $\Ha$}

The \emph{Christoffel graph} $\Ha$ of normal vector $\a$ is the subset of
edges of $\bE_d$ increasing for the function $\Fa$:
\[
\Ha = \{ (u, u+e_i) \in \bE_d : \Fa(u) < \Fa(u+e_i) \}.
\]
An example of the graph $\Ha$ when $d=2$ and $\a=(a_1,a_2)=(2,5)$ is shown in
Figure~\ref{fig:graph25} (left) where the edges are represented in blue and a
small red circle surrounds the origin.
\begin{figure}[h]
\begin{center}
\includegraphics{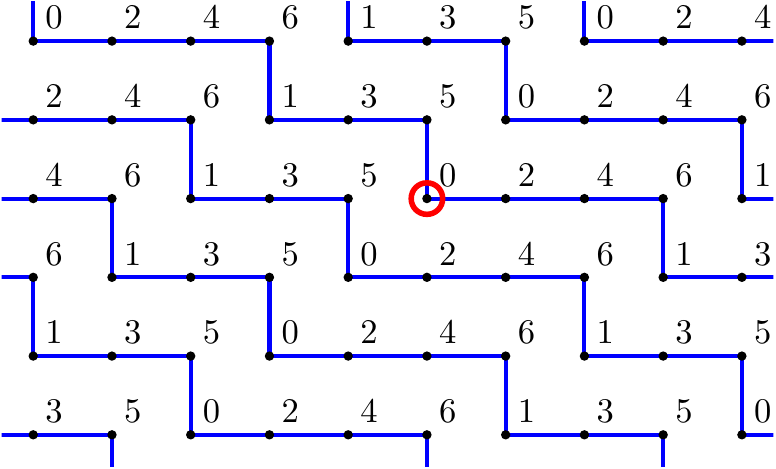}
\quad\quad
\includegraphics{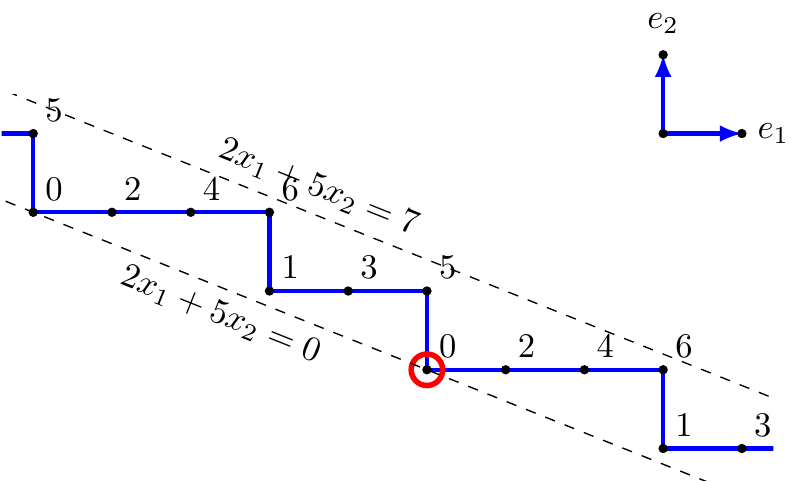}
\end{center}
\caption{Left: the graph $\Ha$ with $\a=(2,5)$.
Right: Standard discrete line $\P$ of normal vector $\a=(2,5)$.}
\label{fig:graph25}
\end{figure}
A first observation is stated in the next lemma.

\begin{lemma}
The graph $\Ha$ is invariant under the translation by the vector $\sum_{i=1}^d e_i=\UN$.
\end{lemma}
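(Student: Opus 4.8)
The plan is to show directly that the defining inequality for membership in $\Ha$ is unchanged when one translates an edge by $\UN=(1,1,\ldots,1)$, which reduces to the observation that $\Fa$ itself is invariant under translation by $\UN$. First I would compute $\Fa(u+\UN)$ for an arbitrary $u\in\Z^d$: by definition $\Fa(u+\UN)=\sum_i a_i(u_i+1)\bmod s = \left(\sum_i a_i u_i + \sum_i a_i\right)\bmod s = \bigl(\Fa(u)+s\bigr)\bmod s = \Fa(u)$, using that $s=\sum_i a_i$. Thus $\Fa\circ\tau=\Fa$ where $\tau$ denotes translation by $\UN$.

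Next I would use this to transfer the invariance to the edge set. An edge of $\bE_d$ has the form $(u,u+e_i)$, and its image under $\tau$ is $(u+\UN,\,u+\UN+e_i)$, which is again an element of $\bE_d$ (the vertex $u+\UN$ ranges over all of $\Z^d$ as $u$ does, so $\tau$ is a bijection of $\bE_d$). The edge $(u+\UN,u+\UN+e_i)$ belongs to $\Ha$ if and only if $\Fa(u+\UN) < \Fa(u+\UN+e_i)$, and by the computation above this inequality reads $\Fa(u) < \Fa(u+e_i)$, which is exactly the condition for $(u,u+e_i)\in\Ha$. Hence $\tau$ maps $\Ha$ bijectively onto $\Ha$, i.e.\ $\Ha$ is invariant under translation by $\UN$.

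There is essentially no obstacle here; the only point requiring a word of care is that the comparison "$<$" in the definition of $\Ha$ is taken with respect to the fixed total order on $\Z/s\Z\cong\{0,1,\ldots,s-1\}$, so one must know that $\Fa(u+\UN)$ and $\Fa(u)$ are equal \emph{as elements of $\Z/s\Z$} (not merely congruent), which is exactly what the reduction modulo $s$ in the displayed computation gives. Once that is noted, the inequality $\Fa(u+\UN)<\Fa(u+\UN+e_i)$ is literally the same statement as $\Fa(u)<\Fa(u+e_i)$, and the lemma follows.
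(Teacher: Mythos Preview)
Your proof is correct and follows essentially the same idea as the paper: since $\Fa(\UN)=s\equiv 0\bmod s$, one has $\Fa(u+\UN)=\Fa(u)$, and hence the defining condition for membership in $\Ha$ is preserved by the translation. The paper actually defers this to Lemma~\ref{lem:invariantKerF}, where the slightly more general statement that $\Ha$ is invariant under every $t\in\Ker\Fa$ is proved (and phrased via the interval criterion of Lemma~\ref{lem:edgesinterval} rather than the raw inequality), but the substance of the argument is the same as yours.
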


The proof is postponed at Lemma~\ref{lem:invariantKerF} where we show that
the graph $\Ha$ is invariant under all translations $t\in\KerF$.
Because of this invariance, a question is to find a good representent for
the equivalence class $x+\UN\Z$ for each $x\in\Z^d$. 
It is natural to choose $\bar x\in x+\UN\Z$ such that
\begin{equation}\label{eq:standarddiscreteplane}
0\leq \sum a_i \bar x_i < s.
\end{equation}
If $(u,v)$ is an edge of $\Ha$ such that $v-u=e_i$ then
$(\bar u, \bar v)$ 
is a pair of points which are (d-1)-neighbors
satisfying
Equation~\eqref{eq:standarddiscreteplane} and
$\bar v-\bar u=e_i$.
Thus the vertices satisfying Equation~\eqref{eq:standarddiscreteplane}
are a set of representents for the vertices of $\Ha$, see
Figure~\ref{fig:graph25} (right). 
Thus, each connected component of the graph $\Ha$ corresponds exactly to a
\emph{standard discrete plane} $\P$ with the $(d-1)$-neighbor relation.
The advantage of $\Ha$ over the discrete hyperplane $\P$ is its algebraic
structure. The next lemma gives an equivalent definition of the edges of the
graph $\Ha$. It will be useful in the sequel.

Let $a,b\in[0,s[$ be two integers.
If $a<b$ then $]a,b]$ is a subinterval of $[0,s[$.
If $a>b$ then $]a,b]=]a,s[\,\cup\,[0,b]$ is defined as the union of two subintervals of $[0,s[$.

\begin{lemma}\label{lem:edgesinterval}
Let $(u,v)\in\bE_d$ such that $v=u+e_i$ for some $1\leq i\leq d$. Then,
\begin{align}
(u,v)\in \Ha
\iff \Fa(u)\in[0,s-a_i-1]
\iff \Fa(v)\in[a_i,s-1]
\iff 0\notin\,]\Fa(u),\Fa(v)],\\
(u,v)\notin \Ha
\iff \Fa(u)\in[s-a_i,s-1]
\iff \Fa(v)\in[0,a_i-1]
\iff 0\in\,]\Fa(u),\Fa(v)].
\end{align}
\end{lemma}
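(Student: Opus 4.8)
The plan is to reduce everything to one elementary observation about reduction modulo $s$. Fix an edge $(u,v)\in\bE_d$ with $v=u+e_i$ and set $x=\Fa(u)\in\{0,1,\dots,s-1\}$. By definition of $\Fa$ we have $\Fa(v)\equiv x+a_i\pmod s$, and since $s=a_i+\sum_{j\neq i}a_j$ with all $a_j\ge 1$ we have $0<a_i<s$ (the degenerate case $d=1$, where $\a=(1)$ and $\Z/s\Z$ is trivial, is excluded as elsewhere). Hence $x+a_i$ lies in $\{a_i,\dots,s-1+a_i\}$ and exactly one of the following two mutually exclusive and exhaustive situations occurs:
(a) $x+a_i\le s-1$, in which case no reduction takes place and $\Fa(v)=x+a_i$;
(b) $x+a_i\ge s$, in which case $\Fa(v)=x+a_i-s$. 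In particular $\Fa(u)\neq\Fa(v)$ always, so the two-sided interval $]\Fa(u),\Fa(v)]$ is well defined by the convention stated just before the lemma.

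In case (a) the condition $x+a_i\le s-1$ is literally $\Fa(u)=x\in[0,s-a_i-1]$, and adding $a_i$ to each side it is equivalent to $\Fa(v)=x+a_i\in[a_i,s-1]$. Here $\Fa(u)=x<x+a_i=\Fa(v)$ because $a_i>0$, so $(u,v)\in\Ha$ by definition; and $]\Fa(u),\Fa(v)]=\,]x,x+a_i]$ is an ordinary subinterval of $[0,s[$ all of whose elements exceed $x\ge 0$, whence $0\notin\,]\Fa(u),\Fa(v)]$. In case (b) the condition $x+a_i\ge s$ is literally $\Fa(u)=x\in[s-a_i,s-1]$, and subtracting $s$ it is equivalent to $\Fa(v)=x+a_i-s\in[0,a_i-1]$. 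Here $\Fa(v)=x+a_i-s<x=\Fa(u)$ because $a_i<s$, so $(u,v)\notin\Ha$; and since $\Fa(u)>\Fa(v)$ the convention gives $]\Fa(u),\Fa(v)]=\,]\Fa(u),s[\,\cup\,[0,\Fa(v)]\ni 0$. As (a) and (b) are complementary, every condition appearing in the first displayed line of the statement is equivalent to (a) and every condition in the second line is equivalent to (b); this is exactly the asserted chain of equivalences.

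I do not expect a genuine obstacle here: the entire content is the case split according to whether the sum $x+a_i$ wraps around modulo $s$. The only two points that require a little care are (i) verifying $0<a_i<s$, which is what guarantees $\Fa(u)\neq\Fa(v)$ and makes the wrap-around clause of the interval convention apply in precisely case (b) and not otherwise; and (ii) bookkeeping the open/closed endpoints when translating a membership condition on $\Fa(u)$ into the corresponding one on $\Fa(v)=\Fa(u)+a_i$ or $\Fa(v)=\Fa(u)+a_i-s$.
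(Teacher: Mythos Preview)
Your proof is correct. The paper states this lemma without proof, treating it as an immediate consequence of the definitions; your argument is exactly the natural verification one would expect, namely the dichotomy according to whether $\Fa(u)+a_i$ wraps around modulo $s$, together with the observation that $0<a_i<s$ ensures $\Fa(u)\neq\Fa(v)$ so that the interval convention applies cleanly.
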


For each permutation $\sigma$ of the set $\{1,2,\cdots,d\}$,
there exists a \emph{$\sigma$-path}
\[
(u,u+e_{\sigma(1)}),
(u+e_{\sigma(1)}, u+e_{\sigma(1)}+e_{\sigma(2)}),
\cdots,
(u+\sum_{i=1}^{d-1} e_{\sigma(i)}, u+\sum_{i=1}^d e_{\sigma(i)})
\]
made of $d$ edges of $\bE_d$
going from the vertex $u\in\Z^d$ to the vertex $u+\sum_{i=1}^d e_i$.

\begin{lemma}\label{lem:allbutone}
All of the $d$ edges of a $\sigma$-path but one belong to $\Ha$.
\end{lemma}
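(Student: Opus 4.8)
The plan is to track the values of $\Fa$ along the $\sigma$-path and read off from Lemma~\ref{lem:edgesinterval} exactly which edges are missing from $\Ha$; intuitively the claim says that the $\sigma$-path crosses the ``level $0$'' fibre of $\Fa$ exactly once. First I would name the vertices of the path: set $u_0=u$ and $u_k=u+\sum_{i=1}^{k}e_{\sigma(i)}$ for $1\le k\le d$, so that the path consists of the edges $(u_{k-1},u_k)$ with $u_k-u_{k-1}=e_{\sigma(k)}$ and $u_d=u+\UN$. Writing $v_k=\Fa(u_k)$, viewed in $\{0,1,\dots,s-1\}$, one has $v_k\equiv v_{k-1}+a_{\sigma(k)}\pmod s$, and since $\sum_{k=1}^{d}a_{\sigma(k)}=\sum_i a_i=s$ the sequence closes up: $v_d=v_0$.

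By Lemma~\ref{lem:edgesinterval} applied with $i=\sigma(k)$, the edge $(u_{k-1},u_k)$ belongs to $\Ha$ if and only if $0\notin\,]v_{k-1},v_k]$. So the statement reduces to showing that $0\in\Z/s\Z$ lies in exactly one of the $d$ half-open intervals $]v_0,v_1],\,]v_1,v_2],\,\dots,\,]v_{d-1},v_d]$, and I would get this by proving that these $d$ intervals partition $\Z/s\Z$. The clean way is to lift to $\Z$: put $\til v_0=v_0$ and $\til v_k=v_0+\sum_{j=1}^{k}a_{\sigma(j)}$, so that $v_0=\til v_0<\til v_1<\cdots<\til v_d=v_0+s$ and $v_k\equiv\til v_k\pmod s$. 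With the convention for $]a,b]$ recalled just before Lemma~\ref{lem:edgesinterval}, the interval $]v_{k-1},v_k]$ is exactly the reduction modulo $s$ of the block of consecutive integers $\{\til v_{k-1}+1,\dots,\til v_k\}$ (these are cyclic runs of the same length $a_{\sigma(k)}<s$ ending at $v_k$). For $k=1,\dots,d$ these blocks tile $\{v_0+1,\dots,v_0+s\}$, a window of $s$ consecutive integers, which maps bijectively onto $\Z/s\Z$ under reduction modulo $s$; hence the intervals $]v_{k-1},v_k]$ partition $\Z/s\Z$. In particular $0$ lies in precisely one of them, so precisely one edge of the $\sigma$-path lies outside $\Ha$ and the remaining $d-1$ edges lie in $\Ha$.

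The only point requiring care is the wrap-around bookkeeping: that each $]v_{k-1},v_k]$ genuinely carries $a_{\sigma(k)}$ residues and that consecutive arcs abut without overlap. Passing to the lift $\til v_k$ in $\Z$ makes this essentially automatic, since there the intervals are honest disjoint blocks of integers whose union is a length-$s$ window, so I do not anticipate a serious obstacle — it is a matter of organizing the modular arithmetic cleanly.
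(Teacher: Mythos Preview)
Your proof is correct and follows essentially the same route as the paper: associate to the $k$-th edge the cyclic interval determined by $\Fa(u_{k-1})$ and $\Fa(u_k)$, observe that these $d$ intervals partition $\Z/s\Z$ because the increments $a_{\sigma(k)}$ sum to $s$, and conclude via Lemma~\ref{lem:edgesinterval} that exactly one interval contains $0$. Your version is in fact a bit cleaner, since you work with the half-open intervals $]v_{k-1},v_k]$ that appear verbatim in Lemma~\ref{lem:edgesinterval} and you make the wrap-around explicit by lifting to $\Z$, whereas the paper writes the intervals as $[\,\cdot\,,\,\cdot\,[$ and leaves the modular bookkeeping implicit.
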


\begin{proof}
To each edge
$
(u+\sum_{i=1}^{k-1} e_{\sigma(i)}, u+\sum_{i=1}^k e_{\sigma(i)})
$
corresponds an interval (or an union of two intervals according to the above
remark)
$
[\Fa(u)+\sum_{i=1}^{k-1} a_{\sigma(i)}, \Fa(u)+\sum_{i=1}^k a_{\sigma(i)}[
$. 
Since $\sum_{i=1}^{d} a_{\sigma(i)} = \sum_{i=1}^{d} a_{i}=s$,
those $d$ sets, with $1\leq k\leq d$, are a partition of $[0, s[$.
Therefore, only one of them contains $0$.
From Lemma~\ref{lem:edgesinterval}, only one edge of the $\sigma$-path do not
belong to $\Ha$.
\end{proof}

Let $R\subseteq\{1,2,\cdots,d\}$ and $u\in\Z^d$. An \emph{hypercube graph from
vertex $u$ to vertex $u+\sum_{i\in R}e_i$} with $2^{\Card R}$ vertices is the
subgraph of $\bE_d$ defined by
\[
    \left\{\left(u+\sum_{i\in P}e_i, u+\sum_{i\in Q}e_i\right)\in\bE_d
	\,\middle|\,
	P\subset Q\subseteq R
	\text{ and }
    \Card Q\setminus P = 1\right\}.
\]
Each nonedge of $\Ha$ implies the presence of a hypercube graph with
$2^{d-1}$ vertices orthogonal and incident to it. For example, $(-e_1,0)\notin\Ha$ and
$(0,e_2)\in\Ha$ when $\a=(2,5)$. This is proved in the next lemma.

\begin{lemma}\label{lem:hypercube}
If $(u,v)\in\bE_d\setminus\Ha$,
then the hypercube graph from vertex $v$ to vertex $u+\sum_{i=1}^de_i$ with
$2^{d-1}$ vertices is a subgraph of $\Ha$.
\end{lemma}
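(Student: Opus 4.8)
The plan is to fix a nonedge $(u,v) \in \bE_d \setminus \Ha$ with $v = u + e_j$ for some $j$, and to verify directly that every edge of the hypercube graph from $v$ to $u + \sum_{i=1}^d e_i$ lies in $\Ha$. Note first that this hypercube graph is exactly the hypercube graph on the index set $R = \{1,\ldots,d\}\setminus\{j\}$ based at the vertex $v$: its vertices are $v + \sum_{i \in P} e_i$ for $P \subseteq R$, and $v + \sum_{i \in R} e_i = u + e_j + \sum_{i\ne j} e_i = u + \sum_{i=1}^d e_i$, as required. So I must show: for every $P \subsetneq Q \subseteq R$ with $\Card Q \setminus P = 1$, writing $Q = P \cup \{k\}$, the edge $\bigl(v + \sum_{i\in P} e_i,\ v + \sum_{i\in P} e_i + e_k\bigr)$ belongs to $\Ha$.

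**The key computation.** By Lemma~\ref{lem:edgesinterval}, this edge is in $\Ha$ if and only if $0 \notin \,]\,\Fa(w),\,\Fa(w)+a_k\,]$ where $w = v + \sum_{i\in P} e_i$, i.e. if and only if $\Fa(w) \in [0, s - a_k - 1]$. Now $\Fa(w) = \Fa(v) + \sum_{i\in P} a_i$. Since $(u,v) \notin \Ha$, Lemma~\ref{lem:edgesinterval} gives $\Fa(v) \in [0, a_j - 1]$, so $\Fa(v) \le a_j - 1$. The set $P \cup \{k\}$ is a subset of $R = \{1,\ldots,d\}\setminus\{j\}$, so $\sum_{i \in P} a_i + a_k \le \sum_{i \ne j} a_i = s - a_j$. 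Therefore
\[
\Fa(w) + a_k = \Fa(v) + \sum_{i\in P} a_i + a_k \le (a_j - 1) + (s - a_j) = s - 1,
\]
which rearranges to $\Fa(w) \le s - a_k - 1$. Since $\Fa(v) \ge 0$ and the $a_i$ are positive, also $\Fa(w) \ge 0$, so indeed $\Fa(w) \in [0, s - a_k - 1]$ and the edge lies in $\Ha$ by Lemma~\ref{lem:edgesinterval}. As $P$ and $k$ were arbitrary, every edge of the hypercube graph is in $\Ha$.

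**Where the difficulty lies.** The proof is essentially a one-line inequality once the right bookkeeping is in place; the only real care needed is (a) identifying the hypercube graph from $v$ to $u + \sum_{i=1}^d e_i$ with the hypercube on the index set $R = \{1,\ldots,d\}\setminus\{j\}$ based at $v$ — this is where one must be careful that the "missing" coordinate is exactly the direction $e_j$ of the nonedge — and (b) translating the "$\Fa$ increases along the edge" condition into the interval membership of Lemma~\ref{lem:edgesinterval} and keeping the arithmetic mod $s$ straight. The crucial structural input is that $(u,v)\notin\Ha$ forces $\Fa(v)$ to be small (at most $a_j - 1$), which leaves exactly enough room — the remaining budget $s - a_j$ — to add any sub-collection of the other $a_i$'s plus one more without wrapping past $0$; this is the same partition-of-$[0,s[$ idea used in the proof of Lemma~\ref{lem:allbutone}. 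I would expect no genuine obstacle beyond making sure the base vertex and index set of the hypercube graph are correctly matched to the statement.
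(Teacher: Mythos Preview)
Your proof is correct. The only point worth a small remark is that when you write $\Fa(w) = \Fa(v) + \sum_{i\in P} a_i$, this equality in $\{0,\ldots,s-1\}$ is justified \emph{a posteriori} by the very inequality you derive (the right-hand side, computed in~$\Z$, lands in $[0,s-1]$, so no reduction occurs); you are clearly aware of this, but it might be cleaner to bound the integer sum first and then identify it with $\Fa(w)$.

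Your route differs from the paper's. The paper reduces the lemma to Lemma~\ref{lem:allbutone}: any $\sigma$-path from $u$ to $u+\sum_i e_i$ starting with the nonedge $(u,v)$ must have its remaining $d-1$ edges in $\Ha$, and ranging over all such $\sigma$ sweeps out the whole hypercube on $R=\{1,\ldots,d\}\setminus\{j\}$. You instead go directly to the interval criterion of Lemma~\ref{lem:edgesinterval} and check each hypercube edge by the single inequality $\Fa(v)+\sum_{i\in P}a_i+a_k\le s-1$. Both arguments rest on the same partition-of-$[0,s[$ idea; the paper's version is a two-line corollary of a lemma it has already proved, while yours is self-contained and arguably makes the ``budget'' $s-a_j$ left over after the nonedge more transparent.
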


\begin{proof}
From Lemma~\ref{lem:allbutone},
the last $d-1$ edges of every $\sigma$-path starting with the edge
$(u,v)$ and ending in $u+\sum_{i=1}^d e_i$ are in $\Ha$.
The set of last $d-1$ edges of these paths generates an hypercube graph
from vertex $v$ to vertex $u+\sum_{i=1}^de_i$.
\end{proof}


A line containing some point $x\in\Z^d$ parallel to $e_i$ in
the hypercubic lattice $\bE_d$ is a set
\[
L_{x,i} = \{ (x+ke_i, x+(k+1)e_i): k\in\Z \}\subset\bE_d.
\]
The intersection $L_{x,i}\cap\Ha$ of such a line with a discrete hyperplane
graph $\Ha$ is made of consecutive edges and nonedges.
The next Lemma states that Christoffel words appear in this sequence.
\begin{lemma}
The sequence of consecutive edges and nonedges in $L_{x,i}\cap\Ha$ is
periodic and the period is a Christoffel word.
\end{lemma}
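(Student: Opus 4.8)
The plan is to fix the line $L_{x,i}$ and track the values $\Fa(x+ke_i)$ as $k$ ranges over $\Z$. Since $\Fa(x+ke_i) = \Fa(x) + k a_i \bmod s$, the sequence $(\Fa(x+ke_i))_{k\in\Z}$ is periodic in $k$ with period $s/\gcd(a_i,s)$; write $p = s/\gcd(a_i,s)$ and $q = a_i/\gcd(a_i,s)$, so that $\gcd(p,q)=1$ and one full period of the line realizes exactly the residues $\{\Fa(x) + jq\gcd(a_i,s) : 0\le j < p\} = \Fa(x) + \gcd(a_i,s)\Z/s\Z$. By Lemma~\ref{lem:edgesinterval}, the edge $(x+ke_i, x+(k+1)e_i)$ belongs to $\Ha$ if and only if $0\notin\,]\Fa(x+ke_i),\Fa(x+(k+1)e_i)]$, i.e. if and only if adding $a_i$ to the current residue (mod $s$) does not "wrap past" $0$. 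So the sequence of edges/nonedges along the line is exactly the sequence of letters obtained by starting at $\Fa(x)$, repeatedly adding $a_i$ modulo $s$, and recording a nonedge precisely at each wraparound. This is periodic with period $p$, and it contains exactly $q$ nonedges per period (since after $p$ steps the residue has increased by $p a_i = q s$, hence wrapped $q$ times).

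Next I would identify this length-$p$ binary sequence with a Christoffel word of slope $q/p$. The cleanest route is to recognize the "add $a_i$ mod $s$, mark the wraparounds" dynamics as the standard mechanical/rotation description of a Christoffel word: the cutting sequence of the line from $(0,0)$ to $(q,p)$ (or $(p,q)$, depending on the convention for edge vs. nonedge) is generated by exactly this kind of three-distance/rotation process. Concretely, label step $k$ by $b$ if it is a wraparound (nonedge) and by $a$ otherwise; then the number of $a$'s among the first $k$ steps is $\lfloor (\Fa(x) + k a_i)/s\rfloor$-controlled, and comparing this count to the line $y = (q/p)k$ shows the induced path stays within distance $<1$ of the segment and encloses no interior lattice point, which is the defining property of a Christoffel word recalled in Section~\ref{sec:christwords}. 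One must also handle the starting phase: an arbitrary $\Fa(x)$ gives a conjugate (cyclic shift) of the Christoffel word, but since the statement only claims the \emph{period} is a Christoffel word — and Christoffel words are taken up to the choice of starting vertex along the periodic line — this is fine; alternatively, choosing $k$ so that $\Fa(x+ke_i)$ is the smallest residue in its coset $\Fa(x)+\gcd(a_i,s)\Z/s\Z$ pins down the lower Christoffel word exactly.

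The main obstacle is purely bookkeeping: matching conventions so that "edge of $\Ha$" versus "nonedge" corresponds to the correct letter ($a$ versus $b$) in the lower (as opposed to upper) Christoffel word, and checking that the residues actually visited, namely the full coset $\Fa(x) + \gcd(a_i,s)\Z/s\Z$ rescaled, behave like $\{0, q, 2q, \ldots\} \bmod p$ so that the combinatorics is genuinely that of slope $q/p$ and not some degenerate variant. A small subtlety worth flagging is the case $a_i$ not coprime to $s$: then $\gcd(a_i,s)>1$, the period along the line is $p = s/\gcd(a_i,s) < s$, and one gets the Christoffel word of the reduced vector $(p-q,q)$ — still a bona fide Christoffel word, so the statement holds, but the reduction step should be mentioned explicitly. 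Once the correspondence is set up, the Christoffel-word property follows immediately from Lemma~\ref{lem:edgesinterval} together with the elementary description of Christoffel words as rotation/cutting sequences, with no further computation needed.
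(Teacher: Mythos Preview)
Your proposal is correct and follows essentially the same approach as the paper: both reduce the edge/nonedge pattern along $L_{x,i}$ to the orbit $\Fa(x)+k a_i\bmod s$ and then invoke a standard description of Christoffel words (the paper cites the Cayley-graph construction from \cite{BLRS08}, you use the equivalent rotation/mechanical-sequence description). Your write-up is in fact more careful than the paper's one-line proof, since you explicitly treat the reduction when $\gcd(a_i,s)>1$ and the starting-phase/conjugacy issue, both of which the paper leaves implicit in its citation.
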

\begin{proof}
Each subset $L_{x,i}\cap\Ha$ can be described by the subgroup of $\Z/s\Z$
generated by $\Fa(e_i)$, i.e.,
\[
(x+ke_i, x+(k+1)e_i)\in L_{x,i}\cap\Ha
\iff
0\leq \Fa(x+ke_i) < s-a_i
\]
This corresponds to the well-known construction of Christoffel words from
the labelling of Cayley graphs of $\Z/s\Z$ with the generator $a_i$ 
\cite[Section 1.2 Cayley graph definition]{BLRS08}.
\end{proof}

For example, in the discrete hyperplane graph $H_{(2,5)}$ shown in
Figure~\ref{fig:graph25}, coding an edge by the letter $a$ and a nonedge by
letter $b$, we get the periods $aaabaab$ and $abbabbb$ 
for the lines $L_{x,i}\cap\Ha$ for $i=1,2$ respectively.
Both are Christoffel words.

\begin{definition}[Image]
Let $f:\Z^d\to S$ be an homomorphism of $\Z$-module.
For some subset of edges $X\subseteq\bE_d$, we define
the image by $f$ of the edges $X$ by
\[
    f(X) = \{(f(u), f(v)) \mid (u,v)\in X\}.
\]
\end{definition}
This definition allows to define the graphs $\Ia$ and $\Ga$ as
projections of $\Ha$ in the sections below.


\subsection{The graph $\Ia$}\label{sec:Ia}

Let $\pi$ be the orthogonal projection from $\R^d$ onto the hyperplane $\cal
D$ of equation $\sum x_i=0$. Its restriction to the stepped surface $\S$ of
the discrete plane $\P$ of normal vector $\a\in\Z^d$ is a
bijection onto $\D$. It maps $\P$, the integral points in $\S$, onto a lattice $L$
\cite[section 2.2]{MR1906478} in $\D$ spanned by the vectors $h_i=\pi(e_i)$; they
satisfy $\sum_i h_i=0$. Note
that $\pi(\Z^d)$ is also equal to $L$, since each point in $\mathbb
Z^d$ is congruent to some point in $\P$ modulo the kernel of the
projection. We may identify the set $L$ and $\Z^d/(1,1,\cdots,1)\Z$,
since two integral points are projected by $\pi$ onto the same point if and
only if their difference is a multiple of the vector $(1,1,\ldots,1)$ and
since this multiple is necessarily an integral multiple.
Since $\Fa\UN=0$, the mapping $\Fa$ induces a mapping $\Fa':L\to\Z/s\Z$.
We have the following commuting diagram:
\begin{center}
\begin{tikzpicture}[auto]
\node (A) at (0,0) {$\Z^d$};
\node (B) at (6,0) {$\Z/s\Z$};
\node (C) at (3,-2) {$L=\Z^d/\UN\Z$};
\draw (A) edge[->] node {$\Fa$} (B);
\draw (C) edge[->] node {$\Fa'$} (B);
\draw (A) edge[->] node {$\pi$} (C);
\end{tikzpicture}
\end{center}

We consider the directed graph whose set of edges is
$\Ia=\pi(\Ha)$.
The graphs $\Ia$ 
for $\a=(a_1,a_2)=(2,5)$ and
$\a=(a_1,a_2,a_3)=(2,3,5)$
are shown in Figure~\ref{fig:I25_I235}. Note that the orientation of an edge is
redundant when $d=3$, since each edge is oriented as one of the vector $h_i$. 
\begin{figure}[h!]
\begin{center}
\includegraphics[width=0.44\linewidth]{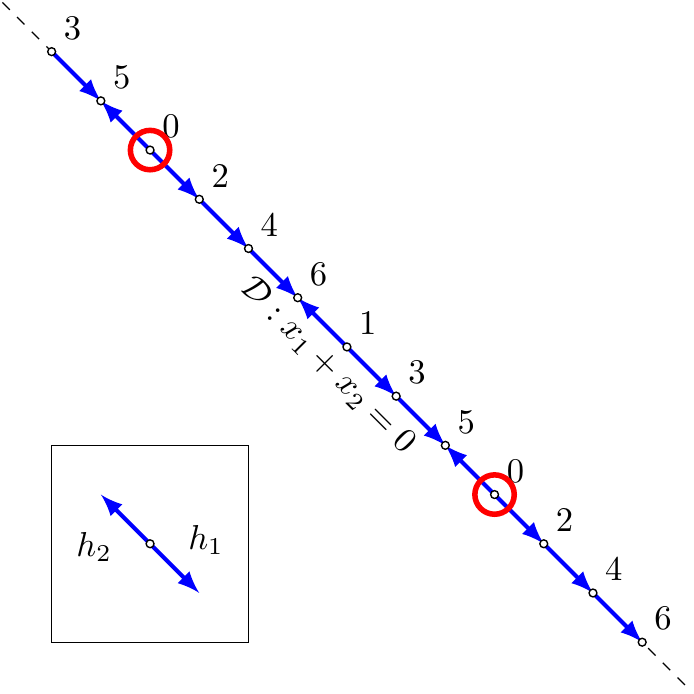}
\includegraphics[width=0.52\linewidth]{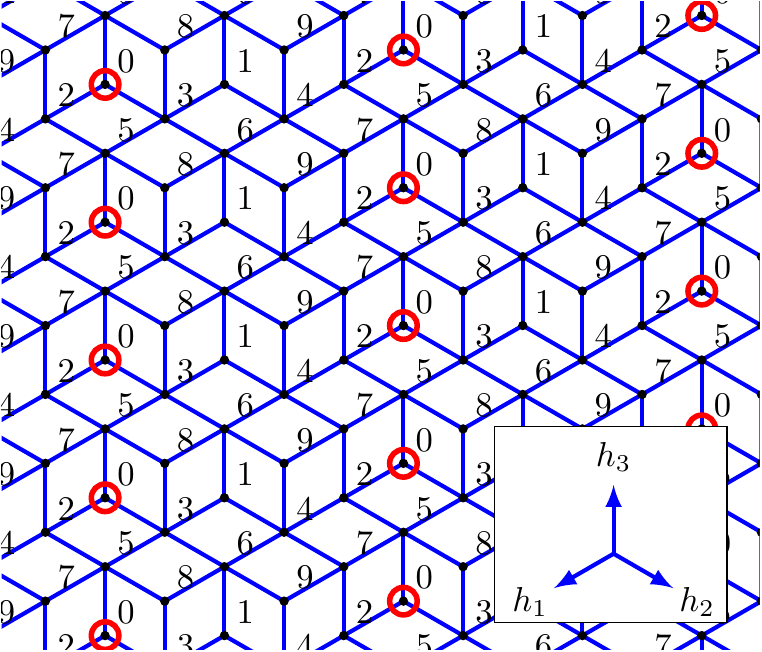}
\end{center}
\caption{Left: the graph $\Ia$ when $\a=(2,5)$. Right: the graph $\Ia$ when $\a=(2,3,5)$. 
The label at each vertex is its image under $\Fa'$.}
\label{fig:I25_I235}
\end{figure}

Lemma~\ref{lem:hypercube} can be seen on $\Ia$ when $d=3$ by the fact that
each nonedge is the short diagonal of a rhombus.
For example, if $\a=(2,3,5)$ then $(-h_1,0)\notin\Ia$. From the lemma, the
paths $(0,h_2), (h_2,h_2+h_3)$ and $(0,h_3),(h_3,h_2+h_3)$ are in $\Ia$.

The next lemma is a generalization of the fact that $\Ia$ is a tiling of
rhombus when $d=3$ proved in \cite{MR1382845} and \cite{MR1906478}.
Indeed, each rhombus is the projection under $\pi$ of one of three types of
square in $\R^3$.
Below,
the projection under $\pi$ of the convex hull of the
$2^k$ vertices of a $k$-dimensional hypercube graph in $\bE_d$
is called a \emph{$k$-dimensional parallelotope}. 
The edges of such a parallelotope have equal length.
When $d=3$, a
$(d-1)$-dimensional parallelotope is a rhombus.

\begin{proposition}
    The graph 
    $\Ia$ produces a tiling of $\D$ by 
    $d$ types of
    $(d-1)$-dimensional parallelotopes.
\end{proposition}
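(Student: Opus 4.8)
The plan is to show that the projection $\pi$ restricted to the stepped surface $\S$ of $\P$ is already known to be a bijection onto $\D$ that maps $\P$ onto the lattice $L$ (this is cited from \cite{MR1906478}), and that $\S$ itself is a union of square faces of the hypercubic lattice, namely one square face spanning the directions $e_i, e_j$ for every "local" choice of two missing directions. So the heart of the matter is to describe, for each vertex $u$ of $\Ha$, exactly which $(d-1)$-dimensional hypercube graphs incident to $u$ lie in $\Ha$, and to argue that the projections of their convex hulls tile $\D$ without gaps or overlaps.

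First I would set up the combinatorial bookkeeping. Fix $u\in\Z^d$ and a $\sigma$-path through $u$; by Lemma~\ref{lem:allbutone} exactly one edge of this path, say in direction $e_j$, fails to be in $\Ha$, and by Lemma~\ref{lem:hypercube} the $(d-1)$-dimensional hypercube graph "on the far side" of that nonedge lies entirely in $\Ha$. The projection under $\pi$ of the convex hull of that hypercube is a $(d-1)$-dimensional parallelotope spanned by the $h_i$ with $i\neq j$; there are exactly $d$ such types, one for each omitted index $j$. So the candidate tiles are these $d$ parallelotopes, translated around by elements of $L$.

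Next I would prove that these tiles cover $\D$. Since $\pi(\Z^d)=L$ and every class modulo $\UN\Z$ has a unique representative in the standard domain \eqref{eq:standarddiscreteplane}, it suffices to see that every point of $\D$ lies in some tile; but any point of $\D$ lies in some translate of the fundamental simplex $\pi(\{\sum t_i e_i : 0\le t_i\le 1\})$, whose vertices are the $2^d$ images $\pi(\sum_{i\in P}e_i)$, and Lemma~\ref{lem:allbutone} says this cube contributes a $\sigma$-path all of whose edges but one are in $\Ha$ — which is precisely a $(d-1)$-dimensional hypercube in $\Ha$ whose projected hull is one of our parallelotopes and contains that point. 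For disjointness of interiors, I would use the bijectivity of $\pi|_\S$: two tiles with overlapping interiors would lift to two distinct square faces of the stepped surface $\S$ whose relative interiors intersect after projection, contradicting that $\pi|_\S$ is injective. Equivalently, one checks directly that a point in the open standard domain determines the missing direction $j$ uniquely (it is the index for which $0\in\,]\Fa(\cdot),\Fa(\cdot)+a_j[$ along the relevant $\sigma$-path), hence lies in the interior of exactly one tile.

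The main obstacle I expect is the disjointness/overlap argument: covering is essentially a counting/partition statement riding on Lemma~\ref{lem:allbutone}, but ruling out overlaps requires genuinely relating the combinatorics of $\Ha$ to the geometry of the stepped surface $\S$. The cleanest route is to invoke that $\S$ is a combinatorial manifold without boundary whose faces are the relevant squares (the structure recalled in Section~\ref{sec:christwords} and the appendix), so that $\pi|_\S$ being a homeomorphism onto $\D$ transfers the tiling property of $\S$ in $\Z^d$ to a tiling of $\D$; then one only has to match up the square faces of $\S$ with the $(d-1)$-dimensional hypercube graphs of $\Ha$ produced by Lemma~\ref{lem:hypercube}, which is the content of the identification $\Ia=\pi(\Ha)$ together with the fact that the vertices satisfying \eqref{eq:standarddiscreteplane} represent the vertices of $\Ha$.
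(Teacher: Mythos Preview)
Your plan uses the same two lemmas as the paper (Lemma~\ref{lem:allbutone} and Lemma~\ref{lem:hypercube}) and the overall architecture is right, but the covering step has a real gap. You write that a point of $\D$ lies in a translate of the projected unit cube (correct, though you mislabel it a ``fundamental simplex''), and that Lemma~\ref{lem:allbutone} then produces a $\sigma$-path with one nonedge, whose associated parallelotope ``contains that point''. But there are $d!$ such $\sigma$-paths through that cube, they generally yield \emph{different} parallelotopes (the parallelotope is the projected hull of the hypercube of Lemma~\ref{lem:hypercube}, which depends on where along the path the nonedge sits), and an arbitrary one of them need not contain your point. The paper closes this gap by choosing $\sigma$ \emph{from} the point: for $x\in\D\subset\R^d$ set $u=(\lfloor x_1\rfloor,\ldots,\lfloor x_d\rfloor)$ and pick $\sigma$ so that $\{x_{\sigma(1)}\}\ge\cdots\ge\{x_{\sigma(d)}\}$. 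A short convex-combination computation then shows that $x$ lies in the $(d-1)$-simplex with vertices $\pi(u)+\sum_{i\le k}h_{\sigma(i)}$ for $0\le k\le d-1$, and that simplex is contained in the parallelotope attached to the unique nonedge of the $\sigma$-path starting at $u$. Without this selection of $\sigma$, your containment claim is unjustified.

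For disjointness the paper just observes that for almost every $x$ the fractional parts are distinct, so $(u,\sigma)$ and hence the parallelotope are uniquely determined. Your alternative route via the bijection $\pi|_\S:\S\to\D$ is reasonable, but it is not free: you must identify the $(d-1)$-hypercubes in $\Ha$ produced by Lemma~\ref{lem:hypercube} with the $(d-1)$-dimensional faces of the stepped surface $\S$, and that matching (plus the fact that $\pi|_\S$ is a homeomorphism) is exactly what the appendix develops. So you are trading a short local computation with fractional parts for a global topological statement established elsewhere in the paper.
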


\noindent
In the following proof the fractional part of a real number $x\in\R$ is
denoted by $\{x\}=x-\lfloor x\rfloor$.


\begin{proof}
Each real point $x=(x_1,\cdots,x_d)\in\R^d$ of the hyperplane $\D$ is contained in
a $(d-1)$-simplex with vertices $\{\pi(u)+\sum_{i=1}^k h_{\sigma(i)} : 0\leq k\leq
d-1\}$ for $u=(\lfloor x_1\rfloor,\cdots,\lfloor x_d\rfloor)\in\Z^d$ 
and permutation $\sigma$ of $\{1,2,\cdots,d\}$ such that
$\{x_{\sigma(1)}\}\geq\{x_{\sigma(2)}\}\geq\cdots\geq\{x_{\sigma(d)}\}$.
We illustrate this on an example.
Suppose $d=4$ and $x$ is such that $\sigma$ is the identity permutation on
$\{1,2,3,4\}$.
We have
\begin{align*}
x &= 
\left\{
\begin{array}{l}
u\\
+\{x_1\}e_1\\
+\{x_2\}e_2\\
+\{x_3\}e_3\\
+\{x_4\}e_4\\
\end{array}
\right.
  = 
\left\{
\begin{array}{l}
u\\
+\left(\{x_1\}-\{x_2\}\right) e_1\\
+\left(\{x_2\}-\{x_3\}\right) (e_1+e_2)\\
+\left(\{x_3\}-\{x_4\}\right) (e_1+e_2+e_3)\\
+\{x_4\}(e_1+e_2+e_3+e_4)
\end{array}
\right.
  = 
\left\{
\begin{array}{l}
\left(1-\{x_1\}\right)u\\
+\left(\{x_1\}-\{x_2\}\right) (u+e_1)\\
+\left(\{x_2\}-\{x_3\}\right) (u+e_1+e_2)\\
+\left(\{x_3\}-\{x_4\}\right) (u+e_1+e_2+e_3)\\
+\{x_4\}(u+e_1+e_2+e_3+e_4).
\end{array}
\right.
\end{align*}
Therefore
$x$ is in the convex hull of the $3$-simplex with vertices
$\{\pi(u), \pi(u)+h_1, \pi(u)+h_1+h_2, \pi(u)+h_1+h_2+h_3\}$ since
\[
x = \pi(x) =
\left\{
\begin{array}{l}
 (1+\{x_4\}-\{x_1\})\pi(u)\\
+(\{x_1\}-\{x_2\}) (\pi(u)+h_1)\\
+(\{x_2\}-\{x_3\}) (\pi(u)+h_1+h_2)\\
+(\{x_3\}-\{x_4\}) (\pi(u)+h_1+h_2+h_3).
\end{array}
\right.
\]
Consider the $\sigma$-path in $\bE_d$ starting at vertex $u$ and ending at
$u+\sum_{i=1}^d e_i$.
From Lemma~\ref{lem:allbutone}, there is an edge $(u',v')$ of the
$\sigma$-path that is not in $\Ha$.
From Lemma~\ref{lem:hypercube},
the hypercube graph from vertex $v'$ to vertex $u'+\sum_{i=1}^d e_i$ with
$2^{d-1}$ vertices is a subgraph of $\Ha$.
Therefore, the hypercube graph (projected in $\pi(\bE_d)$) going from vertex
$\pi(v')$ to vertex $\pi(u')$ with $2^{d-1}$ vertices is a subgraph of $\Ia$.
The convex hull of this graph contains the $(d-1)$-simplex with vertices
$\{\pi(u)+\sum_{i=1}^k h_{\sigma(i)} : 0\leq k\leq d-1\}$ which in turns
contains $x$.
Therefore the point $x$ of the hyperplane $\D$ is contained in the image under
$\pi$ of the convex hull of a $(d-1)$-dimensional hypercube graph in $\bE_d$.
For almost every $x$, the inequalities
$\{x_{\sigma(1)}\}>\{x_{\sigma(2)}\}>\cdots>\{x_{\sigma(d)}\}$ are strict and
the parallelotope is unique.
We conclude that $\D$ is tiled by $(d-1)$-dimensional parallelotopes.
\end{proof}


\subsection{Kernel of $\Fa$ and $\Fa'$}

\begin{lemma}\label{lem:invariantKerF}
The discrete hyperplane graph $\Ha$ is invariant under any translation $t\in\KerF$.
\end{lemma}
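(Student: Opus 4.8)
The plan is to show that $\Ha$ is invariant under $\KerF$ by reducing to a statement about the function $\Fa$ on the endpoints of edges. Recall that $(u,v)\in\Ha$ is defined purely in terms of the pair $(\Fa(u),\Fa(v))$: by definition $(u,u+e_i)\in\Ha$ iff $\Fa(u)<\Fa(u+e_i)$ in the total order on $\{0,1,\dots,s-1\}$. So it suffices to observe that translating by $t\in\KerF$ does not change these values: if $t\in\KerF$ then $\Fa(u+t)=\Fa(u)+\Fa(t)=\Fa(u)$ and likewise $\Fa(v+t)=\Fa(v)$, since $\Fa$ is a group homomorphism $\Z^d\to\Z/s\Z$ and $\Fa(t)=0$.

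First I would note that translation by $t$ is a bijection of $\bE_d$ onto itself sending the oriented edge $(u,u+e_i)$ to $(u+t,u+t+e_i)$, so it permutes the edge set and we only need to check it preserves membership in $\Ha$. Then, for an edge $(u,u+e_i)\in\bE_d$, I would write out
\[
(u+t,\,u+t+e_i)\in\Ha
\iff \Fa(u+t)<\Fa(u+t+e_i)
\iff \Fa(u)<\Fa(u+e_i)
\iff (u,u+e_i)\in\Ha,
\]
the middle equivalence being exactly $\Fa(u+t)=\Fa(u)$ and $\Fa(u+t+e_i)=\Fa(u+e_i)$, which hold because $\Fa$ is additive and kills $t$. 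This shows $t+\Ha=\Ha$, i.e. $\Ha$ is invariant under every $t\in\KerF$. Since $\UN\in\KerF$ (because $\Fa\UN=\sum a_i\equiv 0\bmod s$), this in particular recovers the earlier lemma on invariance under $\UN$.

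There is essentially no obstacle here: the only thing to be slightly careful about is that $\Fa$ lands in $\Z/s\Z$ while the comparison ``$<$'' is taken with respect to the fixed system of representatives $\{0,1,\dots,s-1\}$; but since $\Fa(u+t)$ and $\Fa(u)$ are equal as elements of $\Z/s\Z$, they have the same representative, so the order comparison is literally unchanged. One could also phrase the whole argument via the characterization in Lemma~\ref{lem:edgesinterval}, namely $(u,v)\in\Ha\iff 0\notin\,]\Fa(u),\Fa(v)]$, and again invariance follows because the pair $(\Fa(u),\Fa(v))$ is unchanged by the translation. Either way the proof is a one-line consequence of additivity of $\Fa$.
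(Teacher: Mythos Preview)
Your proof is correct and essentially identical to the paper's: both rest on the additivity of $\Fa$ to get $\Fa(u+t)=\Fa(u)$, and then conclude that membership in $\Ha$ is unchanged. The only cosmetic difference is that the paper phrases the last step via the interval criterion of Lemma~\ref{lem:edgesinterval} ($\Fa(u)\in[0,s-a_i-1]$), whereas you use the defining inequality $\Fa(u)<\Fa(u+e_i)$ directly (and even note the Lemma~\ref{lem:edgesinterval} alternative yourself).
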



\begin{proof}
Let $u\in\Z^d$ and $t\in\KerF$.
We have $\Fa(u+t)=\Fa(u)+\Fa(t)=\Fa(u)$.
From Lemma~\ref{lem:edgesinterval},
$(u,u+e_i)\in \Ha$ if and only if $\Fa(u)\in[0,s-a_i-1]$
if and only if
$\Fa(u+t)\in[0,s-a_i-1]$
if and only if
$(u+t,u+e_i+t)\in\Ha$.
\end{proof}

We can find generators of the kernel of $\Fa$ when $d=3$. 

\begin{proposition}
If $d=3$, the kernel of $\Fa$ is
\[
\KerF = \langle(a_3, 0, -a_1), (0,a_3,-a_2),(a_2, -a_1, 0), (1,1,1) \rangle.
\]
\end{proposition}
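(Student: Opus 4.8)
The plan is to verify the two inclusions $\langle(a_3,0,-a_1),(0,a_3,-a_2),(a_2,-a_1,0),(1,1,1)\rangle\subseteq\KerF$ and $\KerF\subseteq\langle(a_3,0,-a_1),(0,a_3,-a_2),(a_2,-a_1,0),(1,1,1)\rangle$. The first inclusion is immediate: apply $\Fa$ to each of the four generators. We get $\Fa(a_3,0,-a_1)=a_1a_3-a_3a_1=0$, $\Fa(0,a_3,-a_2)=a_2a_3-a_3a_2=0$, $\Fa(a_2,-a_1,0)=a_1a_2-a_2a_1=0$, and $\Fa(1,1,1)=a_1+a_2+a_3=s\equiv 0 \pmod s$, so all four lie in $\KerF$ and hence so does the subgroup they generate.

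For the reverse inclusion I would take an arbitrary $x=(x_1,x_2,x_3)\in\KerF$, so $a_1x_1+a_2x_2+a_3x_3\equiv 0\pmod s$, i.e. $a_1x_1+a_2x_2+a_3x_3 = ks$ for some $k\in\Z$. The idea is to subtract off multiples of the listed generators to reduce $x$ to the zero vector. First subtract $k\cdot(1,1,1)$ from $x$; the new vector $y=x-k(1,1,1)$ still lies in $\KerF$ but now satisfies the \emph{exact} equation $a_1y_1+a_2y_2+a_3y_3=0$ over $\Z$ (not merely mod $s$). So it suffices to show that the integer solutions of $a_1y_1+a_2y_2+a_3y_3=0$ are generated by the three vectors $(a_3,0,-a_1)$, $(0,a_3,-a_2)$, $(a_2,-a_1,0)$ (note $(a_2,-a_1,0)$ is actually redundant given the other two when $\gcd(a_1,a_2,a_3)=1$, but keeping it makes the argument symmetric). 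From $a_1y_1=-a_2y_2-a_3y_3$ and $\gcd(a_1,a_2,a_3)=1$: writing $g=\gcd(a_2,a_3)$ and $d_1=\gcd(a_1,g)$, one shows $d_1\mid y_1$ — in fact since $\gcd(a_1,a_2,a_3)=1$ one can argue directly that $a_1\mid a_2 y_2 + a_3 y_3$ forces, after a suitable correction, $a_1 \mid y_1'$ for a reduced vector. Concretely: using a Bézout relation $\alpha a_2+\beta a_3 = g$ with $g=\gcd(a_2,a_3)$, subtract an appropriate integer multiple of $(0,a_3,-a_2)$ (which changes $y_2,y_3$ but not the relation) to arrange $a_2y_2+a_3y_3$ to be a multiple of $a_1$; then $a_1\mid a_1 y_1$ trivially, and one reads off that the residual vector is a multiple of $(a_3,0,-a_1)$ plus a multiple of $(a_2,-a_1,0)$. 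The bookkeeping is an elementary but slightly fussy Smith-normal-form / Bézout computation.

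The main obstacle, and where care is needed, is the step extracting the integer generators of the rank-$2$ lattice $\{y\in\Z^3 : a_1y_1+a_2y_2+a_3y_3=0\}$: one must use $\gcd(a_1,a_2,a_3)=1$ at the right moment so that no "index" is lost, i.e. so that the three cross-product-type vectors actually generate the full solution lattice and not a finite-index sublattice. A clean way to package this is to observe that the solution lattice is the kernel of the surjection $\Z^3\to\Z$, $y\mapsto a_1y_1+a_2y_2+a_3y_3$ (surjective precisely because $\gcd(a_i)=1$), hence a direct summand of $\Z^3$ of rank $2$; then check that the three listed vectors span a rank-$2$ sublattice and that the quotient is torsion-free (equivalently, that their $2\times 2$ minors have gcd equal to the gcd of the $2\times 2$ minors of any basis), which again reduces to $\gcd(a_1,a_2,a_3)=1$. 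Alternatively, and perhaps most transparently for a write-up, one simply performs the explicit reduction: given $y$ with $a_1y_1+a_2y_2+a_3y_3=0$, subtract $y_1'\cdot$(one generator) and $y_2'\cdot$(another) chosen to kill two coordinates, and verify the third coordinate is then forced to $0$ by the defining equation. I would present the explicit-reduction version, since it keeps everything self-contained and avoids invoking Smith normal form.
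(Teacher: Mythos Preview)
Your first inclusion and the reduction to the exact equation $a_1y_1+a_2y_2+a_3y_3=0$ by subtracting $k(1,1,1)$ are fine, and your option~(b) (comparing gcd's of $2\times2$ minors) can be made into a correct proof. But the route you say you would actually write up, option~(c), has a gap. Killing coordinates one at a time by subtracting a multiple of one generator, then a multiple of another, does not work in general: after you zero out $y_1$ using $(a_3,0,-a_1)$ and $(a_2,-a_1,0)$, the only generator with first coordinate~$0$ left is $(0,a_3,-a_2)$, and you would need $a_3\mid y_2'$ to kill the second coordinate---which fails e.g.\ for $\a=(1,2,2)$ and $y=(0,1,-1)$. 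The correct reduction requires using all three generators simultaneously (combinations with nonzero first coordinates can cancel), so the bookkeeping is genuinely more delicate than ``kill two coordinates.''

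The paper takes a different and cleaner route that sidesteps all of this. Rather than reducing an arbitrary kernel element, it computes the index $[\Z^3:K]$ of the subgroup $K$ generated by the four listed vectors directly: by the standard lemma (Lemma~\ref{lem:index} in the paper) this index is the gcd of the $3\times3$ minors of the $4\times3$ matrix of generators, and a short determinant computation gives $\gcd\big(a_3s,\,-a_1s,\,-a_2s,\,0\big)=s$. Since $\Fa$ is surjective onto $\Z/s\Z$, also $[\Z^3:\KerF]=s$; together with $K\subseteq\KerF$ this forces $K=\KerF$. This avoids finding explicit coefficients altogether. Your option~(b) is in the same spirit---an index computation via minors---but carried out on the rank-$2$ solution lattice after peeling off $(1,1,1)$; the paper's version is a single step and is what I would recommend.
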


The result is based on the
following well-known lemma.

\begin{lemma}\label{lem:index}
Let $K$ be a subgroup of $\Z^n$ generated by the rows of a $s\times n$
matrix $M\in\Z^{s\times n}$ of rank $n$. The index $[\Z^n:K]$ is equal to the $\gcd$ of the $n$-minors of
the matrix $M$.
\end{lemma}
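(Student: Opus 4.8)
The plan is to reduce to the Smith normal form of $M$ and invoke the classical invariance of determinantal divisors. Since $M$ has rank $n$ it has at least $n$ rows, i.e. $s\geq n$, and there exist $U\in\mathrm{GL}_s(\Z)$ and $V\in\mathrm{GL}_n(\Z)$ with $UMV=D$, where $D$ is the $s\times n$ matrix whose only nonzero entries are the diagonal entries $D_{ii}=d_i$ for $1\leq i\leq n$, with $d_1\mid d_2\mid\cdots\mid d_n$ and all $d_i\neq0$ (the invariant factors of $M$).

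For the left-hand side I would argue as follows. Replacing $M$ by $UM$ does not change the subgroup $K\subseteq\Z^n$ generated by its rows, since the rows of $UM$ are $\Z$-linear combinations of those of $M$ and, $U$ being invertible over $\Z$, conversely; and since $V$ is an automorphism of $\Z^n$ (acting on row vectors) we have $[\Z^n:K]=[\Z^n:KV]$, where $KV$ is the subgroup generated by the rows of $UMV=D$, namely $d_1\Z\oplus\cdots\oplus d_n\Z$. Hence $[\Z^n:K]=|d_1d_2\cdots d_n|$.

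For the right-hand side I would use Cauchy–Binet: for any integer matrices $A$ (size $s\times s$) and $B$ (size $n\times n$), every $n\times n$ minor of $AMB$ is a $\Z$-linear combination of the $n\times n$ minors of $M$. Applying this to $(A,B)=(U,V)$ and to $(A,B)=(U^{-1},V^{-1})$ shows that $M$ and $D$ generate the same ideal of $\Z$ with their $n$-minors, hence have the same gcd of $n$-minors. But any choice of $n$ rows of $D$ that includes one of the zero rows $n+1,\dots,s$ spans a singular submatrix, so the unique nonzero $n$-minor of $D$ is $\det(D_{\{1,\dots,n\},\{1,\dots,n\}})=d_1\cdots d_n$, and its gcd of $n$-minors is $|d_1\cdots d_n|$. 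Combining the two computations, $[\Z^n:K]=|d_1\cdots d_n|$ equals the gcd of the $n$-minors of $M$.

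There is essentially no obstacle, since this is the standard characterization of the top determinantal divisor; the only points that need care are that $s\geq n$ (so the zero rows $n+1,\dots,s$ of $D$ account for the vanishing minors) and the $\mathrm{GL}$-invariance of the gcd of $n$-minors, which is exactly Cauchy–Binet. One can also avoid quoting Smith normal form altogether: induct on $s$, using the Euclidean algorithm on the first column to reduce it by integer row operations to a single nonzero entry, then peel off that row and column and recurse; this reconstructs the diagonal form $D$ by hand.
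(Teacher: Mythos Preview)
Your proof is correct and follows essentially the same line as the paper's: reduce $M$ to a diagonal form by invertible integer row and column operations (i.e., Smith normal form), verify the claim directly in the diagonal case, and observe that both the index $[\Z^n:K]$ and the $\gcd$ of the $n$-minors are invariant under these operations. You are a bit more explicit in naming Smith normal form and Cauchy--Binet for the invariance of the determinantal divisor, but the argument is the same.
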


\begin{proof}
By the rank condition, we have $s\geq n$. Suppose first that $M$ is in diagonal form; that is, the diagonal elements of $M$ are $d_1,\ldots,d_n$ and that the other elements are 0; by the rank condition, the $d_i$ are all nonzero. Then the subgroup is $K={d_1}\Z\times\cdots\times {d_n}\Z$, the quotient group is $\Z/{d_1}\Z\times\cdots\times\Z/d_n\Z$, and therefore the index is $d_1\cdots d_n$. Moreover the only nonzero $n$-minor is $d_1\cdots d_n$.

In the general case, it is well-known that the matrix $M$ may be brought into diagonal form by row and column operations within $\Z^{s\times n}$; moreover, these operations do not change the subgroup, up to change of basis in $\Z^{ n}$; and finally, the $\gcd$ of the $n$-minors is invariant under these operations. Thus the general case follows from the diagonal case.
\end{proof}

\begin{proof} (of the proposition)
The $\supseteq$ part.
The kernel of $F=\Fa$ contains the four vectors, because
\[
\begin{array}{l}
F(a_3, 0, -a_1) = ac + b0 + a_3(-a_1) = ac-ca = 0,\\
F(0, a_3, -a_2) = a0 + bc + a_3(-a_2) = bc-cb = 0,\\
F(a_2, -a_1, 0) = ab + a_2(-a_1) + c0 = ab-ba = 0.\\
\end{array}
\]
and
\[
F(1,1,1) = m(a_1 + a_2 + a_3) = 0.
\]

The $\subseteq$ part.
Let $K = \langle(a_3, 0, -a_1), (0,a_3,-a_2),(a_2, -a_1, 0), (1,1,1) \rangle$. $K$ is a
subgroup of $\Z^3$. By showing that the index $[\Z^3:K]$ is exactly the size $a_1+a_2+a_3$ of the image of $F$, we conclude that $K = \KerF$.
The subgroup $K$ is generated by the lines of the matrix
\[
M =
\left(\begin{array}{rrr}
1 & 1 & 1 \\
a_3 & 0 & -a_1 \\
0 & a_3 & -a_2 \\
a_2 & -a_1 & 0
\end{array}\right).
\]
From Lemma~\ref{lem:index}, the index is equal to the $\gcd$ of
the four $3$-minors of the matrix $M$:
\[
\begin{array}{ll}
\det
\left(\begin{array}{rrr}
1 & 1 & 1 \\
a_3 & 0 & -a_1 \\
0 & a_3 & -a_2 \\
\end{array}\right) = a_3(a_1+a_2+a_3), &
\det
\left(\begin{array}{rrr}
1 & 1 & 1 \\
a_3 & 0 & -a_1 \\
a_2 & -a_1 & 0
\end{array}\right) = -a_1(a_1+a_2+a_3)\\
\det
\left(\begin{array}{rrr}
1 & 1 & 1 \\
0 & a_3 & -a_2 \\
a_2 & -a_1 & 0
\end{array}\right) = -a_2(a_1+a_2+a_3), &
\det
\left(\begin{array}{rrr}
a_3 & 0 & -a_1 \\
0 & a_3 & -a_2 \\
a_2 & -a_1 & 0
\end{array}\right) = 0.
\end{array}
\]
That is the index is
\[
[\Z^3 : K] = \gcd(a_3(a_1+a_2+a_3),-a_1(a_1+a_2+a_3),-a_2(a_1+a_2+a_3),0) = a_1+a_2+a_3.\qedhere
\]
\end{proof}

\begin{corollary}
The kernel of $\Fa'$ is spanned by the vectors $a_3h_1-a_1h_3, a_3h_2-a_2h_3,a_2h_1-a_1h_2$.
\end{corollary}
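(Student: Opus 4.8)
The plan is to deduce this corollary directly from the proposition that precedes it, using the commuting diagram relating $\Fa$, $\Fa'$ and $\pi$. Since $\Fa = \Fa' \circ \pi$ and $\pi$ is surjective with kernel $\UN\Z$, the kernel of $\Fa'$ is exactly the image $\pi(\KerF)$. So the first step is to apply $\pi$ to the four generators of $\KerF$ found in the proposition.

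Concretely, I would compute $\pi(a_3,0,-a_1) = a_3 h_1 - a_1 h_3$, $\pi(0,a_3,-a_2) = a_3 h_2 - a_2 h_3$, $\pi(a_2,-a_1,0) = a_2 h_1 - a_1 h_2$, and $\pi(1,1,1) = h_1+h_2+h_3 = 0$, using that $\pi$ is a $\Z$-module homomorphism sending $e_i$ to $h_i$ and that $\sum_i h_i = 0$. The fourth generator maps to zero, so the image is spanned by the first three vectors, which is precisely the claimed generating set. This gives $\Ker \Fa' = \pi(\KerF) = \langle a_3 h_1 - a_1 h_3,\ a_3 h_2 - a_2 h_3,\ a_2 h_1 - a_1 h_2 \rangle$.

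The only point requiring a word of justification is the identity $\Ker \Fa' = \pi(\Ker \Fa)$. The inclusion $\pi(\Ker\Fa) \subseteq \Ker\Fa'$ is immediate from $\Fa'(\pi(x)) = \Fa(x)$. For the reverse inclusion, if $\Fa'(y) = 0$, pick any $x \in \Z^d$ with $\pi(x) = y$ (possible since $\pi$ restricted to $\Z^d$ surjects onto $L$, as noted in the text); then $\Fa(x) = \Fa'(\pi(x)) = \Fa'(y) = 0$, so $x \in \Ker\Fa$ and $y = \pi(x) \in \pi(\Ker\Fa)$. There is no real obstacle here; the corollary is essentially a transport of the proposition through the projection, and the main thing to be careful about is simply recording that $\pi$ commutes with the homomorphisms and that $\pi(\UN) = 0$ so that the redundant fourth generator drops out.
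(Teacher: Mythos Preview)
Your proof is correct and follows exactly the same approach as the paper: the paper's one-line argument is ``$\pi(\KerF)=\Ker\Fa'$, since $\Fa=\Fa'\circ\pi$ and $\pi(1,1,1)=0$,'' and you have simply unpacked this, computing the images of the four generators under $\pi$ and justifying the equality $\Ker\Fa'=\pi(\Ker\Fa)$ via the two inclusions.
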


\begin{proof}
This is because $\pi(\KerF)=\KerF'$. Indeed, $\Fa=\Fa'\circ\pi$ and $\pi(1,1,1)=0$.
\end{proof}

It is likely that the proposition and its corollary have an evident extension to any dimension. The proof requires a higher minor calculation. We leave this to the interested reader.

\subsection{The graph $\Ga$}


Let $d\geq2$ be an integer and $\a$ as before.
The graph $\Ga$ of normal vector $\a\in\Z^d$ is the directed graph 
$\Ga=\Fa(\Ha)$. It is also equal to
\[
    \Ga=\{(k,k+a_i) \mid k\in \Z/s\Z, 1\leq i\leq d\text{ and }  k < k+a_i\}.
\]
Two examples are shown at Figure~\ref{fig:christoffelgraphflat}.
\begin{figure}[h!]
\begin{center}
    \includegraphics{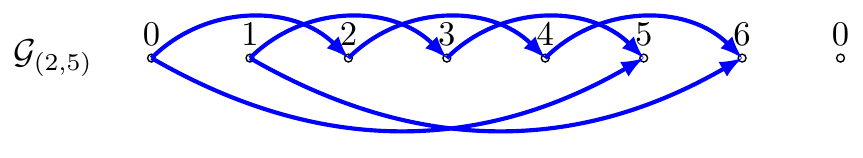}
    \includegraphics{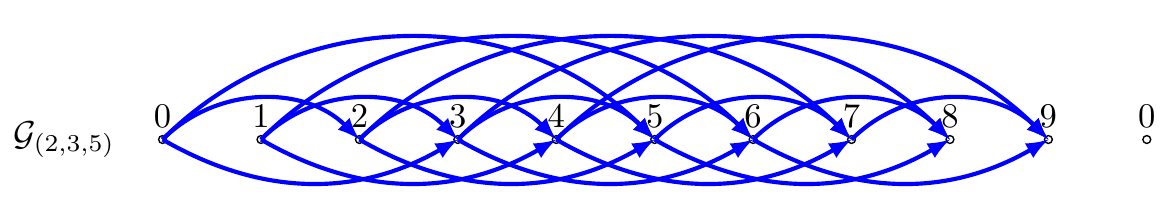}
\end{center}
\caption{The Christoffel graphs $\Ga$ for $\a=(2,5)$ and $\a=(2,3,5)$.}
\label{fig:christoffelgraphflat}
\end{figure}
Since the graph $\Ga$ is isomorphic to the quotients $\Ha / \KerF$ (and also
$\Ia/\KerF'$), we call it \emph{Christoffel graph} as well, because $\Ga$ is
the part of $\Ha$ in its fundamental domain.
The graph $\Ga$ can be embedded in a torus. Indeed, the graph $\Ia$ lives in
the diagonal plane $\D\simeq\R^{d-1}$ and is invariant under the group
$\KerF'$. The quotient $\D/\KerF'$ is a torus and contains the graph
$\Ia/\KerF'\simeq\Ga$ (see Figure~\ref{fig:christoffelgraphtorus}).

\begin{figure}[h!]
\begin{center}
\begin{tabular}{cccc}
$\G_{(2,5)}$ &
    \includegraphics{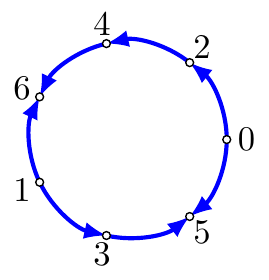}&
$\G_{(2,3,5)}$ &
    \includegraphics[width=0.21\linewidth]{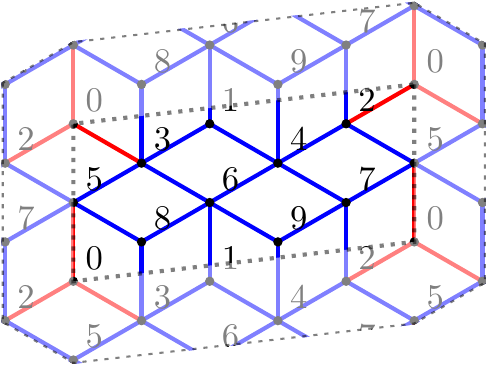}
\end{tabular}
\end{center}
\caption{The Christoffel graphs $\Ga$ for $\a=(2,5)$ and $\a=(2,3,5)$ can be
embedded in the torus $\D /\KerF'$.}
\label{fig:christoffelgraphtorus}
\end{figure}

\begin{figure}[h!]
\begin{center}
\begin{tabular}{ccc}
\includegraphics[width=0.21\linewidth]{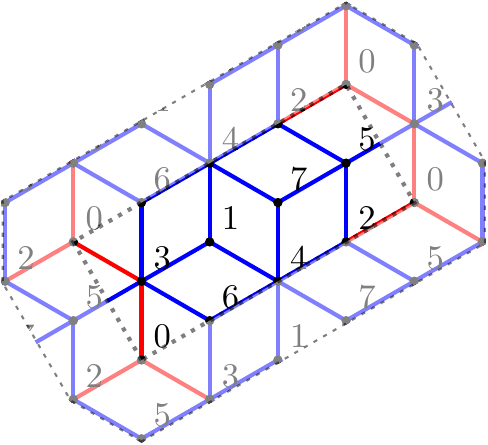} &
\includegraphics[width=0.21\linewidth]{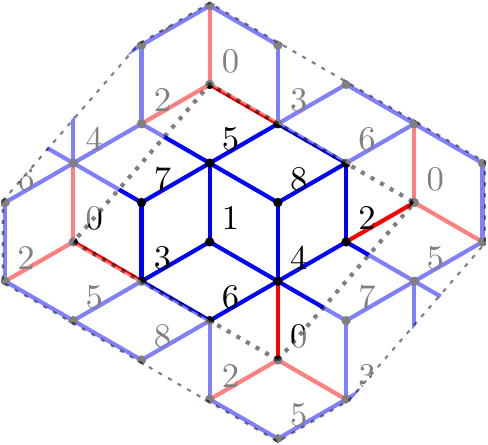} &
\includegraphics[width=0.21\linewidth]{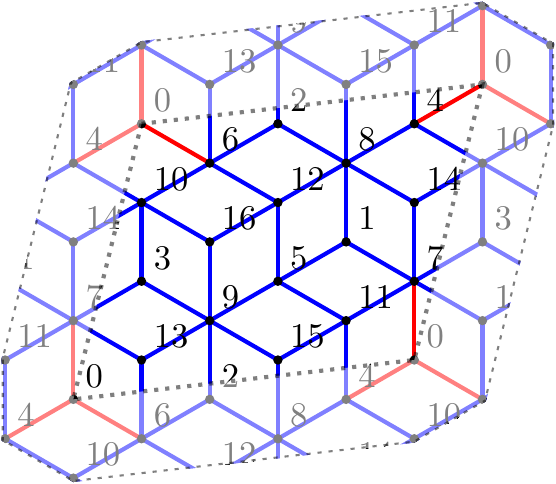}\\
$\G_{(2,3,3)}$ &
$\G_{(2,3,4)}$ &
$\G_{(4,6,7)}$
\end{tabular}
\begin{tabular}{ccc}
\includegraphics[width=0.30\linewidth]{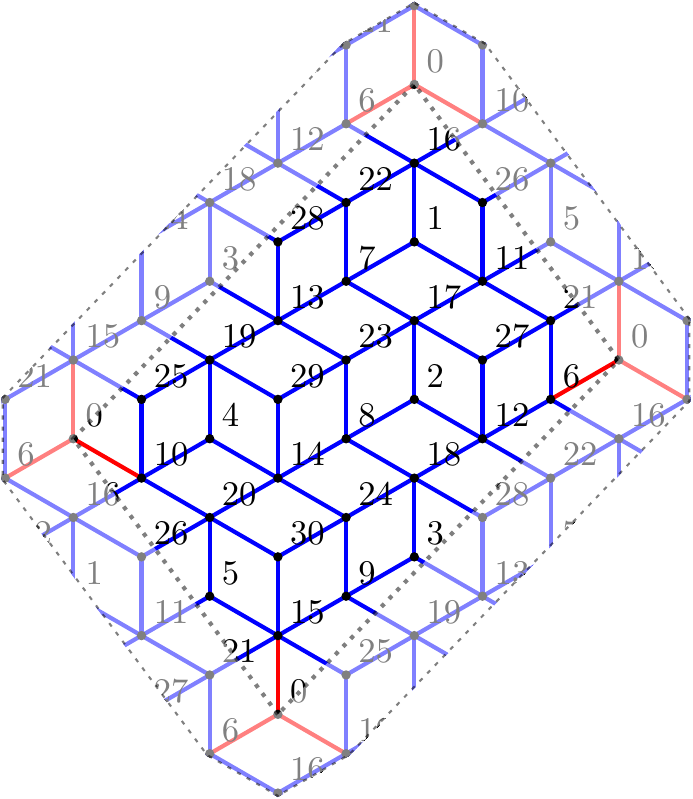} &
\includegraphics[width=0.36\linewidth]{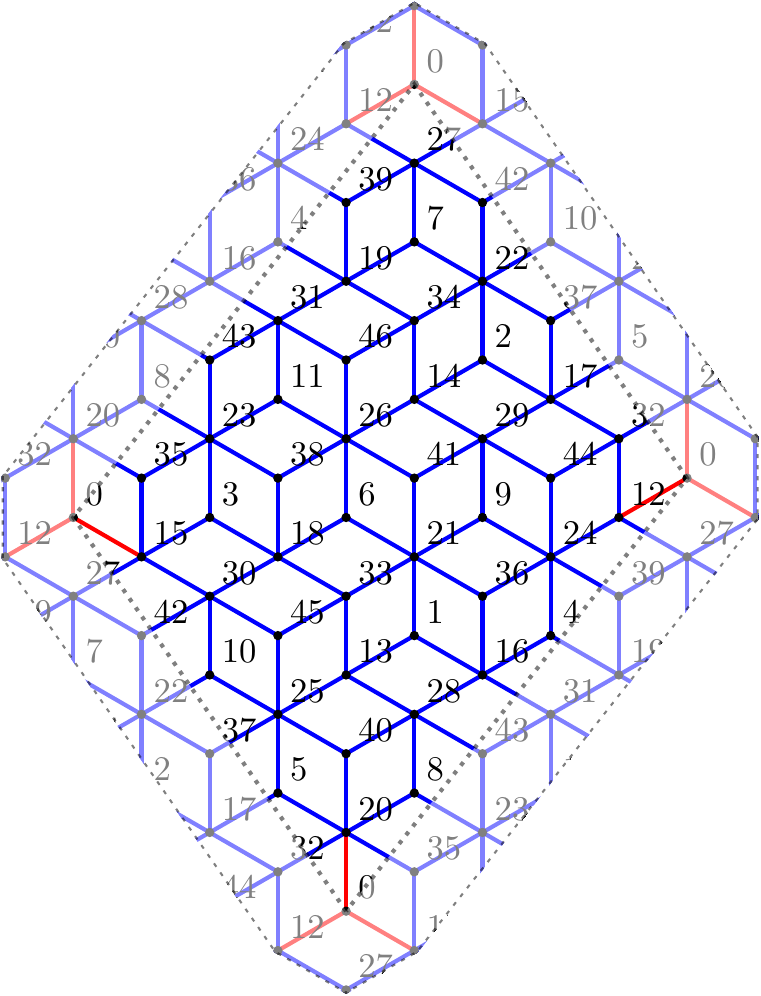} \\
$\G_{(6,10,15)}$ &
$\G_{(12,15,20)}$ \\
\end{tabular}
\end{center}
\caption{Some Christoffel graphs in dimension $d=3$. The body is blue, legs are in
red.}
\label{fig:cayleygraphs}
\end{figure}

The vertices of the Christoffel graph $\Ga$ and their image under the function
$\Fa$ corresponds to what is called \emph{roundwalk} in \cite{MR2074953}.
Their contribution allows to construct larger and larger domain of roundwalks
by iteration of extension rules.

The Christoffel graph has a natural representation inside $\Ia$. We define
this for $d=3$, leaving the generalizations for elsewhere. Recall that the
lattice $L$, defined in Section \ref{sec:IH}, is a free abelian group of rank 2,
spanned by the 3 vectors $h_1,h_2,h_3$ with $h_1+h_2+h_3=0$. Moreover, the
homomorphism $\Fa':L\to \Z/s\Z$ maps $h_i$ onto $a_i$, with $s=a_1+a_2+a_3$, and the $a_i$ are relatively prime; therefore, the mapping is surjective.

Choose some parallelogram in the plane $\D$ which is a fundamental domain
for its discrete subgroup $\KerF'$. We may assume that $O$ is a vertex of this parallelogram. Then $\Fa'$ induces a bijection 
between $\Z/s\Z$ and the integral points inside the parallelogram, excluding those lying on the two edges not containing $O$. It is such 
a parallelogram, with the part of the edges of $\Ia$ which lie inside him, that
we may call a {\em  Christoffel parallelogram}. This we may consider as the
generalization in dimension 3 of Christoffel words. Such a parallelogram tiles 
the plane $\D$ and completely codes the graph $\Ia$. Furthermore it is in
bijection with the Christoffel graph, as is easily 
verified. Examples are seen in Figure \ref{fig:cayleygraphs}.

\begin{remark}\label{rem:compatible}
The graphs $\Ha,\Ia,\Ga$ are compatible, in the sense that $\Ia$ is the image
under $\pi$ of $\Ha$, $\Ga$ is the image under $F_\a$ of $\Ha$ and also the
image of $\Ia$ under $F'_\a$.
\end{remark}
\begin{center}
\begin{tikzpicture}[auto]
\node (A) at (0,0) {$\Ha$};
\node (B) at (6,0) {$\Ga$};
\node (C) at (3,-2) {$\Ia$};
\draw (A) edge[->] node {$\Fa$} (B);
\draw (C) edge[->] node {$\Fa'$} (B);
\draw (A) edge[->] node {$\pi$} (C);
\end{tikzpicture}
\end{center}


\subsection{The graph $\Hao$}

In this section, we extend the definition of Christoffel graphs to 
discrete plane such that the width $\omega$ is smaller than
$s=\Vert\a\Vert_1=\sum a_i$
where $\a\in \N^d$ is a vector of relatively prime positive integers as
before.
We consider only width $\omega$ such that $s/\omega$ is a positive integer
strictly smaller than the dimension $d$: $0<s/\omega<d$.
We define the mapping $\Fao:\Z^d\to \Z/\omega\Z$ sending each integral vector
$(x_1,\ldots,x_d)$ onto $\sum_ia_ix_i \bmod \omega$. 
We identify $\Z/\omega\Z$ and $\{0,1,\cdots, \omega-1\}$. A total order on
$\Z/\omega\Z$
is defined correspondingly.
The \emph{Christoffel graph of normal vector $\a\in\N^d$ of width $\omega$} is the
subset of edges $\Hao\subseteq\bE_d$ defined by
\[
\Hao = \{ (u, v) \in \bE_d \mid \Fao(u) < \Fao(v) \}.
\]
This graph is related but does not correspond exactly to discrete plane of width
$\omega$. In fact, $\Hao$ can be obtained by the superposition of $s/\omega$
discrete plane of width $\omega$. The definition of $\Hao$ is motivated by
Pirillo's theorem, because this is what allows to generalize Pirillo's theorem
in arbitrary dimension (see Theorem~\ref{thm:pirilloforalld}).
Of course if $\omega=s$, then $\Hao=\Ha$ is the Christoffel graph of normal
vector $\a$. Also, if $d=2$ then $s=\omega$. 
If $d=3$, then either $\omega=s$ or $\omega=s/2$.
If $d=4$, then either $\omega=s$, $\omega=s/2$ or $\omega=s/3$ and so on for
$d\geq5$.
If $s$ is a prime number, then $\omega=s$.

As earlier, we define the projected graphs $\Gao:=\Fao(\Hao)$ and $\Iao:=\pi(\Hao)$.
The Christoffel graph $\Gao$ for the vector $\a=(15,11,10)$ of width $\omega=s=36$ is
shown at Figure~\ref{fig:H15_11_10_mod36and18} (left).
The Christoffel graph $\Gao$ for the vector $\a=(15,11,10)$ of width $\omega=18=s/2$ is
shown at Figure~\ref{fig:H15_11_10_mod36and18} (right) and a larger part is
shown at Figure~\ref{fig:H15_11_10_mod18}.
\begin{figure}[h!]
\begin{center}
\begin{tabular}{cc}
\includegraphics{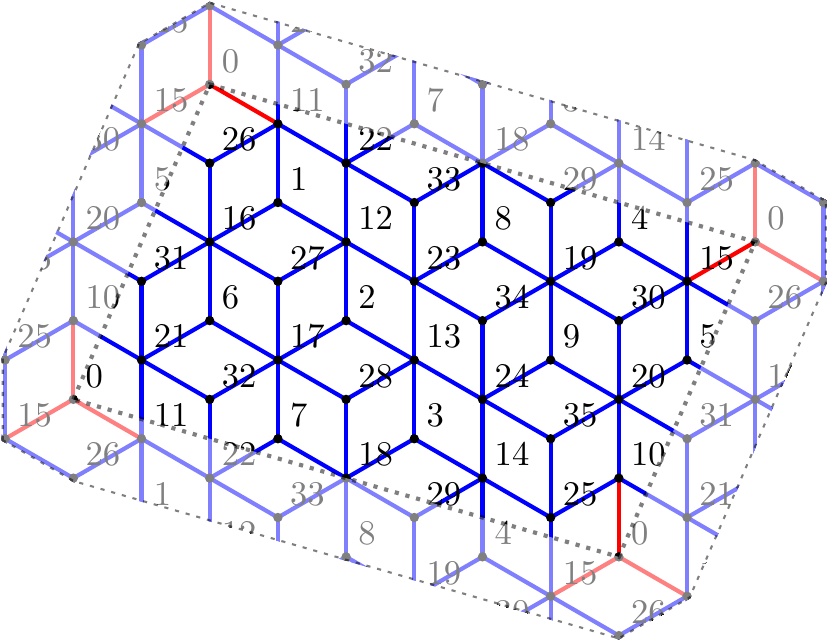} &
\includegraphics{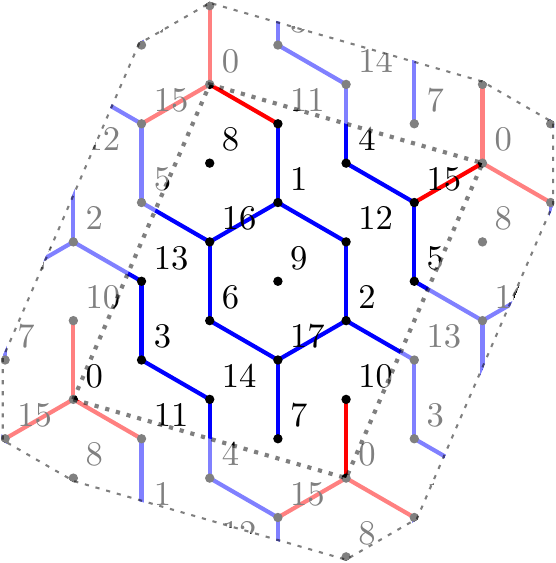}\\
$\G_{(15,11,10)} = \G_{(15,11,10),36}$ &
$\G_{(15,11,10),18}$
\end{tabular}
\end{center}
\caption{Left: the Christoffel graph $\Ga$ for the vector $\a=(15,11,10)$. 
Right: the Christoffel graph $\Gao$ of width $\omega=18$ for the vector
$\a=(15,11,10)$.}
\label{fig:H15_11_10_mod36and18}
\end{figure}

\begin{figure}[h!]
\begin{center}
\begin{tabular}{c}
\includegraphics{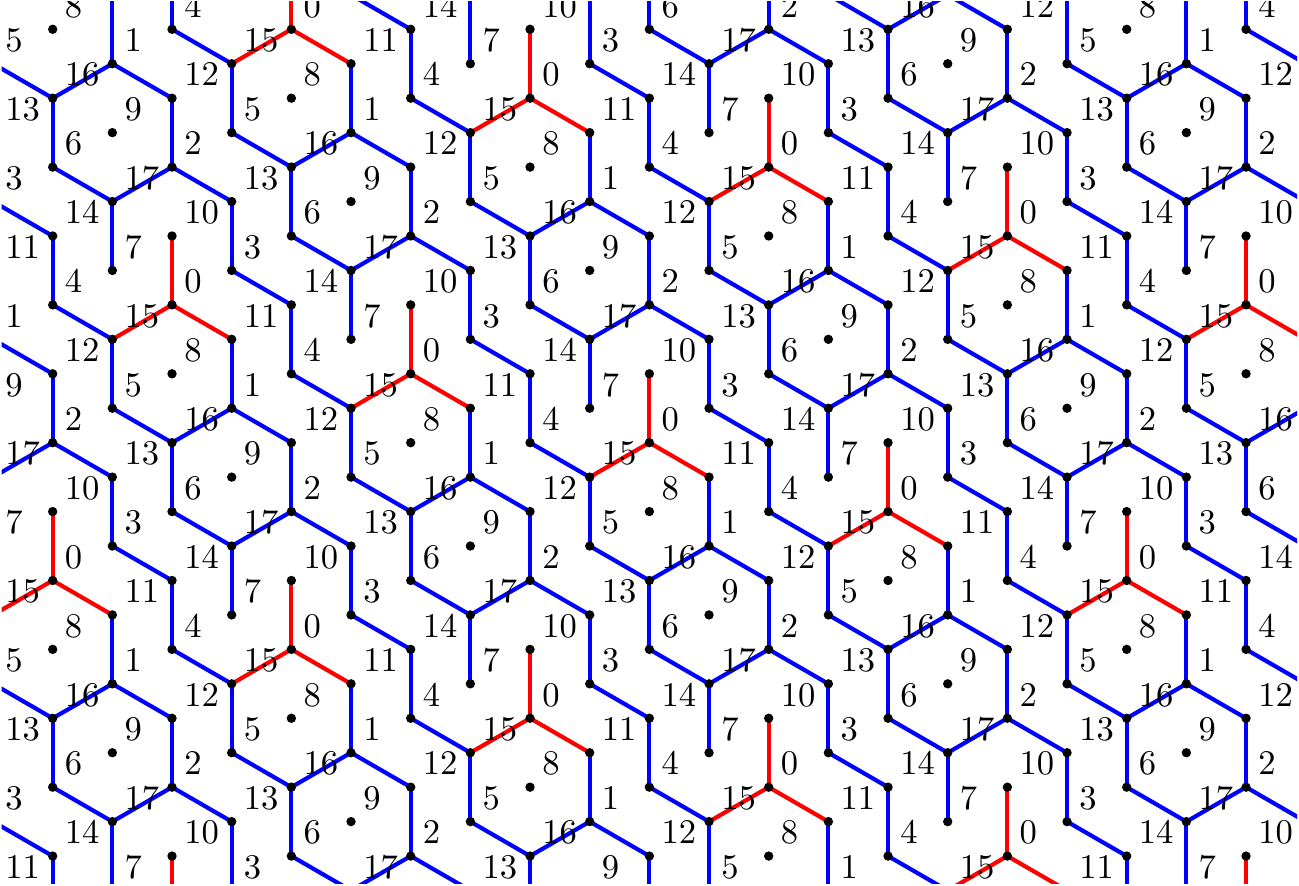}\\
$I_{(15,11,10),18}$
\end{tabular}
\end{center}
\caption{The Christoffel graph $\Iao$ of width $\omega=18$ for the vector
$\a=(15,11,10)$. It corresponds to the union of two
discrete planes of width $\omega$.}
\label{fig:H15_11_10_mod18}
\end{figure}


The next lemma gives an equivalent definition of the edges of the graph $\Hao$.

\begin{lemma}\label{lem:edgesintervalforHan}
Let $(u,v)\in\bE_d$ such that $v-u=e_i$ for some $1\leq i\leq d$. Then,
\begin{align}
(u,v)\in    \Hao \iff \Fao(u)\in[0,\omega-a_i-1] \iff \Fao(v)\in[a_i,\omega-1],\\
(u,v)\notin \Hao \iff \Fao(u)\in[\omega-a_i,\omega-1] \iff \Fao(v)\in[0,a_i-1].
\end{align}
\end{lemma}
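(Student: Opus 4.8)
The plan is to transcribe the proof of Lemma~\ref{lem:edgesinterval}, replacing the modulus $s$ by $\omega$; the only facts used are that $\Fao\colon\Z^d\to\Z/\omega\Z$ is a homomorphism of $\Z$-modules with $\Fao(e_i)=a_i$, and that $\Z/\omega\Z$ is identified with $\{0,1,\dots,\omega-1\}$ carrying its natural order (the convention fixed when $\Fao$ was defined).

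First I would note that $v=u+e_i$ forces
\[
\Fao(v)=\Fao(u)+\Fao(e_i)=\Fao(u)+a_i\qquad\text{in }\Z/\omega\Z .
\]
Writing $p$ for the integer in $\{0,\dots,\omega-1\}$ representing $\Fao(u)$, the representative of $\Fao(v)$ is obtained from $p+a_i$ by at most one reduction modulo $\omega$. This yields a clean dichotomy: either $p+a_i\le\omega-1$, or $p+a_i\ge\omega$.

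In the first case $\Fao(v)=p+a_i$ as integers, so $\Fao(u)=p<p+a_i=\Fao(v)$ because $a_i\ge1$; hence $(u,v)\in\Hao$. Moreover $p+a_i\le\omega-1$ is exactly $\Fao(u)=p\in[0,\omega-a_i-1]$, and simultaneously $\Fao(v)=p+a_i\in[a_i,\omega-1]$. In the second case $\Fao(v)=p+a_i-\omega<p=\Fao(u)$, so $(u,v)\notin\Hao$; here $p+a_i\ge\omega$ together with $p\le\omega-1$ is exactly $\Fao(u)=p\in[\omega-a_i,\omega-1]$, and then $\Fao(v)=p+a_i-\omega\in[0,a_i-1]$. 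Since the two cases are exhaustive and mutually exclusive, and since the stated intervals for $\Fao(u)$ (resp.\ for $\Fao(v)$) partition $[0,\omega-1]$, all six equivalences in the two displayed lines follow simultaneously.

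There is no real obstacle: the argument is elementary modular bookkeeping, essentially identical to the width-$s$ case already treated in Lemma~\ref{lem:edgesinterval}. The only point meriting care is keeping track of the identification of $\Z/\omega\Z$ with $\{0,\dots,\omega-1\}$ and its order, and the observation that a single generator $a_i$ produces at most one wraparound, which is precisely what makes the threshold $\omega-a_i-1$ appear.
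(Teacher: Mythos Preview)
Your proof is correct. The paper does not supply a proof of this lemma (nor of the analogous Lemma~\ref{lem:edgesinterval} for width $s$); both are stated without argument as immediate consequences of identifying $\Z/\omega\Z$ with $\{0,\dots,\omega-1\}$, so your elementary case split on whether $p+a_i$ exceeds $\omega-1$ is exactly the omitted verification.
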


\section{Flip, reversal and translation}\label{sec:fliprevtrans}

In this short section, we define the flip, reversal and translate of set of
edges. We define the operations for set of edges $X\subseteq\bE_d$ but they
extend naturally to set of edges of the form $\pi(X)$ and $\Fa(X)$ (see
Definition~\ref{def:commute} below). In order to define the flip operation, we
need to define the edges incident to zero.

\begin{definition}[edges of $\bE_d$ incident to zero]
Let $d\geq2$ be an integer and $\a\in\Z^d$ be a vector of relatively prime positive integers.
The set of \emph{edges of $\bE_d$ incident to zero} is
\[
\Qzero = \{(u,v)\in\bE_d : \Fa(u)=0 \text{ or } \Fa(v)=0 \}.
\]
\end{definition}


\begin{definition}[body, legs]
Let $X\subseteq\bE_d$.
The set $X\setminus \Qzero$ is the \emph{body} and the edges of
$X\cap\Qzero$ are the \emph{legs} of $X$.
\end{definition}

See Figure \ref{fig:cayleygraphs} where the legs of graphs $\Ga$ are
represented in red, and the body in black.
The $\flip$ is an operation which generalizes the function
$amb\mapsto bma$ defined for Christoffel words.
While we define the flip on graphs, it can also be seen as a flip in a rhombus
tiling when $d=3$ \cite{MR2856174,MR2440650,MR2330996}.


\begin{definition}[$\flip$]
For a subset of edges $X\subseteq\bE_d$, 
we define the \emph{$\flip$} operation which exchanges edges incident to zero:
\[
\flip: X \mapsto (X\setminus \Qzero) \cup (\Qzero\setminus X).
\]
\end{definition}
We see that $\flip(X)$ exchanges the legs of $X$ and keeps the body of $X$ invariant.
If $(u,v)\in\bE_d$,
then the reversal edge $(-v,-u)\in\bE_d$ is also an edge of the hypercubic
lattice and similarly for
the translated edge
$(u+t,v+t)\in\bE_d$
for all $t\in\Z^d$.
The reversal and translate operations extend on subsets 
of edges as follows:
\begin{definition}[Reversal, Translate]
Let $X\subseteq\bE_d$ be a subset of edges.
We define the \emph{reversal} $-X$ of $X$
and the \emph{translate} $X + t$,
for some $t\in\Z^d$,
of $X$ as
\[
-X = \{(-v,-u) \mid (u,v)\in X\}
\quad\quad
\text{and}
\quad\quad
X+t = \{(u+t,v+t) \mid (u,v)\in X\}.
\]
\end{definition}

\begin{definition}[$\flip$, Reversal, Translate]
\label{def:commute}
Let $X\subseteq\bE_d$.
The flip, reversal and translate of set of edges of the form $\pi(X)$ and
$\Fa(X)$ are defined naturally by commutativity:
\[
\begin{array}{lll}
\flip(\pi(X)):=\pi(\flip(X)), & -(\pi(X)):=\pi(-X), & \pi(X)+\pi(t):=\pi(X+t),\\
\flip(\Fa(X)):=\Fa(\flip(X)), & -(\Fa(X)):=\Fa(-X), & \Fa(X)+\Fa(t):=\Fa(X+t).
\end{array}
\]
\end{definition}

Therefore, statements proven for $\Ha$ using flip, reversal and translate
operations are also true for $\Ia$ and $\Ga$.  For example,
the goal of next section is to show that $\Ha+t=\flip(\Ha)$ for some
$t\in\Z^d$. If such an equation is satisfied for $\Ha$, it is clear from
Definition~\ref{def:commute} that $\Ia+\pi(t)=\flip(\Ia)$ and
$\Ga+\Fa(t)=\flip(\Ga)$.

\section{Flipping is translating}\label{sec:flipistranslating}

In this section, we show that the flip of the Christoffel graph $\Ha$ is a translate
of $\Ha$; this  is a generalization of one implication of
Theorem~\ref{thm:pirillo}. We also show that the body of $\Ha$ is
symmetric and as a consequence we obtain that a Christoffel graph is a translate of
its reversal.
The results stated in this section are stated and proved for $\Ha$ but they
are valid for $\Ia$ and $\Ga$ by Definition~\ref{def:commute}.

The following lemma describes the legs of $\Ha$.
\begin{lemma}[Legs of $\Ha$]\label{lem:legs}
An edge $(u,v)$ is a leg of $\Ha$ if and only if $\Fa(u)=0$.
\end{lemma}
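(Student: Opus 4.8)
The statement to prove is: an edge $(u,v) \in \bE_d$ is a leg of $\Ha$ if and only if $\Fa(u) = 0$. Recall that $(u,v)$ is a leg of $\Ha$ precisely when $(u,v) \in \Ha \cap \Qzero$, i.e.\ $(u,v)$ is an edge of $\Ha$ and additionally $\Fa(u) = 0$ or $\Fa(v) = 0$. So the plan is to show that for an edge of $\Ha$, the condition ``$\Fa(u) = 0$ or $\Fa(v) = 0$'' is in fact equivalent to ``$\Fa(u) = 0$'' alone; this amounts to ruling out the case $\Fa(v) = 0$ with $\Fa(u) \ne 0$ for edges of $\Ha$, and conversely checking that $\Fa(u) = 0$ already forces $(u,v) \in \Ha$.

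First I would handle the forward direction. Suppose $(u,v)$ is a leg of $\Ha$, so $(u,v) \in \Ha$ and $\Fa(u) = 0$ or $\Fa(v) = 0$. Write $v = u + e_i$. By Lemma~\ref{lem:edgesinterval}, $(u,v) \in \Ha$ is equivalent to $\Fa(v) \in [a_i, s-1]$; since $a_i \ge 1$ (the $a_i$ are positive integers), this forces $\Fa(v) \ne 0$. Hence the disjunction collapses to $\Fa(u) = 0$, which is exactly what we want. For the converse, suppose $\Fa(u) = 0$. Then $\Fa(u) = 0 \in [0, s - a_i - 1]$ (again using $a_i \le s - 1$, valid since $d \ge 2$ forces $s > a_i$), so by Lemma~\ref{lem:edgesinterval} we get $(u,v) \in \Ha$; and since $\Fa(u) = 0$, the edge is in $\Qzero$ as well. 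Therefore $(u,v)$ is a leg of $\Ha$.

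This proof is essentially immediate once Lemma~\ref{lem:edgesinterval} is in hand; the only genuine content is the observation that an edge of $\Ha$ is ``increasing'' for $\Fa$ in the cyclic order, so its target value $\Fa(v)$ lies in $[a_i, s-1]$ and in particular cannot be $0$ — equivalently, $0 \notin\, ]\Fa(u), \Fa(v)]$, which is the last characterization in Lemma~\ref{lem:edgesinterval}. I do not anticipate a real obstacle here; the one point requiring a word of care is that the positivity of the coordinates $a_i$ and the inequality $1 \le a_i \le s-1$ (which uses $d \ge 2$) are what make the two endpoint conditions genuinely inequivalent, and this should be stated explicitly rather than left implicit.
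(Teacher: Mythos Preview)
Your proof is correct and follows essentially the same approach as the paper: both arguments boil down to showing that an edge $(u,v)\in\Ha$ cannot have $\Fa(v)=0$, while $\Fa(u)=0$ already forces $(u,v)\in\Ha$. The only cosmetic difference is that the paper argues directly from the defining inequality $\Fa(u)<\Fa(v)$ (noting $\Fa(u)<0$ is impossible and $0<a_i$), whereas you route the same observations through Lemma~\ref{lem:edgesinterval}; your version is in fact slightly more explicit about the needed inequality $1\le a_i\le s-1$.
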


\begin{proof}
We have $(-e_i,0)\notin\Ha$ because
there is no $u\in \Z^d$ such that $\Fa(u) < \Fa(0)= 0$.
Moreover $(0,e_i)\in E$ for each $i$, $1\leq i\leq d$, because
$\Fa(0)=0<a_i=\Fa(e_i)$. 
\end{proof}

We now show that the body of a Christoffel graph is {\em symmetric}, i.e., it is equal to its reversal.
This generalizes the fact that central words are palindromes.

\begin{lemma}\label{lem:bodysymmetric}
The body of $\Ha$ 
is symmetric, i.e., $-(\Ha\setminus\Qzero)  = \Ha\setminus\Qzero$.
\end{lemma}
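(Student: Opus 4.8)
The claim is that the body $\Ha\setminus\Qzero$ equals its own reversal $-(\Ha\setminus\Qzero)$. The plan is to work edge by edge and use the interval characterization of edges from Lemma~\ref{lem:edgesinterval}. Take an edge $(u,v)\in\bE_d$ with $v=u+e_i$; its reversal is $(-v,-u)=(-v,-v+e_i)$, which is again an edge of $\bE_d$ of the same ``direction'' $e_i$. So I would like to show: $(u,u+e_i)$ is a body edge of $\Ha$ if and only if $(-v,-v+e_i)$ is a body edge of $\Ha$. Since $\Fa(-v)=-\Fa(v)=s-\Fa(v)$ in $\Z/s\Z$ (using $\Fa(v)\in[1,s-1]$ for the non-leg case), the reversal operation acts on the labels by $k\mapsto s-k$.

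\textbf{Key steps.} First, by Lemma~\ref{lem:legs}, $(u,u+e_i)$ is a leg iff $\Fa(u)=0$, and dually $(-v,-v+e_i)$ is a leg iff $\Fa(-v)=0$ iff $\Fa(v)=0$ iff (by Lemma~\ref{lem:edgesinterval}, reading the ``$\Fa(v)$'' column) $\Fa(v)\in[0,a_i-1]$ with $\Fa(v)=0$, equivalently $\Fa(u)=s-a_i$ is the nonedge-and-leg borderline\,--- more simply, $(-v,-v+e_i)$ is a leg iff $\Fa(v)=0$, and one checks $(u,u+e_i)$ being a leg ($\Fa(u)=0$) forces $\Fa(v)=a_i\neq 0$, so legs map to non-legs; hence the reversal preserves ``being a non-leg edge'' — good, restricted to non-leg edges the reversal is a well-defined involution. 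Second, among non-leg edges, I use Lemma~\ref{lem:edgesinterval}: $(u,u+e_i)\in\Ha$ iff $\Fa(u)\in[0,s-a_i-1]$, equivalently $\Fa(v)=\Fa(u)+a_i\in[a_i,s-1]$. For the reversed edge $(-v,-v+e_i)$, the same criterion says it lies in $\Ha$ iff $\Fa(-v)\in[0,s-a_i-1]$, i.e. $s-\Fa(v)\in[0,s-a_i-1]$, i.e. $\Fa(v)\in[a_i+1,s]=[a_i+1,s-1]\cup\{0\}$. Restricting to non-leg edges means $\Fa(v)\ne 0$, so this is $\Fa(v)\in[a_i+1,s-1]$. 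Third, compare: among non-leg edges ($\Fa(v)\in[1,s-1]$), the body edges of $\Ha$ are those with $\Fa(v)\in[a_i,s-1]$, but we must also remove the leg case $\Fa(v)=a_i$ coming from $\Fa(u)=0$; wait — that's a leg, so it is already excluded. The subtlety: $\Fa(v)=a_i$ with $\Fa(u)=0$ is a leg; are there non-leg edges with $\Fa(v)=a_i$? That needs $\Fa(u)=0$, so no. Hence body edges of $\Ha$ in direction $e_i$ correspond exactly to $\Fa(v)\in[a_i+1,s-1]$, which is precisely the condition for the reversed edge to be in $\Ha$, and symmetrically the reversed edge is also a non-leg. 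So $(u,v)$ is a body edge of $\Ha$ iff $(-v,-u)$ is, which is the assertion. A cleaner way to package this: define the reversal involution $\rho$ on non-leg edges and observe $\rho$ conjugates the label condition $\Fa(v)\in[a_i+1,s-1]$ to itself because the interval $[a_i+1,s-1]$ is symmetric under $k\mapsto s-k$ composed with the shift by $a_i$ relating $\Fa(u)$ and $\Fa(v)$.

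\textbf{Main obstacle.} The only delicate point is bookkeeping the endpoints of the half-open intervals and the role of the label $0$: one must check carefully that the leg edges (label $0$ at the tail) are exactly swapped with the edges having label $0$ at the head by the reversal, so that removing $\Qzero$ before and after reversal are compatible. Concretely the safe route is: show reversal maps $\{$edges with $\Fa(u)=0\}$ bijectively onto $\{$edges with $\Fa(v)=0\}$, observe both are contained in $\Qzero$, and note $\Qzero$ equals the union of these two sets; then on the complement (the non-leg edges), both $\Ha$-membership conditions translate into the symmetric interval $\Fa(v)\in[a_i+1,s-1]$ as above, giving the equality. I expect this endpoint-chasing to be the entire content of the proof; there is no structural difficulty beyond it.
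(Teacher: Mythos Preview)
Your argument is correct and follows the same route as the paper: use Lemma~\ref{lem:edgesinterval} to see that body edges in direction $e_i$ are exactly those with $\Fa(u)\in[1,s-a_i-1]$ (equivalently $\Fa(v)\in[a_i+1,s-1]$), an interval stable under the reversal $k\mapsto s-k$, together with $-\Qzero=\Qzero$. The paper streamlines this by proving only one inclusion (reversal is involutive) and working with $\Fa(u)$ throughout; your passing claim ``legs map to non-legs, hence reversal preserves being a non-leg edge'' is a non-sequitur as written, but your final paragraph supplies the correct reason, namely that reversal swaps the two halves of $\Qzero$ and hence preserves its complement.
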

\noindent


\begin{proof}
It is sufficient to prove $-(\Ha\setminus \Qzero) \supseteq \Ha\setminus
\Qzero$, the other inclusion being equivalent, since symmetry is involutive.
Let $(u,v)\in \Ha\setminus \Qzero$. Then $v - u =e_i$ for some $1\leq i \leq
d$.
Then $\Fa(u)\in[0,s-a_i-1]$ by Lemma~\ref{lem:edgesinterval}
and $\Fa(u)\notin\{0,s-a_i\}$ so that
$\Fa(u)\in[1,s-a_i-1]$.
Thus $\Fa(-u)=s-\Fa(u)\in[a_i+1,s-1]$.
We obtain that $(-v,-u)\in\Ha$ by Lemma~\ref{lem:edgesinterval} because
$-u - (-v)=v - u = e_i$.
Since $\Qzero=-\Qzero$, $(u,v)\notin\Qzero$ implies that $(-v,-u)\notin\Qzero$.
We conclude $(-v,-u)\in \Ha\setminus \Qzero$
and $(u,v) \in -(\Ha\setminus \Qzero)$. 
\end{proof}

Now we show that the reversal is equal to the flip of a Christoffel graph.
This generalizes the fact that the reversal $\til{amb}$ of a Christoffel word
is equal to $bma$.

\begin{lemma}\label{lem:symmetry}
The reversal of $\Ha$ 
is equal to its flip, i.e., $-\Ha = \flip(\Ha)$.
\end{lemma}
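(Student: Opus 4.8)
The plan is to show the two sets $-\Ha$ and $\flip(\Ha)$ coincide by splitting both into their body and their legs, and checking agreement on each part separately. This works because the flip is, by definition, an operation that leaves the body untouched and swaps the legs, while the reversal is an operation that can be analyzed on body and legs as well.

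First I would handle the body. By definition $\flip(\Ha)\setminus\Qzero = \Ha\setminus\Qzero$, since $\flip$ only modifies edges incident to zero. On the other side, $(-\Ha)\setminus\Qzero = -(\Ha\setminus\Qzero)$ because $\Qzero = -\Qzero$ (an edge $(u,v)$ is incident to zero iff $(-v,-u)$ is), so reversal sends the body of $\Ha$ to the body of $-\Ha$. But $-(\Ha\setminus\Qzero) = \Ha\setminus\Qzero$ is exactly Lemma~\ref{lem:bodysymmetric}. Hence the bodies of $-\Ha$ and $\flip(\Ha)$ are both equal to $\Ha\setminus\Qzero$, and in particular to each other.

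Next I would handle the legs. By definition the legs of $\flip(\Ha)$ are $\Qzero\setminus\Ha$. So I must show that the legs of $-\Ha$, namely $(-\Ha)\cap\Qzero = -(\Ha\cap\Qzero)$, equal $\Qzero\setminus\Ha$. By Lemma~\ref{lem:legs}, $(u,v)\in\Ha\cap\Qzero$ iff $\Fa(u)=0$; equivalently, among the two edges $(-e_i,0)$ and $(0,e_i)$ incident to a given vertex of height zero along direction $i$, the leg of $\Ha$ is always the outgoing one $(0,e_i)$. Reversing an edge $(u,v)$ with $\Fa(u)=0$ gives $(-v,-u)$ with $\Fa(-u)=0$ as well, i.e.\ $(-v,-u)$ is again incident to zero, but now it is the \emph{incoming} edge at the zero-height vertex $-u$ in direction determined by $v-u=e_i$ (indeed $-u-(-v)=e_i$, and $\Fa(-v)=-a_i\neq 0$). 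Thus reversal turns "outgoing leg at a zero vertex'' into "incoming leg at a zero vertex'', and these incoming legs are precisely the edges of $\Qzero$ that are \emph{not} in $\Ha$ (again by Lemma~\ref{lem:legs}, since for an edge $(u,v)\in\Qzero\setminus\Ha$ one must have $\Fa(v)=0$, $\Fa(u)\neq 0$). So $-(\Ha\cap\Qzero) = \Qzero\setminus\Ha$, which is exactly the leg set of $\flip(\Ha)$.

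Combining the two parts, $-\Ha = \bigl((-\Ha)\setminus\Qzero\bigr)\cup\bigl((-\Ha)\cap\Qzero\bigr) = (\Ha\setminus\Qzero)\cup(\Qzero\setminus\Ha) = \flip(\Ha)$. The main obstacle, such as it is, is purely bookkeeping: being careful that reversal genuinely preserves $\Qzero$ and tracking, via Lemma~\ref{lem:legs}, that the legs of $\Ha$ are exactly the outgoing zero-edges while the legs of $\flip(\Ha)$ are exactly the incoming ones. No computation beyond $\Fa(-u)=s-\Fa(u)$ (equivalently $=-\Fa(u)$ in $\Z/s\Z$) is needed, since Lemma~\ref{lem:bodysymmetric} already absorbs the interval arithmetic for the body.
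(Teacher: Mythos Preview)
Your proof is correct and follows essentially the same approach as the paper's: both split $-\Ha$ into its part outside $\Qzero$ and its part inside $\Qzero$, invoke Lemma~\ref{lem:bodysymmetric} together with $\Qzero=-\Qzero$ for the body, and use Lemma~\ref{lem:legs} to identify $-(\Ha\cap\Qzero)$ with $\Qzero\setminus\Ha$ for the legs. Your discussion of ``outgoing'' versus ``incoming'' edges at zero vertices is a slightly more verbose unpacking of the paper's one-line computation $-(\Ha\cap\Qzero)=\{(u,v)\in\bE_d:\Fa(v)=0\}=\Qzero\setminus\Ha$, but the content is identical.
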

\begin{proof}
For $\Ha$, we have to show that $-\Ha = (\Ha\setminus \Qzero) \cup (\Qzero\setminus \Ha)$. We prove the result in two parts since $-\Ha = \left((-\Ha)\setminus
\Qzero\right) \cup
\left((-\Ha)\cap \Qzero\right)$ is the disjoint union of a part outside of $\Qzero$ and a
part inside of $\Qzero$.
Outside of $\Qzero$: since $\Qzero$ is symmetric and because $\Ha$ is
symmetric from Lemma~\ref{lem:bodysymmetric}, we have
$-(\Ha)\setminus \Qzero = -(\Ha)\setminus -\Qzero = -(\Ha\setminus \Qzero) =
\Ha\setminus \Qzero$.
Inside of~$\Qzero$:
we have $(-\Ha)\cap \Qzero = -(\Ha\cap \Qzero)
=\{(u,v) \in\bE_d: \Fa(v)=0\}
= \Qzero\setminus\Ha$ by Lemma~\ref{lem:legs}.
\end{proof}

Now we show that the flip of a Christoffel graph $\Ha$ is equal to a translate of
$\Ha$.
It generalizes one implication of Theorem~\ref{thm:pirillo}.
It corresponds to the fact that a Christoffel word $amb$ is conjugate to
to $bma$. 
Proposition~\ref{prop:translation} is illustrated in Figure~\ref{fig:H25flip}
and Figure~\ref{fig:I467flip}.

\begin{proposition}\label{prop:translation}
Let $t\in\Z^d$ 
be such that $\Fa(t)=1$. 
The translate by $t$ of $\Ha$ 
is equal to its flip, i.e.,
$\Ha+t = \flip(\Ha)$.
\end{proposition}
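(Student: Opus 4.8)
The plan is to show the two set equalities $\Ha + t \subseteq \flip(\Ha)$ and $\flip(\Ha) \subseteq \Ha + t$, working edge by edge and distinguishing whether an edge lies in the body or among the legs. The key computational fact is that translation by $t$ shifts the value of $\Fa$ by exactly $1$ modulo $s$: for every vertex $w$ we have $\Fa(w+t) = \Fa(w) + \Fa(t) = \Fa(w) + 1$ in $\Z/s\Z$. So I first want to understand how the characterization of edges in Lemma~\ref{lem:edgesinterval} behaves under adding $1$ to $\Fa(u)$.

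First I would handle an arbitrary edge $(u,v) \in \bE_d$ with $v = u + e_i$ and ask when $(u+t, v+t) \in \Ha$. By Lemma~\ref{lem:edgesinterval}, $(u+t,v+t)\in\Ha \iff \Fa(u+t) \in [0, s-a_i-1] \iff \Fa(u)+1 \in [0,s-a_i-1] \pmod s \iff \Fa(u) \in [-1, s-a_i-2] = \{s-1\}\cup[0,s-a_i-2] \pmod s$. Compare this with membership of $(u,v)$ itself in $\Ha$, which is $\Fa(u)\in[0,s-a_i-1]$. The two conditions agree except on the two boundary values $\Fa(u) = s-a_i-1$ (where $(u,v)\in\Ha$ but $(u+t,v+t)\notin\Ha$) and $\Fa(u) = s-1$, equivalently $\Fa(v)=\Fa(u)+a_i = a_i - 1 \pmod s$, i.e. $\Fa(v) = a_i-1$ hmm — let me instead phrase the exceptional cases as: $\Fa(u) = s - a_i - 1$, and $\Fa(u) = s-1$. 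In the first exceptional case $\Fa(v) = s-1$, in the second $\Fa(u)=s-1$ so $v+t$ has $\Fa = 0$; in both exceptional cases the edge $(u+t,v+t)$ is incident to zero, i.e. lies in $\Qzero$. Symmetrically, the edges $(u,v)$ for which exactly one of $(u,v), (u+t,v+t)$ lies in $\Ha$ are precisely those with $(u,v)\in\Qzero$ or $(u+t,v+t)\in\Qzero$; off $\Qzero$ and off $\Qzero - t$ the translate agrees with $\Ha$. This is the heart of the argument: \emph{on the body, translation by $t$ does nothing}, and it is exactly on the legs (edges incident to zero, before or after translating) that a swap occurs.

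Then I would assemble the pieces. Decompose $\Ha + t$ and $\flip(\Ha)$ each as a body part (outside $\Qzero$) and a leg part (inside $\Qzero$). For the body: combining the observation above with Lemma~\ref{lem:bodysymmetric} style reasoning — more directly, from the boundary analysis an edge $(u,v)\notin\Qzero$ with $\Fa(u)\notin\{0,s-a_i\}$, hence $\Fa(u)\notin\{s-a_i-1, s-1\}$ unless additionally $\Fa(v)\in\{0\}$... I need to check $\Fa(u)=s-a_i-1$ gives $\Fa(v)=s-1\ne 0$ and $\Fa(u+t)$-edge has $\Fa(v+t)=0$, so $(u+t,v+t)\in\Qzero$; and $\Fa(u)=s-1$ gives $(u,v)\in\Qzero$ already — so for $(u,v)$ in the body with $(u+t,v+t)$ also in the body, we get $(u,v)\in\Ha \iff (u+t,v+t)\in\Ha$, which yields $(\Ha+t)\setminus\Qzero = \Ha\setminus\Qzero = \flip(\Ha)\setminus\Qzero$. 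For the legs: by Lemma~\ref{lem:legs}, $(u,v)\in\Ha\cap\Qzero \iff \Fa(u)=0$, so $(\Ha\cap\Qzero)+t = \{(u+t,v+t): \Fa(u)=0\} = \{(u',v')\in\bE_d : \Fa(u')=1\}$. I then need to show this equals $\Qzero\setminus\Ha$, the set of legs not in $\Ha$; by Lemma~\ref{lem:legs} again, $\Qzero\setminus\Ha = \{(u',v')\in\Qzero : \Fa(u')\ne 0\} = \{(u',v')\in\bE_d: \Fa(v')=0, \Fa(u')\ne0\}$, and writing $v' = u'+e_i$ with $\Fa(u')=-a_i=s-a_i\ne0$: is this the same as $\{\Fa(u')=1\}$? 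No — so I should instead compute the leg part of $\flip$ differently. Actually the cleanest route is to use Lemma~\ref{lem:symmetry}: $\flip(\Ha) = -\Ha$, and then it suffices to prove $\Ha + t = -\Ha$, i.e. that the translate of $\Ha$ by any $t$ with $\Fa(t)=1$ equals its reversal. For the leg part this reads $(\Ha\cap\Qzero)+t = (-\Ha)\cap\Qzero$; the right side is $\{(u,v): \Fa(v)=0\}$ by the computation in the proof of Lemma~\ref{lem:symmetry}, and the left side is $\{(u',v'): \Fa(u')=1\}$; an edge with $\Fa(v')=0$ and $v'=u'+e_i$ has $\Fa(u') = -a_i = s-a_i$, which is \emph{not} generally $1$ — so reversal is also not literally the right statement at the leg level unless I allow different $t$.

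I expect the main obstacle to be exactly this bookkeeping at the legs: matching $\{(u',v')\in\bE_d : \Fa(u')=1\}$ with $\Qzero\setminus\Ha$. The resolution must be that $\flip$ does not send a leg to its own translate but swaps it with the \emph{other} leg on the same line through a zero-vertex, and that $t$ with $\Fa(t)=1$ realizes precisely this swap; concretely, if $(u,v)$ is a leg of $\Ha$ then $\Fa(u)=0$, and $(u+t, v+t)$ has $\Fa(u+t)=1$, $\Fa(v+t)=1+a_i$, so it is a body edge unless $1+a_i \equiv 0$, which can't happen for $a_i\ge1$ with $s>a_i$ — wait, that would say legs map into the body, contradicting that $\flip$ fixes the body. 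So I must be misreading which edges are legs, or the intended $t$ forces a specific structure. The honest plan: carefully re-examine, using Lemma~\ref{lem:edgesinterval}'s "$0\notin\,]\Fa(u),\Fa(v)]$" reformulation, exactly which edges change membership under $+t$, prove that change-set is symmetric and is contained in $\Qzero$, prove its size/description matches $\Qzero\setminus\Ha$ versus $\Ha\cap\Qzero$ via Lemma~\ref{lem:legs}, and conclude $\Ha+t = (\Ha\setminus\Qzero)\cup(\Qzero\setminus\Ha) = \flip(\Ha)$. The subtle point — and the step I'd flag as the crux — is verifying that the edges whose $\Ha$-membership is toggled by translation by $t$ are \emph{exactly} $\Qzero$ (all legs, and nothing in the body), which should fall out of the interval-partition argument already used in Lemma~\ref{lem:allbutone}, since adding $1$ to $\Fa(u)$ crosses the point $0$ of $\Z/s\Z$ for an edge $(u,v)$ precisely when $0 \in \{\Fa(u)+1\} \cup (\text{endpoint adjustments})$, i.e. when $(u,v)$ or its $t$-translate touches $0$.
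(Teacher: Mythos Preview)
Your proposal does not reach a complete proof; it is an exploration with several false starts, and the ``honest plan'' you end with is essentially correct but never executed. The specific confusion that derails you is a conflation of \emph{the translate of the legs} with \emph{the legs of the translate}: you compute $(\Ha\cap\Qzero)+t$ when what is needed is $(\Ha+t)\cap\Qzero$. These are different because $\Qzero$ is not translation-invariant. Concretely, $(\Ha\cap\Qzero)+t$ consists of edges $(u',v')$ with $\Fa(u')=1$, which are body edges and have nothing to do with $\Qzero$. By contrast, $(\Ha+t)\cap\Qzero$ consists of edges $(u,v)\in\Qzero$ (so $\Fa(u)\in\{0,s-a_i\}$) such that $(u-t,v-t)\in\Ha$, i.e.\ $\Fa(u)-1\in[0,s-a_i-1]$, i.e.\ $\Fa(u)\in[1,s-a_i]$; intersecting with $\{0,s-a_i\}$ forces $\Fa(u)=s-a_i$, equivalently $\Fa(v)=0$, which by Lemma~\ref{lem:legs} is exactly $\Qzero\setminus\Ha$. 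Once this is straightened out there is no contradiction, and no detour through Lemma~\ref{lem:symmetry} or the reversal is needed.

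The paper's proof is precisely your final plan: write $\Ha+t=\big((\Ha+t)\setminus\Qzero\big)\cup\big((\Ha+t)\cap\Qzero\big)$ and prove the four inclusions $(\Ha+t)\setminus\Qzero=\Ha\setminus\Qzero$ and $(\Ha+t)\cap\Qzero=\Qzero\setminus\Ha$ directly from Lemma~\ref{lem:edgesinterval}, using only that $\Fa(u-t)=\Fa(u)-1$. Each inclusion is a two-line interval check; there is no need for the symmetric-difference framing or the $0\notin{]\Fa(u),\Fa(v)]}$ reformulation.
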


Note that, since $\Fa$ is surjective, there exists indeed $t\in\Z^d$ such that $\Fa(t)=1$.
Also, $\Fa(-t)=-\Fa(t)=-1$.

\begin{proof}
We prove the result in two parts since $\Ha+t$ is the disjoint union
of a part outside of $\Qzero$ and a part inside of $\Qzero$:
\[
\Ha+t = \left((\Ha+t)\setminus \Qzero\right)\cup \left((\Ha+t)\cap
\Qzero\right).
\]

1. $(\Ha+t)\setminus \Qzero \supseteq \Ha\setminus \Qzero$.
Suppose that $(u,v)\in \Ha\setminus\Qzero$. Thus
$v - u =e_i$ for some $1\leq i \leq d$.
Then, $\Fa(u)\in[0,s-a_i-1]$ by Lemma~\ref{lem:edgesinterval} and, since the edge is not a leg,
$\Fa(u)\notin\{0,s-a_i\}$.
Hence, $\Fa(u)\in[1,s-a_i-1]$. Then
$\Fa(u-t)=\Fa(u)-1\in[0,s-a_i-2]$ which implies that $(u-t,v-t)\in\Ha$.
Then $(u,v)\in\Ha+t$.

2. $(\Ha+t)\setminus \Qzero \subseteq \Ha\setminus \Qzero$.
Let $(u+t,v+t)\in(\Ha+t)\setminus\Qzero$ for some edge $(u,v)\in\Ha$.
Then, $\Fa(u+t)\notin\{0,s-a_i\}$.
From Lemma~\ref{lem:edgesinterval}, $\Fa(u+t)=\Fa(u)+1\in[1,s-a_i]$.
Therefore, $\Fa(u+t)\in[1,s-a_i-1]$ and we conclude that $(u+t,v+t)\in \Ha$.

3. $(\Ha+t)\cap \Qzero \supseteq \Qzero\setminus\Ha$.
Let $(u,v)\in\Qzero\setminus\Ha$.
Then, $\Fa(v)=0$ which implies that $\Fa(v-t)=s-1$.
By Lemma~\ref{lem:edgesinterval}, we have $(u-t,v-t)\in\Ha$ so that
 $(u,v)\in\Ha+t$.

4. $(\Ha+t)\cap \Qzero \subseteq \Qzero\setminus\Ha$.
Let $(u+t,v+t)\in(\Ha+t)\cap \Qzero$ for some edge $(u,v)\in\Ha$.
Either $\Fa(u+t)=0$ or $\Fa(v+t)=0$.
If $\Fa(u+t)=0$, then $\Fa(u)=s-1$ which implies that $(u,v)\notin\Ha$ by
Lemma~\ref{lem:edgesinterval}, a contradiction.
One must have $\Fa(v+t)=0$, which implies that $(u+t,v+t)\notin\Ha$.
\end{proof}

\begin{figure}[h]
\begin{center}
\includegraphics{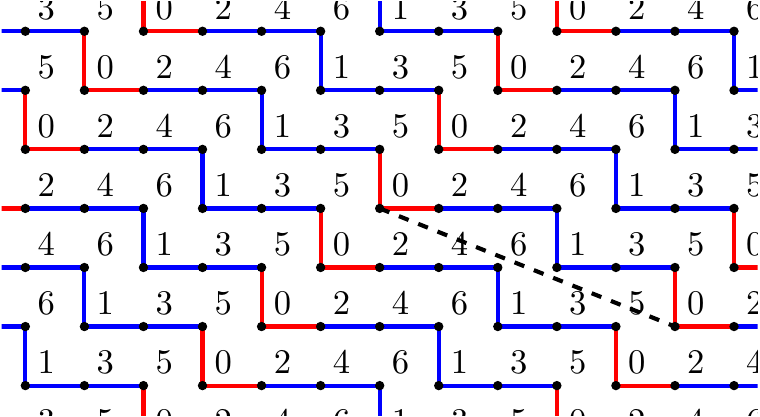}
\quad\quad
\includegraphics{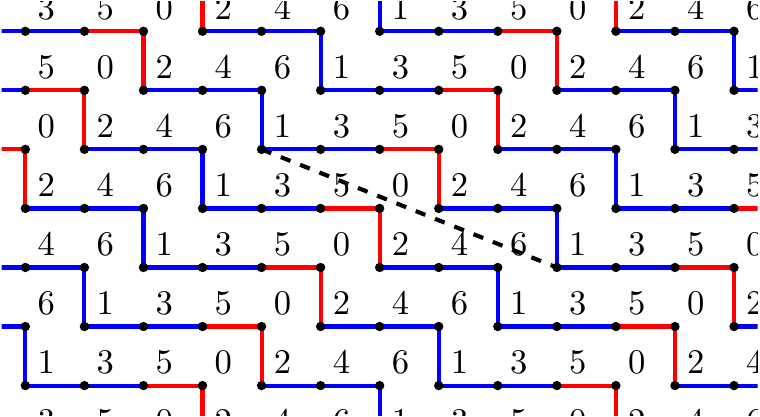}
\end{center}
\caption{Left: the graph $\Ha$ with $\a=(2,5)$.
Right: $\flip(\Ha)$.}
\label{fig:H25flip}
\end{figure}

\begin{figure}[h!]
\begin{center}
\includegraphics[width=.47\linewidth]{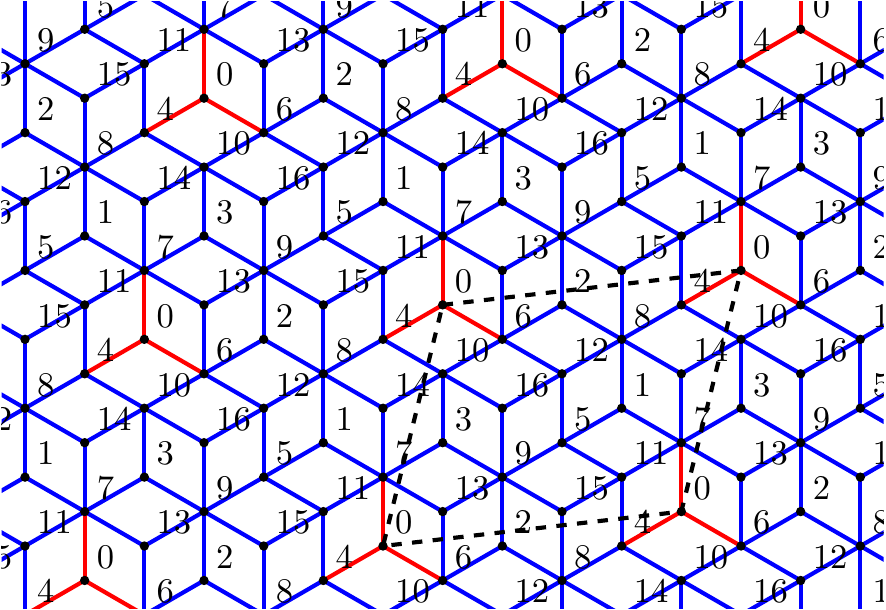}
\quad\quad
\includegraphics[width=.47\linewidth]{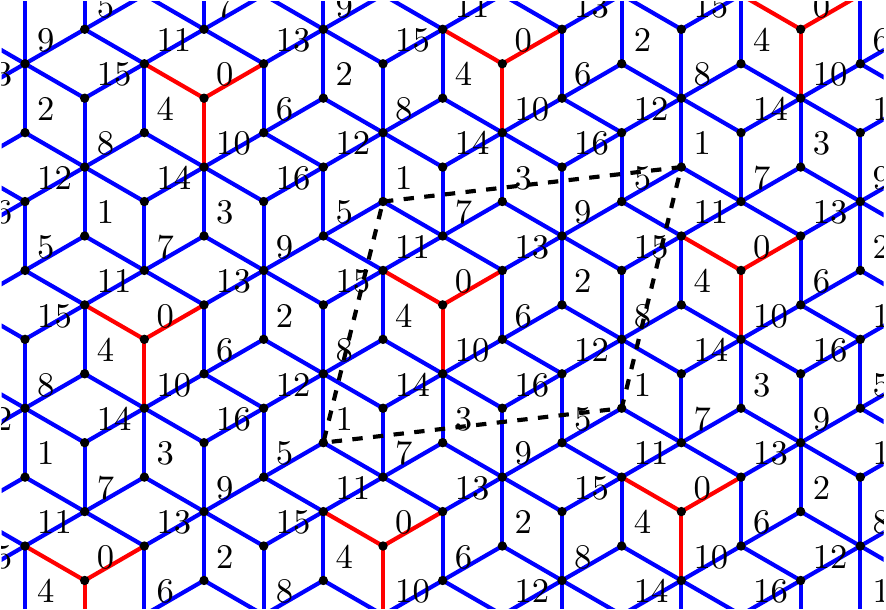}
\end{center}
\caption{Left: the graph $\Ia$ with $\a=(4,6,7)$.
Right: $\flip(\Ia)$.
Consider the Christoffel parallelogram $P$ with vertices labeled by $0$
embedded in $\Ia$. 
The parallelogram $P$ also appears in the graph $\flip(\Ia)$
with vertices labeled by $1$.
}
\label{fig:I467flip}
\end{figure}

The previous proposition proves that the body of a Christoffel graph $\Ha$
has a period. This generalizes the fact that central words of length $p+q-2$
have periods $p$ and $q$ (remark that $p$ and $-q$ is the same period mod
the length of the Christoffel words $|w|=p+q$).
Indeed, let $P$ be some parallelogram and $M$ be an
inner point. Consider the $4$ vectors with origin equal to one of the vertices
of $P$ and with end $M$.  Then, for each point $X$ in $P$, there is one of
these vector, $\vec v$ say, such that the segment $[X,X+\vec v]$ is contained
in $P$. We leave the verification of this to the reader. It follows that a
Christoffel parallelogram may be reconstructed from the edges
incident to zero by applying translations which stay completely in the
parallelogram. This is completely analoguous to the fact that a central word
is completely determined by its two periods.

The next result generalizes the fact that the reversal $\til{w}$ of a Christoffel word $w$
is conjugate to $w$. This is not a characteristic property of Christoffel
words, because it is satisfied for all words that are
the product of two palindromes.

\begin{corollary}\label{cor:conjugatetoreversal}
Let $t\in\Z^d$ be such that $\Fa(t)=1$.
Then $-\Ha = \Ha + t$. 
\end{corollary}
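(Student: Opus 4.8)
The corollary follows immediately by chaining the two preceding results. The plan is to combine Lemma~\ref{lem:symmetry}, which states $-\Ha = \flip(\Ha)$, with Proposition~\ref{prop:translation}, which states that for any $t\in\Z^d$ with $\Fa(t)=1$ we have $\Ha + t = \flip(\Ha)$. Both equal $\flip(\Ha)$, hence they are equal to each other.

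First I would invoke Proposition~\ref{prop:translation} with the given $t$ (the hypothesis $\Fa(t)=1$ is exactly what that proposition requires), obtaining $\Ha + t = \flip(\Ha)$. Next I would invoke Lemma~\ref{lem:symmetry}, which gives $-\Ha = \flip(\Ha)$ with no extra hypothesis. Transitivity of equality then yields $-\Ha = \Ha + t$, which is the claim.

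There is essentially no obstacle here: the corollary is a one-line consequence of results already established in this section, and it requires no new computation with the function $\Fa$ or the intervals of Lemma~\ref{lem:edgesinterval}. The only thing worth a remark is that, as noted just after Proposition~\ref{prop:translation}, such a $t$ exists because $\Fa$ is surjective, so the hypothesis is not vacuous; but since the statement takes $t$ as given, even that remark is not strictly needed in the proof itself. I would simply write:

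\begin{proof}
By Proposition~\ref{prop:translation}, $\Ha + t = \flip(\Ha)$, and by Lemma~\ref{lem:symmetry}, $-\Ha = \flip(\Ha)$. Therefore $-\Ha = \Ha + t$.
\end{proof}
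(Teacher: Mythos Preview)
Your proof is correct and is exactly the paper's own argument: the paper simply writes that the corollary follows from Lemma~\ref{lem:symmetry} and Proposition~\ref{prop:translation}, which is precisely the chaining through $\flip(\Ha)$ that you spell out.
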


\begin{proof}
Follows from Lemma~\ref{lem:symmetry} and Proposition~\ref{prop:translation}.
\end{proof}

\begin{corollary}\label{ChristParal}
(i) The body of a Christoffel parallelogram is symmetric with respect to its center.

(ii) Consider a Christoffel parallelogram $P$ embedded in $\Ia$. The
parallelogram obtained by symmetry of $P$ with respect to its center appears
as a translate of $P$ within $\Ia$, and is also equal to the flip of $P$.
\end{corollary}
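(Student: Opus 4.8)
The plan is to derive both parts from the facts already established for $\Ha$ — which by Definition~\ref{def:commute} hold verbatim for $\Ia$ — together with one elementary observation about the center of $P$. Recall that $P$ is a parallelogram one of whose vertices is the origin $O$, and that the two edge vectors $v_1,v_2$ of $P$ issuing from $O$ form a $\Z$-basis of the lattice $\KerF'$. Hence the center $c=\tfrac12(v_1+v_2)$ of $P$ satisfies $2c=v_1+v_2\in\KerF'$, so that the central symmetry $\rho_c\colon x\mapsto 2c-x$ with center $c$ is the composite of the reversal $x\mapsto-x$ with the translation by $2c\in\KerF'$; moreover $\rho_c$ maps the region $P$ onto itself. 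These two remarks are really the whole point.

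For (i), I would first note that the body $\Ia\setminus\Qzero$ is invariant under $x\mapsto-x$, which is Lemma~\ref{lem:bodysymmetric} read for $\Ia$, and that it is also invariant under every translation in $\KerF'$, since $\Ia$ is (Lemma~\ref{lem:invariantKerF}) and $\Qzero$ obviously is. Combining these with the decomposition of $\rho_c$ above shows that $\Ia\setminus\Qzero$ is invariant under $\rho_c$; and since $\rho_c(P)=P$, the set $(\Ia\setminus\Qzero)\cap P$ — precisely the body of the Christoffel parallelogram — is invariant under $\rho_c$, i.e.\ symmetric with respect to $c$. This is the geometric counterpart of the fact that central words are palindromes.

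For (ii), I would compute the image of the graph $\Ia$ under $\rho_c$: writing $\rho_c=(\text{translation by }2c)\circ(x\mapsto-x)$ gives $\rho_c(\Ia)=(-\Ia)+2c$. Then I would invoke $-\Ia=\flip(\Ia)=\Ia+\pi(t)$ for any $t\in\Z^d$ with $\Fa(t)=1$ — Lemma~\ref{lem:symmetry} and Proposition~\ref{prop:translation} read for $\Ia$ — together with $\Ia+2c=\Ia$ (because $2c\in\KerF'$), to conclude $\rho_c(\Ia)=\Ia+\pi(t)=\flip(\Ia)$. Intersecting with $P=\rho_c(P)$ then shows that the parallelogram obtained from $P$ by the central symmetry $\rho_c$ carries, inside the region $P$, the edge pattern $(\Ia+\pi(t))\cap P$: a translate of $P$ within $\Ia$ (equivalently, $P$ with each vertex label $\ell\in\Z/s\Z$ replaced by $\ell+1$, as in Figure~\ref{fig:I467flip}); and since that pattern equals $\flip(\Ia)\cap P$, it is also the flip of $P$.

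The main obstacle is not mathematical but a matter of making the statement precise: one must fix which edges lying on $\partial P$ count as belonging to the decorated parallelogram (the convention being to drop the two sides of $P$ not through $O$) and check that restriction to $P$ commutes with the operations above, the asymmetry between the sides through $O$ and the opposite ones being exactly what the translation by $2c\in\KerF'$ in the decomposition of $\rho_c$ takes care of. To sidestep this entirely, one may instead run the argument on the Christoffel graph $\Ga=\Ia/\KerF'$, which is in bijection with a Christoffel parallelogram and in which $\KerF'$-invariance is built in and no boundary convention intervenes.
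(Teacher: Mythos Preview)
Your argument is correct and is essentially what the paper has in mind: the paper gives no separate proof of this corollary, treating it as an immediate consequence of Lemma~\ref{lem:bodysymmetric}, Lemma~\ref{lem:symmetry}, Proposition~\ref{prop:translation} and Corollary~\ref{cor:conjugatetoreversal} read for $\Ia$. What you add explicitly --- the observation that the central symmetry $\rho_c$ factors as reversal followed by translation by $2c=v_1+v_2\in\KerF'$, and that $\rho_c(P)=P$ --- is precisely the bridge needed to pass from symmetry about the origin (which is what the lemmas give) to symmetry about the center of $P$, and is left implicit in the paper. Your remark on boundary conventions, and the alternative of working in $\Ga\simeq\Ia/\KerF'$ to sidestep them, is also well taken and consistent with how the paper identifies the Christoffel parallelogram with the Christoffel graph.
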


We thus have obtained a generalization of: (i) a central word is a palindrome;
(ii) the reversal of a Christoffel word $amb$ is conjugate to it, and equal to
$bma$. The corollary can be checked on Figures~\ref{fig:cayleygraphs},
\ref{fig:H25flip} and \ref{fig:I467flip}

\section{Higher-dimensional Pirillo's theorem}\label{sec:Ddimpirillothm}

In this section, we study the converse of
Proposition~\ref{prop:translation}. In other words, does the fact of being a
translate of its flip is a characteristic property of Christoffel graphs as it
is the case for Christoffel words?
We show that it must a Christoffel graph $\Hao$ for some vector $\a\in\Z^d$
and width $\omega$.
If $d=3$, we show that parallelograms that are translate to their flip are
Christoffel parallelograms or their edge-complement.

Let $K$ be a subgroup of $\Z^d$ for some integer $d\geq2$ such that the
index $[\Z^d:K]$ is finite and $\sum_{i=1}^d e_i = (1,1,\ldots, 1) \in
K$.
Let $\Qzero$ be the set of \emph{edges of $\bE_d$ incident to zero mod $K$}:
\[
\Qzero = \{(u,v)\in\bE_d \mid u\in K \text{ or } v\in K \}.
\]
For a subset of edges $X\subseteq\bE_d$, we redefine the $\flip$ operation
according to the above set $\Qzero$:
\[
\flip: X \mapsto (X\setminus \Qzero) \cup (\Qzero\setminus X).
\]
In what follows, we assume that $M\subseteq \bE_d$ is a set of
edges such that
\begin{itemize}
\item $M$ is invariant for the group of translations $K$;
\item $\flip(M)=M+t$ for some $t\in\Z^d$. 
\end{itemize}
If $\flip(M)=M+t$, then for each $i$, $(0,e_i)\in M$ or $(-e_i,0)\in M$ but
not both. Otherwise the number of edges parallel to the vector $e_i$ is not
preserved by the $\flip$ and the equation can not be satisfied.
Therefore, we suppose that for each $i$, $1\leq i\leq d$, $(0,e_i)\in M$ and
$(-e_i,0)\notin M$. In
other words, the legs of $M$ are:
\begin{itemize}
\item $\Qzero \cap M=\{(u,u+e_i)\in\bE_d \mid u\in K \}$.
\end{itemize}
The question we consider in this section is:
for which set of edges $M\subseteq\bE_d$ satisfying the above three conditions
does there exist a translation $t\in\Z^d$ such that $\flip(M)=M+t$ (see
Figure~\ref{fig:motifM}).
\begin{figure}[h]
\begin{center}
\includegraphics{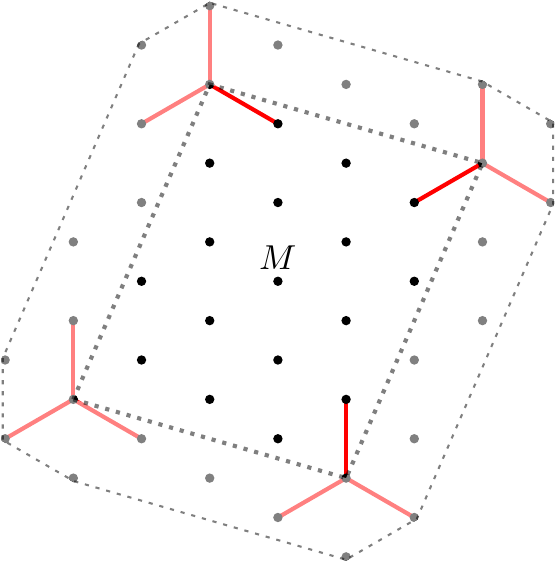}
\quad\quad
\includegraphics{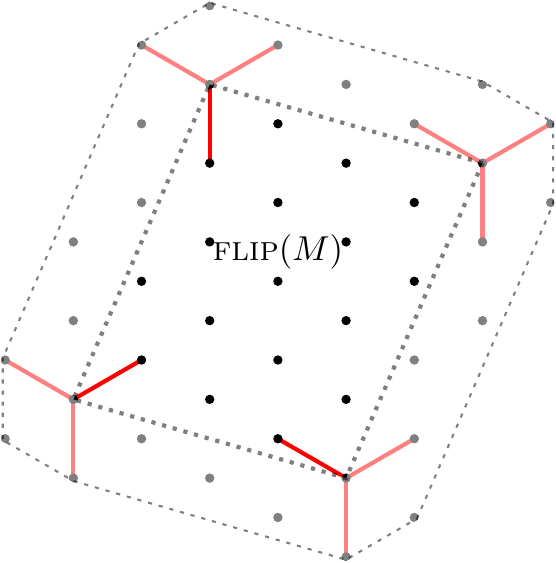}
\end{center}
\caption{Left: $M$.
    Right: $\flip(M)$ for the subgroup
    $K=\langle(0,4,1),(-2,0,3),(1,1,1)\rangle$.
}
\label{fig:motifM}
\end{figure}


\begin{lemma}\label{lem:ST}
Let $t\in\Z^d$ be a translation.
Let $X\subseteq\bE_d$ and $h\in\bE_d$. We have
\begin{enumerate}[\rm(i)]
  \item If $h\in X$, then $h+t\notin \Qzero$ if and only if $h+t\in \flip(X+t)$.
  \item If $h\notin X$, then $h+t\in \Qzero$ if and only if $h+t\in \flip(X+t)$.
\end{enumerate}
\end{lemma}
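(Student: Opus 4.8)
The plan is to unwind the definitions: the statement is purely set-theoretic, so no properties of $\bE_d$, $K$ or $\Qzero$ beyond the plain set operations are needed. First I would record the elementary fact that translation by $t$ is a bijection of $\bE_d$, so that $h+t\in X+t$ if and only if $h\in X$. Next I would expand $\flip(X+t)=\big((X+t)\setminus\Qzero\big)\cup\big(\Qzero\setminus(X+t)\big)$ and observe that this union is \emph{disjoint}: the first piece is contained in $X+t$ and the second is contained in its complement.

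For (i), assume $h\in X$, hence $h+t\in X+t$. Then $h+t$ cannot lie in the second piece $\Qzero\setminus(X+t)$, so $h+t\in\flip(X+t)$ if and only if $h+t\in(X+t)\setminus\Qzero$; and since we already know $h+t\in X+t$, this last condition is equivalent to $h+t\notin\Qzero$. For (ii), assume $h\notin X$, hence $h+t\notin X+t$. Then $h+t$ cannot lie in the first piece $(X+t)\setminus\Qzero$, so $h+t\in\flip(X+t)$ if and only if $h+t\in\Qzero\setminus(X+t)$, which, as $h+t\notin X+t$, is equivalent to $h+t\in\Qzero$. I would present the two items in parallel to keep the argument compact.

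There is no genuine obstacle here; the only point requiring a moment's care is the disjointness of the two pieces in the definition of $\flip$, since that is exactly what allows us to decide membership in $\flip(X+t)$ by inspecting a single piece once we know whether $h+t$ belongs to $X+t$.
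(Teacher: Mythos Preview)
Your argument is correct and is essentially the same as the paper's: both proofs unwind the definition $\flip(X+t)=((X+t)\setminus\Qzero)\cup(\Qzero\setminus(X+t))$ and use that $h+t\in X+t\iff h\in X$, the only cosmetic difference being that you invoke the disjointness of the two pieces explicitly while the paper instead checks the two directions of each equivalence by a direct case split on whether $h+t\in\Qzero$.
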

\begin{proof}
We have $\flip(X+t)=((X+t) \setminus \Qzero) \cup (\Qzero \setminus
(X+t))$.

(i) If $h\in X$ then $h+t\in X+t$.
If $h+t\notin \Qzero$,
then $h+t\in (X+t)\setminus\Qzero\subseteq \flip(X+t)$.
If $h+t\in \Qzero$,
then $h+t\in (X+t)\cap\Qzero$.
Therefore $h+t\notin\flip(X+t)$.

(ii) If $h\notin X$ then $h+t\notin X+t$.
If $h+t\in \Qzero$,
then $h+t\in \Qzero\setminus(X+t)\subseteq\flip(X+t)$.
If $h+t\notin \Qzero$,
then $h+t\notin (X+t)\cup\Qzero\supseteq \flip(X+t)$.
Therefore $h+t\notin\flip(X+t)$.
\end{proof}



\begin{proposition}\label{prop:existbi}
For all $i$, with $1\leq i\leq d$, there exists
a unique integer $b_i$, $0<b_i<\omega$, such that $e_i+b_i t\in K$ where
$\omega$ is the order of
$t$ in the group $\Z^d/K$. Moreover,
\[
(0, e_i) + kt
\begin{cases}
\in M & \text{if}\quad 0 \leq k < b_i,\\
\notin M & \text{if}\quad b_i \leq k < \omega,
\end{cases}
\]
\end{proposition}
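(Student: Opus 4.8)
The plan is to track the orbit of each leg $(0,e_i)$ under successive translations by $t$, using Lemma~\ref{lem:ST} as the basic bookkeeping tool, and to exploit the hypothesis $\flip(M)=M+t$ together with the $K$-invariance of $M$. First I would establish the existence and uniqueness of $b_i$: since $\omega$ is by definition the order of $t$ in $\Z^d/K$, the element $e_i + \Z t$ generates a coset; I claim there is a unique $k$ with $0 < k < \omega$ (the range $0<b_i<\omega$ rather than $0\le b_i<\omega$ follows because $e_i\notin K$, as $(-e_i,0)\notin M$ forces $0\notin K$ would be wrong — rather, $e_i\in K$ would make $(0,e_i)$ and $(-e_i,0)$ symmetric in a way incompatible with exactly one being a leg; I'd spell this out) such that $e_i + b_i t \in K$. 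Uniqueness is clear since two such values would differ by a multiple of $t$ lying in $K$, contradicting minimality/the definition of $\omega$. Existence: the cyclic subgroup generated by the image of $t$ in $\Z^d/K$ has order $\omega$, and $\Z^d/K$ is generated in each "$e_i$-direction"; more carefully, $(1,1,\ldots,1)\in K$ and one shows $e_i$ lies in the subgroup generated by $t$ modulo $K$, which is the substantive point — this is where I'd need that $\flip(M)=M+t$ constrains things enough, via the leg-count argument already sketched before the proposition.

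Next, for the orbit statement, I would prove the two implications by induction on $k$. The base case $k=0$: $(0,e_i)\in M$ is one of the assumed legs. For the inductive step when $0 \le k < b_i$, suppose $(0,e_i)+kt \in M$ and $k+1 < b_i$. The key observation is that $(0,e_i)+ (k+1)t \notin \Qzero$: indeed $(0,e_i)+(k+1)t = ((k+1)t,\, e_i+(k+1)t)$, and this edge is incident to zero mod $K$ iff $(k+1)t\in K$ or $e_i+(k+1)t\in K$; the first fails since $0 < k+1 < \omega$ (order of $t$), the second fails since $0 < k+1 < b_i$ and $b_i$ is the unique value in $(0,\omega)$ with $e_i + b_i t\in K$. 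Then Lemma~\ref{lem:ST}(i) applied with $X = M+kt$ — wait, more directly: from $(0,e_i)+kt\in M$ and $\flip(M)=M+t$, apply Lemma~\ref{lem:ST}(i) with $h = (0,e_i)+kt$, $X = M$: since $h\in M$ and $h+t\notin\Qzero$, we get $h+t\in\flip(M+t)$... I need to align this with $\flip(M)=M+t$. Re-examining: I would instead use that $M+t = \flip(M)$ means $M = \flip(M) - t = \flip(M - t)$ (flip commutes with translation suitably — actually one must check $\flip(X)+t$ versus $\flip(X+t)$; here $\Qzero$ is $K$-invariant but not $t$-invariant, so care is needed). The cleaner route: $(0,e_i)+(k+1)t\in M+t = \flip(M)$, and since $(0,e_i)+(k+1)t\notin\Qzero$, membership in $\flip(M)$ is equivalent to membership in $M\setminus\Qzero$, hence to membership in $M$. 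So $(0,e_i)+(k+1)t\in M$, closing the induction.

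For the complementary range $b_i \le k < \omega$: when $k = b_i$, the edge $(0,e_i)+b_i t = (b_i t,\, e_i + b_i t)$ has its endpoint $e_i + b_i t\in K$, so it lies in $\Qzero$; by the leg description $\Qzero\cap M = \{(u,u+e_i) : u\in K\}$, it is a leg of $M$ iff its \emph{source} $b_i t\in K$, which fails since $0<b_i<\omega$. Hence $(0,e_i)+b_i t\notin M$. Then I propagate forward: for $b_i\le k<\omega$ with $(0,e_i)+kt\notin M$, I show $(0,e_i)+(k+1)t\notin M$ by the same dichotomy — either the new edge is in $\Qzero$ (only possible if $e_i+(k+1)t\in K$, i.e. $k+1\equiv b_i$, impossible in this range, or $(k+1)t\in K$, i.e. $k+1=\omega$, the boundary case which is fine) in which case it is a leg iff its source is in $K$, which one checks fails; or it is outside $\Qzero$, and then $(0,e_i)+(k+1)t\in M+t=\flip(M)$ would force $(0,e_i)+(k+1)t\in M\setminus\Qzero\subseteq M$ — so I argue the contrapositive via Lemma~\ref{lem:ST}(ii): from $(0,e_i)+kt\notin M$ and the relation, deduce $(0,e_i)+(k+1)t\notin M$.

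\textbf{Main obstacle.} The delicate point is the interaction between $\flip$ and translation by $t$: because $\Qzero$ is invariant under $K$ but \emph{not} under $t$, one does not simply have $\flip(X+t)=\flip(X)+t$, so the hypothesis $\flip(M)=M+t$ must be used in the asymmetric form "$M+t$ and $\flip(M)$ coincide" with Lemma~\ref{lem:ST} supplying exactly the case analysis (edge in/out of $\Qzero$) that makes the bookkeeping consistent. I expect getting the existence of $b_i$ — equivalently, that $e_i$ lies modulo $K$ in the cyclic group generated by $t$ — to require the leg-counting argument in its full strength, and I would present that as the crux; the two inductions are then routine once the "incident to zero mod $K$" conditions are correctly translated into the arithmetic conditions "$(k+1)t\in K$" and "$e_i+(k+1)t\in K$".
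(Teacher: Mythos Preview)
Your induction for the orbit description is correct and is essentially the paper's argument: your ``cleaner route'' via $M+t=\flip(M)$ unwinds exactly what Lemma~\ref{lem:ST} says when applied with $X=M$ (since $\flip$ is an involution, the hypothesis also reads $\flip(M+t)=M$). The treatment at $k=b_i$ and the propagation for $b_i<k<\omega$ match the paper's as well.

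The genuine gap is existence of $b_i$. You flag it as the crux, but the plan you offer---a ``leg-count argument''---does not go through: the observation recorded before the proposition only says that exactly one of $(0,e_i)$, $(-e_i,0)$ lies in $M$, and this alone does not force $e_i$ into the cyclic subgroup $\langle t\rangle\bmod K$. A priori $\Z^d/K$ is an arbitrary finite abelian group; its cyclicity (Lemma~\ref{lem:generatedbyt}) is a \emph{consequence} of this proposition, not an input, so you cannot appeal to it here. The paper's device is to extract existence from the very induction you already set up: assume for contradiction that $(0,e_i)+kt\notin\Qzero$ for every $0<k<\omega$; then your inductive step yields $(0,e_i)+kt\in M$ for all such $k$. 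At $k=\omega$ the source $\omega t\in K$ puts the edge back into $\Qzero$, and now the other direction of the same step (equivalently, $(M+t)\cap\Qzero=\flip(M)\cap\Qzero=\Qzero\setminus M$) forces $(0,e_i)+\omega t\notin M$, contradicting $K$-invariance since $(0,e_i)+\omega t\equiv(0,e_i)\in M$. So the induction does double duty: run it once under the no-$b_i$ hypothesis to obtain existence, then again (exactly as you wrote) to obtain the orbit description.
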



In the following proof, for two elements $u,u'\in\Z^d$ the notation $u\equiv
u'$ is used when $u'-u\in K$. The notation is also used for two edges
$(u,v),(u',v')\in\bE_d$: $(u,v)\equiv(u',v')$ if and only if $u'-u=v'-v\in K$.

\begin{proof}
Let $\omega$ be the order of $t$ in $\Z^d/K$. In this proof, we denote by $\vec{0}$ the zero of $\Z^d/K$. Thus let
\[
\omega = \min\{k>0 | kt \in K \} = {\rm order}_{\Z^d/K}(t)
\]
Consider the orbit under the translation $t$ of the edge $h=(\vec{0},e_i)\in
\Qzero\cap M$. We have that $h+\omega t\equiv h\in \Qzero\cap M$. We want to show that there exists
$b_i$, such that $0<b_i<\omega$ and $h+b_it \equiv (-e_i,\vec{0}) \in \Qzero$.

Suppose (by contradiction) that $h+kt\notin \Qzero$ for all $0<k<\omega$.
From Lemma~\ref{lem:ST} (i),
$h\in M$ and $h+t\notin\Qzero$, then $h+t\in\flip(M+t)=M$.
Recursively, we have
$h+kt\in \flip(M+t)=M$ for all $0<k<\omega$. This is summarized in the following
graph:
\[
\begin{array}{cccccccccccc}
h     & \xrightarrow{+t} &
h+t  & \xrightarrow{+t} &
h+2t  & \xrightarrow{+t} &
\cdots  & \xrightarrow{+t} &
h+(\omega-1)t  & \xrightarrow{+t} &
h+\omega t\equiv h  \\
\\
\in \Qzero\cap M & &
\in M\setminus \Qzero  & &
\in M\setminus \Qzero  & &
  & &
\in M\setminus \Qzero  & &
\in \Qzero
\end{array}
\]
But then, $h+(\omega-1)t\in M$ and $h+\omega t\in\Qzero$, so that $h+\omega t\notin\flip(M+t)=M$ from
Lemma~\ref{lem:ST}~(i).
This is a contradiction because $h+\omega t\equiv h\in M$. Hence, there
must exist some $b_i$, $0<b_i<\omega$ such that $h+b_it\in \Qzero$. Since $h$ is an
edge parallel to the vector $e_i$, then
either $h+b_it\equiv (\vec{0},e_i)$ or $h+b_it\equiv (-e_i,\vec{0})$. The first option
contradicts the minimality of $\omega$. We conclude that $e_i+b_it\equiv \vec{0}$.

The number $b_i$ is also unique. Indeed, suppose there exist $b_i$ and $b_i'$
with $0<b_i<b_i'<\omega$ such that $h+b_it\equiv h+b_i't\equiv (-e_i,\vec{0})$. Then
$(b_i'-b_i)t = (\vec{0}+b_i't)-(\vec{0}+b_it)\equiv(-e_i)-(-e_i) = \vec{0}$.
This contradicts the minimality of $\omega$ since $0<b_i'-b_i<\omega$.

From the above paragraph,
we have that $h+kt\notin \Qzero$ for all $k$ such
that $0<k<b_i$ or $b_i<k<\omega$.
Using Lemma~\ref{lem:ST}~(i), if $h+(k-1)t\in M$ and
$h+kt\notin\Qzero$, then $h+kt\in\flip(M+t)=M$.
Thus by recursion $h+kt\in M$ for all $k$ with $0<k<b_i$.
Also $h+b_it\equiv(-e_i,\vec{0})\in\Qzero\setminus M$.
Using Lemma~\ref{lem:ST}~(ii), if $h+b_it\notin M$ and
$h+(b_i+1)t\notin\Qzero$, then $h+(b_i+1)t\notin\flip(M+t)=M$.
Thus by recursion $h+kt\notin M$ for all $k$ with $b_i<k<\omega$.
\end{proof}

\begin{lemma}\label{lem:generatedbyt}
$\Z^d/K$ is cyclic and generated by $t$.
\end{lemma}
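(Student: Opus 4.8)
The plan is to show that $t$ generates $\Z^d/K$ by exhibiting each standard basis vector $e_i$ (hence each generator of $\Z^d$, hence each element of $\Z^d/K$) as a multiple of $t$ modulo $K$. This is exactly what Proposition~\ref{prop:existbi} hands us: for every $i$ with $1\le i\le d$ there is an integer $b_i$ with $0<b_i<\omega$ such that $e_i+b_it\in K$, i.e. $e_i\equiv -b_i t$ in the quotient group $\Z^d/K$. Thus the image of each $e_i$ lies in the cyclic subgroup $\langle t\rangle\le\Z^d/K$.

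First I would note that $\Z^d$ is generated by $e_1,\dots,e_d$, so $\Z^d/K$ is generated by the images $\bar e_1,\dots,\bar e_d$ of these vectors. By the previous paragraph each $\bar e_i=-b_i\bar t\in\langle\bar t\rangle$, so the subgroup generated by $\bar e_1,\dots,\bar e_d$ — which is all of $\Z^d/K$ — is contained in $\langle\bar t\rangle$. The reverse inclusion $\langle\bar t\rangle\subseteq\Z^d/K$ is trivial, so $\Z^d/K=\langle\bar t\rangle$. In particular $\Z^d/K$ is cyclic, generated by $t$, and its order is exactly $\omega=\mathrm{order}_{\Z^d/K}(t)$ (which is finite since $[\Z^d:K]$ is assumed finite).

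I do not expect any real obstacle here: the entire content has already been done inside Proposition~\ref{prop:existbi} (the existence of the $b_i$), and this lemma is just the observation that generators mapping into a cyclic subgroup forces the whole group to be that cyclic subgroup. The only point worth stating carefully is that the $e_i$ genuinely generate $\Z^d$ as a group, so their images generate the quotient; everything else is a one-line deduction.

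\begin{proof}
Write $\bar x$ for the image in $\Z^d/K$ of $x\in\Z^d$, and $\bar t$ for the image of $t$. Since $[\Z^d:K]$ is finite, $\bar t$ has finite order $\omega$ in $\Z^d/K$. By Proposition~\ref{prop:existbi}, for each $i$ with $1\le i\le d$ there is an integer $b_i$ with $e_i+b_it\in K$, that is, $\bar e_i=-b_i\,\bar t\in\langle\bar t\rangle$. The vectors $e_1,\dots,e_d$ generate $\Z^d$ as a group, hence their images $\bar e_1,\dots,\bar e_d$ generate $\Z^d/K$. As each $\bar e_i$ lies in the cyclic subgroup $\langle\bar t\rangle$, we get $\Z^d/K\subseteq\langle\bar t\rangle$, and the reverse inclusion is obvious. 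Therefore $\Z^d/K=\langle\bar t\rangle$ is cyclic, generated by $t$.
\end{proof}
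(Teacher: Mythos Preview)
Your proof is correct and follows essentially the same approach as the paper: both use Proposition~\ref{prop:existbi} to write each $e_i$ as $-b_it$ modulo $K$, and then observe that since the $e_i$ generate $\Z^d$, every element of $\Z^d/K$ lies in $\langle t\rangle$. The paper phrases this by taking an arbitrary $u=\sum x_ie_i$ and computing $u\equiv -\sum(b_ix_i)t$, while you phrase it via generators of the quotient, but the content is identical.
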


\begin{proof}
Let $u=(x_1,x_2,\cdots,x_d)\in\Z^d$.
Using Proposition~\ref{prop:existbi}, we have
\[
u = \sum x_i e_i \equiv \sum x_i (-b_i t) = - \sum(b_ix_i) t
\]
Let $k=-\sum (b_ix_i)\mod \omega$. Then, $0\leq k < \omega$ and $u=kt$.
\end{proof}

\begin{lemma}\label{lem:MgeneratedbyST}

\[
M = \{(0,e_i) + kt : 1\leq i\leq d \text{ and } 0\leq k < b_i\} + K.
\]
\end{lemma}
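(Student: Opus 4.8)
The statement to prove is a description of the edge set $M$ in terms of its legs, translated around by the group $K$. The natural strategy is a double inclusion, and in fact the hard direction is $\subseteq$ because the $\supseteq$ direction is almost immediate from what we already know. I would first record the easy containment, then set up the machinery for the hard one, and finally close the loop using cyclicity of $\Z^d/K$ (Lemma~\ref{lem:generatedbyt}).

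\textbf{The $\supseteq$ direction.} Let $N$ denote the right-hand side. Every generator $(0,e_i)+kt$ with $0\le k<b_i$ lies in $M$ by Proposition~\ref{prop:existbi}; and $M$ is by hypothesis invariant under the group of translations $K$. Hence $N\subseteq M$, with no further work.

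\textbf{The $\subseteq$ direction.} Take an arbitrary edge $(u,v)\in M$, so $v-u=e_i$ for some $i$. Since $M$ is $K$-invariant, I may replace $(u,v)$ by any $K$-translate; the plan is to translate it into the orbit of $(0,e_i)$ under $t$ and read off the right index $k$. Concretely, by Lemma~\ref{lem:generatedbyt} there is a unique $k$ with $0\le k<\omega$ and $u\equiv kt\pmod K$; then $(u,v)\equiv (0,e_i)+kt$. So $(u,v)\in M$ forces $(0,e_i)+kt\in M$ (again by $K$-invariance), and by the dichotomy in Proposition~\ref{prop:existbi} this happens precisely when $0\le k<b_i$. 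Therefore $(u,v)=((0,e_i)+kt)+(u-kt)$ with $u-kt\in K$ and $0\le k<b_i$, i.e. $(u,v)\in N$. This gives $M\subseteq N$ and completes the proof.

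\textbf{Where the work really is.} There is no single hard step here — the lemma is essentially a bookkeeping consequence of Proposition~\ref{prop:existbi} together with the cyclicity of the quotient. The one point that deserves care is the claim that $(u,v)\in M$ if and only if its representative $(0,e_i)+kt$ (same edge direction $e_i$, same class of the source vertex mod $K$) is in $M$: this is exactly $K$-invariance of $M$ applied to the translation $u-kt\in K$, and one must check that this translation sends the edge $(0,e_i)+kt$ to $(u,v)$ and not merely to some edge in the same class — which is immediate since $(u,v)\equiv(0,e_i)+kt$ means the difference of the source vertices equals the difference of the target vertices and lies in $K$. Once that identification is in place, the set equality falls out of the ``$0\le k<b_i$ vs.\ $b_i\le k<\omega$'' trichotomy already established.
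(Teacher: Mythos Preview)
Your proof is correct and follows essentially the same approach as the paper: both directions use Proposition~\ref{prop:existbi}, with the $\subseteq$ direction invoking Lemma~\ref{lem:generatedbyt} to write $u\equiv kt\pmod K$ and then reading off $0\le k<b_i$ from the dichotomy. Your write-up is a bit more explicit about the role of $K$-invariance, but the argument is the same.
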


\begin{proof}
($\supseteq$) If $0\leq k<b_i$, then $(0,e_i)+kt\in M$ by
Proposition~\ref{prop:existbi}.

($\subseteq$) Let $(u,u+e_i)\in M$ with $u\in\Z^d$.
From Lemma~\ref{lem:generatedbyt}, $(u,u+e_i)=(0,e_i)+u \equiv (0,e_i)+kt$ for
some integer $k$ such that $0\leq k < \omega$.
From Proposition~\ref{prop:existbi}, $0\leq k <b_i$.
\end{proof}

For all $i$ with $1\leq i\leq d$, let $a_i$ be such that $a_i+b_i=\omega$.
Also let
\[
\b=(b_1,b_2,\cdots,b_d)
\quad
\text{and}
\quad
\a=(a_1,a_2,\cdots,a_d)
\]
We have $a_it = (\omega-b_i)t=\omega t-b_it \equiv e_i$.
The next result shows that $\omega$ is a divisor of $\sum a_i$ and $\sum b_i$.


\begin{lemma}\label{lem:existsdivisor}
There exist integers $q$ and $\ell$, with $0<q<d$ and
$0<\ell<d$, such that $\omega\cdot q=a_1+a_2+\cdots+a_d$ and $\omega\cdot
\ell=b_1+b_2+\cdots+b_d$.
Moreover $d=q+\ell$.
\end{lemma}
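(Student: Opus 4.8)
The plan is to derive everything from the single relation $\sum_{i=1}^d e_i=(1,1,\ldots,1)\in K$, combined with the congruences $a_it\equiv e_i$ and $e_i\equiv -b_it$ modulo $K$ that were recorded just before the statement, and the fact (Lemma~\ref{lem:generatedbyt}) that $\Z^d/K$ is cyclic of order $\omega$ generated by $t$.

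First I would observe that, since $(1,1,\ldots,1)\in K$, the image $\sum_{i=1}^d e_i$ is trivial in $\Z^d/K$. Substituting $e_i\equiv a_it$ gives $\left(\sum_{i=1}^d a_i\right)t\equiv \vec 0$ in $\Z^d/K$, so $\omega$ (the order of $t$) divides $\sum_{i=1}^d a_i$; set $q=\left(\sum a_i\right)/\omega$. By Proposition~\ref{prop:existbi} each $b_i$ satisfies $0<b_i<\omega$, hence $a_i=\omega-b_i$ is an integer with $0<a_i<\omega$; summing over the $d$ indices yields $0<\sum a_i<d\omega$, that is $0<q<d$. The identical argument with $e_i\equiv -b_it$ shows $\left(\sum b_i\right)t\equiv\vec 0$, so $\omega$ divides $\sum b_i$; set $\ell=\left(\sum b_i\right)/\omega$, and $0<b_i<\omega$ for every $i$ gives $0<\ell<d$. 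Finally, adding the relations $a_i+b_i=\omega$ over $1\leq i\leq d$ gives $\omega q+\omega\ell=\sum a_i+\sum b_i=d\omega$, hence $q+\ell=d$.

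There is no genuine obstacle here; the proof is a short divisibility computation. The only point that needs care is the \emph{strictness} of the inequalities $0<q<d$ and $0<\ell<d$, which is precisely where the bounds $0<a_i<\omega$ and $0<b_i<\omega$ furnished by Proposition~\ref{prop:existbi} are used (these bounds also implicitly require $\omega\geq 2$, which is consistent with the existence of the $b_i$).
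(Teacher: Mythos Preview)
Your argument is correct and follows essentially the same route as the paper: both proofs use $\sum_i e_i\in K$ together with $e_i\equiv a_it\equiv -b_it$ to obtain that $\omega$ divides $\sum a_i$ and $\sum b_i$, invoke the bounds $0<a_i,b_i<\omega$ from Proposition~\ref{prop:existbi} to get the strict inequalities, and sum $a_i+b_i=\omega$ to conclude $q+\ell=d$. If anything, you are slightly more explicit than the paper about the lower bounds $q,\ell>0$.
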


\begin{proof}
For all $1\leq i\leq d$, we have $e_i = a_it = -b_it$. Thus,
$-(b_1+b_2+\cdots+b_d)t$ is an overall translation of $e_1+e_2+\cdots+e_d\in
K$, i.e., the identity.
Similarly, $(a_1+a_2+\cdots+a_d)t=e_1+e_2+\cdots+e_d\in K$.
Therefore, the order of $t$ ($=\omega$) must divide both $a_1+a_2+\cdots+a_d$ and
$b_1+b_2+\cdots+b_d$.
Then, there exist integers $q$ and $\ell$ such that $\omega\cdot
q=a_1+a_2+\cdots+a_d$ and $\omega\cdot \ell=b_1+b_2+\cdots+b_d$.
But $a_i<\omega$ for each $i$ so that $a_1+a_2+\cdots+a_d<d\omega $ and $q<d$. Similarly,
$\ell<d$.
Finally, $\omega q+\omega\ell=\sum (a_i+b_i)=\omega d$ and therefore $d=q+\ell$.
\end{proof}

If the sum of the $a_i$ or the sum of the $b_i$ is $\omega$, then the next two
theorems claim that $M$ is closely related to the Christoffel graph.

\begin{theorem}\label{thm:Mischristo}
(i) If $\sum a_i=\omega$, then $M=\Ha$;

(ii) if $\sum b_i=\omega$, then the complement $M^c=\bE_d\setminus M$ of $M$ is equal to 
$-H_\b$.
\end{theorem}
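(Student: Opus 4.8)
The plan is to identify the combinatorial data $(\a,\b,\omega,K)$ attached to $M$ with the data defining a Christoffel graph, and then to check the defining inequalities edge by edge. First I would settle part (i). Assume $\sum a_i = \omega$. Since $a_1,\dots,a_d$ are positive and the order of $t$ in $\Z^d/K$ is $\omega$, and we know $a_i t \equiv e_i$, the hypothesis $\sum a_i = \omega$ together with Lemma~\ref{lem:existsdivisor} forces $q = 1$; in particular $\gcd(a_1,\dots,a_d)$ divides $\omega$ and, because $(1,1,\dots,1) = \sum e_i = (\sum a_i) t = \omega t \in K$ already, one checks the $a_i$ must be relatively prime (otherwise $t$ would have order a proper divisor of $\omega$, contradicting minimality). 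Thus $\a$ is a legitimate normal vector in the sense of Section~\ref{sec:IH} with $s = \sum a_i = \omega$, and $K$ is forced to be exactly $\KerF$: indeed $K \subseteq \Z^d$ has index $\omega$ (since $\Z^d/K$ is cyclic of order $\omega$ by Lemma~\ref{lem:generatedbyt}), and $K \subseteq \KerF$ because $\Fa(e_i) = a_i \equiv a_i$ and $a_i + b_i = \omega$ gives $\Fa$-compatibility with the relation $e_i + b_i t \in K$; comparing indices (both equal $\omega = s$) yields $K = \KerF$.

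Once $K = \KerF$ is established, I would use Lemma~\ref{lem:MgeneratedbyST}, which says $M = \{(0,e_i) + kt : 1\le i\le d,\ 0\le k < b_i\} + K$. Under the identification $\Z^d/K \cong \Z/s\Z$ via $kt \mapsto k$ (equivalently via $\Fa$, since $\Fa(t) \equiv 1$ because $a_i t \equiv e_i$ and $\Fa(e_i) = a_i$ forces $\Fa(t) = 1$), the edge $(0,e_i) + kt$ has endpoints with $\Fa$-values $k$ and $k + a_i$. So $(0,e_i)+kt \in M$ for $0 \le k < b_i = s - a_i$, i.e.\ precisely when $\Fa(u) \in [0, s - a_i - 1]$ for the tail $u$. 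By Lemma~\ref{lem:edgesinterval} this is exactly the condition $(u, u+e_i) \in \Ha$. Since both $M$ and $\Ha$ are $K$-invariant ($\Ha$ by Lemma~\ref{lem:invariantKerF}) and they agree on one representative of each $K$-orbit of edges in each direction $e_i$, we conclude $M = \Ha$.

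For part (ii), assume $\sum b_i = \omega$. The situation is symmetric to (i) with the roles of $\a$ and $\b$ swapped: now $\ell = 1$, the $b_i$ are relatively prime, $\b$ is a valid normal vector with $\Vert\b\Vert_1 = \omega$, and $K = \Ker F_\b$. I would describe the complement $M^c = \bE_d \setminus M$: by Lemma~\ref{lem:MgeneratedbyST} again, $(u, u+e_i) \in M^c$ iff the $t$-coordinate $k$ of $u$ (mod $\omega$) satisfies $b_i \le k < \omega$. Writing things in terms of $F_\b$, where $F_\b(t) = 1$ now forces $F_\b(u) = k$ and the edge runs from $F_\b$-value $k$ to $k + b_i$, the condition $b_i \le k < \omega$ translates to $F_\b(u+e_i) = k + b_i \ge 2b_i$... — more cleanly, I would pass to the reversal: the reversal edge is $(-(u+e_i), -u)$ with $F_\b$-values $-(k+b_i)$ and $-k$, i.e.\ $\omega - k - b_i$ and $\omega - k$; as $k$ ranges over $[b_i, \omega)$, the tail value $\omega - k - b_i$ ranges over $[0, \omega - 2b_i]$... so to make this land exactly on the Christoffel condition $[0, \omega - b_i - 1]$ for $H_\b$ I should instead complement-then-reverse, matching $M^c$'s orbits against those of $-H_\b$ directly via Lemma~\ref{lem:edgesinterval} applied to $\b$. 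The bookkeeping that the interval $[b_i,\omega)$ for $M^c$ becomes, after $u \mapsto -u$, the interval $[0, \omega - b_i - 1]$ defining $H_\b$ — so that $M^c = -H_\b$ — is the one delicate point. I expect the main obstacle to be exactly this index arithmetic in (ii): keeping straight which of $a_i$, $b_i = \omega - a_i$ governs the interval and checking that the reversal flips the half-open interval $]F(u), F(v)]$ correctly so that the leg/non-leg bookkeeping (the role of $0$) is consistent, as in the proof of Lemma~\ref{lem:bodysymmetric}. The rest is routine once $K$ is pinned down as the appropriate kernel.
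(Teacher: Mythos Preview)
Your approach to (i) is essentially the paper's: identify the coordinate $k$ of $u$ in the cyclic group $\Z^d/K$ with $\Fa(u)$ and then invoke Lemma~\ref{lem:edgesinterval}. Your extra verification that the $a_i$ are coprime and that $K=\KerF$ is correct but not strictly needed; the paper simply computes $k\equiv -\sum b_ix_i=\sum(a_i-\omega)x_i\equiv\sum a_ix_i=\Fa(u)\pmod\omega$ directly and proceeds.

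In (ii) there is a genuine sign error that is the source of your confusion. You assert $F_\b(t)=1$, but this is false: from $e_i+b_it\in K$ we get $e_i\equiv -b_it$, so under the identification $\Z^d/K\cong\Z/\omega\Z$ via $t\mapsto 1$ the generator $e_i$ maps to $-b_i$, not $b_i$. Hence $F_\b(t)=-1$, and therefore $k\equiv -F_\b(u)=F_\b(-u)\pmod\omega$. This single sign is precisely what produces the reversal in the statement, and it untangles the computation you found ``delicate'': the condition $(u,u+e_i)\notin M$, i.e.\ $k\in[b_i,\omega-1]$, becomes $F_\b(-u)\in[b_i,\omega-1]$, which by the \emph{second} equivalence in Lemma~\ref{lem:edgesinterval} (applied to $\b$, with $-u$ playing the role of the \emph{head} of an edge) says $(-u-e_i,-u)\in\Hb$, i.e.\ $(u,u+e_i)\in -\Hb$. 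Your attempt to force $[b_i,\omega)$ to match the tail interval $[0,\omega-b_i-1]$ was bound to fail because $-u$ is the head of the reversed edge, not its tail. Once the sign of $F_\b(t)$ is corrected, the bookkeeping collapses to a one-line application of Lemma~\ref{lem:edgesinterval}.
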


\begin{proof}
(i) For all $u=(x_1,x_2,\cdots,x_d)\in\Z^d$, we have $u=kt$ with
\[
k
= \sum (-b_ix_i)\bmod \omega
= \sum (a_i-\omega)x_i\bmod \omega
= \sum a_ix_i\bmod \sum a_i
= \Fa(u)
\]
We want to show that $M=\Ha$.
We have that $(u,u+e_i)=(0,e_i)+kt\in M$ if and only if $0\leq k<b_i$ if and
only if $\Fa(u)\in[0,\omega-a_i-1]$ if and only if $(u,u+e_i)\in\Ha$.

(ii) For all $u=(x_1,x_2,\cdots,x_d)\in\Z^d$, we have $u \equiv kt$ with
\[
k
= \sum (-b_ix_i)\bmod \omega
= -\sum b_ix_i\bmod \sum b_i
= -\Fb(u)
\]
We want to show that $M^c=-\Hb$.
We have that $(u,v)=(u,u+e_i)=(0,e_i)+kt\in\bE_d\setminus M$
if and only if $b_i\leq k<\omega$
if and only if $\Fb(-u)\in[b_i,\omega-1]$
if and only if $(-u-e_i,-u)\in\Hb$
if and only if $(u,v)\in-\Hb$.
\end{proof}

%
%
%

\begin{corollary}
Let $d=3$.
$M$ is the Christoffel graph $\Ha$
or $M$ is the complement of the reversal of the Christoffel graph $\Hb$.
\end{corollary}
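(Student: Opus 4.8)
The plan is to deduce this corollary directly from Lemma~\ref{lem:existsdivisor} and Theorem~\ref{thm:Mischristo} by exhausting the possibilities for the integer $q$ with $0<q<d$. Since $d=3$, Lemma~\ref{lem:existsdivisor} gives $\omega q = \sum a_i$ and $\omega \ell = \sum b_i$ with $q+\ell = 3$ and both $q,\ell$ strictly between $0$ and $3$; hence $\{q,\ell\} = \{1,2\}$, so exactly one of $q,\ell$ equals $1$. First I would treat the case $q=1$: then $\sum a_i = \omega$, and Theorem~\ref{thm:Mischristo}(i) applies verbatim to give $M = \Ha$. In the remaining case $\ell = 1$, we have $\sum b_i = \omega$, and Theorem~\ref{thm:Mischristo}(ii) gives $M^c = \bE_d\setminus M = -\Hb$, i.e., $M$ is the complement of the reversal of the Christoffel graph $\Hb$.

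The one point that needs a word of care is that Theorem~\ref{thm:Mischristo} is stated for a vector $\a$ (resp.\ $\b$) of \emph{relatively prime positive integers}, since that is the standing hypothesis under which $\Ha$ and $\Hb$ were defined. So before invoking it I would check that when $\sum a_i = \omega$ the components $a_1,a_2,a_3$ are indeed coprime (and similarly for $\b$ in the other case). This follows because $t$ has order exactly $\omega$ in $\Z^d/K$: if $g = \gcd(a_1,a_2,a_3) > 1$, then from $a_i t \equiv e_i$ we would get that $(\omega/g)\,t$ maps onto $(\omega/g)\cdot(\text{each coordinate direction})$ in a way that forces $(\omega/g)t \in K$, contradicting the minimality of $\omega = \mathrm{order}(t)$ established in the proof of Proposition~\ref{prop:existbi} (concretely: $\omega t = \sum e_i \in K$ and $\omega t = g\cdot\big(\tfrac{\omega}{g} t\big)$, and one checks via Lemma~\ref{lem:generatedbyt} that $\gcd > 1$ would let $t$ be written with smaller order). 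Since the positivity of the $a_i$ (resp.\ $b_i$) is immediate from $0<a_i<\omega$, the hypotheses of Theorem~\ref{thm:Mischristo} are met.

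I expect the coprimality verification to be the only real obstacle, and it is a minor one; everything else is a finite case check forced by the arithmetic identity $d = q+\ell$ from Lemma~\ref{lem:existsdivisor}. In summary, the proof is: apply Lemma~\ref{lem:existsdivisor} with $d=3$ to conclude $\{q,\ell\}=\{1,2\}$; if $q=1$ invoke Theorem~\ref{thm:Mischristo}(i) to get $M=\Ha$; if $\ell=1$ invoke Theorem~\ref{thm:Mischristo}(ii) to get that $M^c = -\Hb$; in each case first noting that the relevant vector has coprime positive components because $\omega$ is the exact order of $t$ in $\Z^d/K$. $\qedhere$
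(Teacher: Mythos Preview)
Your proof is correct and follows essentially the same approach as the paper: apply Lemma~\ref{lem:existsdivisor} with $d=3$ to get $\{q,\ell\}=\{1,2\}$, then invoke part (i) or (ii) of Theorem~\ref{thm:Mischristo} accordingly. The only difference is that you add a verification that the resulting vector has coprime entries, which the paper leaves implicit; your instinct there is right (it follows from the fact that the $e_i\equiv a_it$ generate $\Z^d/K\cong\Z/\omega\Z$, forcing $\gcd(a_1,a_2,a_3,\omega)=1$, hence $\gcd(a_1,a_2,a_3)=1$ when $\omega=\sum a_i$), though your sketch of that step could be stated more cleanly than the ``smaller order'' formulation you give.
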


\noindent
Note that the complement of the reversal is equal to the reversal of the
complement.
\begin{proof}
From Lemma~\ref{lem:existsdivisor} there exist integers $0<q<3$ and $0<\ell<3$
such that $\omega\cdot q=a_1+a_2+a_3$ and $\omega\cdot \ell=b_1+b_2+b_3$.
Therefore, there are two cases, either $q=1$ and $\ell=2$ or $q=2$ and
$\ell=1$.
If $q=1$, then Theorem~\ref{thm:Mischristo} (i) applies. Therefore,
$M$ is a Christoffel graph $M=\Ha$ for the vector $\a=(a_1, a_2, a_3)$.
If $\ell=1$, then Theorem~\ref{thm:Mischristo} (ii) applies. Therefore,
the complement of $M$ is the reversal of a
Christoffel graph. More precisely, $M^c=-\Hb$ for the vector
$\b=(b_1, b_2, b_3)$.
\end{proof}

The previous result has also a counterpart in the triangular lattice $L$.

\begin{corollary}
Let $M'\subset \pi(\bE_d)$ such that $\flip(M'+t')=M'$ for some $t'\in L$, that is
invariant under some subgroup of finite index of $L$ and that satisfies
$\Qzero \cap M=\{(0,h_1), (0,h_2), \ldots, (0,h_d)\}+K$. If $d=3$, then $M'$
is equal to a graph $\Ia$ or to the reversal of its edge-complement.
\end{corollary}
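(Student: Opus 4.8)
The plan is to deduce this corollary from the previous one by transporting everything through the projection $\pi:\bE_d\to\pi(\bE_d)$. Recall from the diagram in Remark~\ref{rem:compatible} and from Definition~\ref{def:commute} that $\pi$ intertwines the flip, reversal, and translation operations, and that $\pi$ identifies $\Z^d$ with $L=\Z^d/\UN\Z$ (since $\UN$ is already required to lie in the relevant subgroup). So the first step is to lift the data $(M', t', K')$ living in $\pi(\bE_d)$ to data in $\bE_d$: set $M=\pi^{-1}(M')$ (as a set of edges of $\bE_d$), pick $t\in\Z^d$ with $\pi(t)=t'$, and let $K=\pi^{-1}(K')$, which is a subgroup of $\Z^d$ of finite index containing $\UN$.

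Next I would check that the lifted triple $(M,K,t)$ satisfies the three hypotheses feeding into the previous corollary. Invariance of $M$ under $K$ is immediate since $M$ is a full preimage and $K$ is a full preimage. The condition $\flip(M)=M+t$ follows because $\pi$ commutes with flip and translation (Definition~\ref{def:commute}) and is injective on edge-classes mod $K$ by construction, so $\flip(M'+t')=M'$ pulls back to $\flip(M+t)=M$. Finally the leg condition: the hypothesis $\Qzero\cap M'=\{(0,h_1),\ldots,(0,h_d)\}+K'$ pulls back, under $\pi$ and using $h_i=\pi(e_i)$, to $\Qzero\cap M=\{(0,e_i)\in\bE_d\mid 1\le i\le d\}+K=\{(u,u+e_i)\mid u\in K\}$, which is exactly the normalization assumed in Section~\ref{sec:Ddimpirillothm}.

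Now the previous corollary applies to $M$: either $M=\Ha$ for some vector $\a=(a_1,a_2,a_3)$, or $M^c=-\Hb$ for some vector $\b=(b_1,b_2,b_3)$. Pushing this back down by $\pi$ and again invoking that $\pi$ commutes with reversal and complement (complement commutes with $\pi$ because $M$ is a full preimage, so $\pi(M^c)=\pi(M)^c$ inside $\pi(\bE_d)$), and using $\pi(\Ha)=\Ia$ and $\pi(\Hb)=I_\b$, we obtain that $M'=\Ia$ in the first case, and $(M')^c=-I_\b$, i.e. $M'$ is the reversal of the edge-complement of $I_\b$, in the second case. This is the claimed dichotomy.

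The only real subtlety — and the step I would be most careful about — is the bookkeeping around $\pi$ being a bijection at the level of edge-classes: one must know that $\pi^{-1}(M')$ is a well-defined $K$-invariant set of edges of $\bE_d$ (not just of vertices), that complementation within $\bE_d$ is compatible with complementation within $\pi(\bE_d)$ under this correspondence, and that $\pi^{-1}(K')$ really has finite index and contains $\UN$. All of these are routine given that $\ker\pi=\UN\Z\subseteq K'$ pulled back, but they are the place where an error could creep in; everything else is a formal translation of the preceding corollary. Hence the proof is essentially: lift along $\pi$, apply the previous corollary, project back. \qed
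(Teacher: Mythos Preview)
Your proof is correct and follows exactly the approach of the paper: lift $M'$ to $M=\pi^{-1}(M')\subset\bE_d$, take $K=\pi^{-1}(K')$ and any $t$ with $\pi(t)=t'$, verify the three hypotheses of the previous corollary, apply it, and project back via $\pi$. The paper merely sketches this and leaves the details to the reader, whereas you have spelled out the verifications (invariance, flip compatibility, leg condition, and the complement bookkeeping), so your write-up is in fact more complete than the paper's own.
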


\begin{proof}
All we have to do is to lift $M'$ to a set $M\subset \bE_d$ using the
projection $\pi:\R^d\to \D$ in such a way that $M=\flip(M+t)$, with $\pi(t)=t'$, and to show that $M$ is invariant under the subgroup $K=\pi^{-1}(K')$ and satisfies 
$\Qzero \cap M=\{(0,e_1), (0,e_2), \ldots, (0,e_d)\}+K$. Then the corollary follows from the previous one. The details are left to the reader.
\end{proof}

Finally, we give the similar result for Christoffel parallelograms. 
We consider some parallelogram $P$ whose vertices are in $L$, and edges (which
are in $\bE'_d$) within it; these edges must be torally compatible, in the
sense that such an edge hits some edge of the parallelogram, then it reappears
on the opposite edge of the parallelogram. Such a parallelogram defines a
subgroup of finite index $K'$ of $L$ (spanned by the edges of the
parallelogram) and tiles the whole hyperplane $D$. We say that
$\flip(P)=P+t'$, for some $t'\in L$, if $P+t'$ is the parallelogram obtained
by flipping the edges of $P$ incident to zero mod $K$.

\begin{corollary}
Under the previous hypothesis, $P$ is a Christoffel parallelogram or the reversal of its edge-complement.
\end{corollary}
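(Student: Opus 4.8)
The plan is to reduce the parallelogram statement to the already-proven corollary about sets of edges $M' \subset \pi(\bE_d)$. The key observation is that a Christoffel parallelogram $P$ with torally compatible edges is nothing but a fundamental domain presentation of such a graph $M'$: the edges of $P$ span a finite-index subgroup $K'$ of $L$, and the toral compatibility condition says precisely that the set of edges of $P$, periodized by $K'$, is a well-defined $K'$-invariant subset $M'$ of $\pi(\bE_d)$. So first I would make this dictionary explicit: given $P$, define $M' = (\text{edges inside } P) + K'$, and conversely observe that any $K'$-invariant $M'$ restricts to a parallelogram fundamental domain.

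Next I would check that the hypothesis $\flip(P) = P + t'$ translates into $\flip(M') = M' + t'$ in the sense of the previous corollary. This is essentially a definitional matter: flipping the edges of $P$ incident to zero mod $K'$ is the same as applying the $\flip$ operation (with respect to $\Qzero$ defined mod $K' $) to the periodized set $M'$, because the $\flip$ only touches edges incident to $0$ mod $K'$ and leaves the body untouched. One must also verify the leg normalization $\Qzero \cap M' = \{(0,h_1),\ldots,(0,h_d)\} + K'$; as in the discussion preceding Lemma~\ref{lem:ST}, the requirement that $\flip$ preserves the number of edges in each direction $h_i$ forces exactly one of $(0,h_i)$, $(-h_i,0)$ to lie in $M'$, and up to replacing $P$ by its reflection we may assume it is $(0,h_i)$.

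Once $M'$ satisfies all the hypotheses of the previous corollary, that corollary gives that $M'$ is a graph $\Ia$ or the reversal of its edge-complement. Restricting back to the parallelogram $P$ (which is a fundamental domain for $K'$), $\Ia$ restricts to a Christoffel parallelogram by the very definition given in Section~\ref{sec:IH}, and the reversal of the edge-complement of $\Ia$ restricts to the reversal of the edge-complement of that Christoffel parallelogram. This yields the claim.

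I expect the main obstacle to be purely bookkeeping: carefully matching the ad hoc notion of ``flip of a parallelogram'' (flip the edges incident to zero mod $K$) with the formal $\flip$ operator on periodized edge sets, and checking that toral compatibility is exactly $K'$-invariance of $M'$. There is no new combinatorics here — all the real work was done in Theorem~\ref{thm:Mischristo} and its corollaries — so the proof is a short reduction, and in the spirit of the preceding corollaries the routine verifications can be left to the reader.

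\begin{proof}
As in the proof of the previous corollary, we lift $P$ to a periodized set of edges. Let $K'\subset L$ be the subgroup spanned by the edges of the parallelogram $P$; it has finite index and $P$ is a fundamental domain for $K'$, so $P$ tiles $\D$. Define
\[
M' = \{h\in\pi(\bE_d) : h \text{ is an edge of } P\} + K'.
\]
The toral compatibility of the edges of $P$ says exactly that this set is well defined and $K'$-invariant: an edge of $P$ crossing a side of the parallelogram reappears on the opposite side, which is the statement that the $K'$-translates of the edges of $P$ fit together consistently. Conversely, $M'$ restricts on the fundamental domain $P$ to the original edge set.

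The operation $P\mapsto P+t'$ obtained by flipping the edges of $P$ incident to zero mod $K'$ corresponds under this dictionary to $\flip(M') = M'+t'$ in the sense of the previous corollary, since $\flip$ modifies only the edges incident to $0$ mod $K'$ and leaves the body unchanged. Moreover, as in the discussion preceding Lemma~\ref{lem:ST}, the equality $\flip(M')=M'+t'$ forces, for each $i$, exactly one of $(0,h_i)$, $(-h_i,0)$ to belong to $M'$; replacing $P$ by its reflection if necessary, we may assume $(0,h_i)\in M'$ for all $i$, so that $\Qzero\cap M'=\{(0,h_1),\ldots,(0,h_d)\}+K'$.

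Thus $M'$ satisfies all the hypotheses of the previous corollary, hence $M'$ is a graph $\Ia$ or the reversal of its edge-complement. Restricting to the fundamental domain $P$, the graph $\Ia$ restricts to a Christoffel parallelogram by definition, and the reversal of its edge-complement restricts to the reversal of the edge-complement of that Christoffel parallelogram. Therefore $P$ is a Christoffel parallelogram or the reversal of its edge-complement. The remaining verifications are routine and left to the reader.
\end{proof}
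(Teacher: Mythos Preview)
Your proposal is correct and follows essentially the same approach as the paper: the paper's own proof is the single sentence ``It suffices to verify that $P$ defines a subset $M'$ of $\pi(\bE_d)$ satisfying the hypothesis of the previous corollary,'' and your argument is exactly this reduction carried out in more detail (periodize the edges of $P$ by $K'$, check toral compatibility gives $K'$-invariance, translate the flip condition, normalize the legs, and apply the preceding corollary).
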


\begin{proof}
It suffices to verify that $P$ defines a subset $M'$ of $\pi(\bE_d)$ satisfying the hypothesis of the previous corollary.
\end{proof}

The result is illustrated in 
Figure~\ref{fig:christo378}.

\begin{figure}[h!]
\begin{center}
\begin{tabular}{cc}
\includegraphics[width=0.3\linewidth]{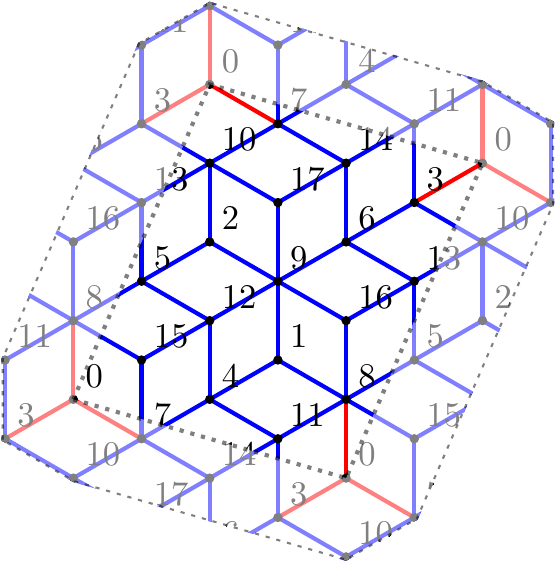} &
\includegraphics[width=0.3\linewidth]{christoffel_15_11_10_mod18.pdf}\\
$\H_{(3,7,8)}$ &
$\H_{(15,11,10),18}$
\end{tabular}
\end{center}
\caption{Left: the Christoffel graph $\Ha$ for the vector
$\a=(3,7,8)$. It satisfies the equation $M=\flip(M+t)$
for the translation vector $t=e_3-e_2$. Right: the complement of
the reversal of the Christoffel graph for the vector $\b=(3,7,8)$, i.e.
$-\Hb^c$. 
It corresponds to the Christoffel graph $\Hao$ for the vector
$\a=(15,11,10)$ and width $\omega=18$.
It satisfies the equation $M=\flip(M+t)$
for the translation vector $t=e_2-e_3$. They represent the only two
possibilities for a pattern $M$ satisfying $M=\flip(M+t)$ when $d=3$ and
$K$ is the subgroup of $\Z^3$ given by $\langle(0,4,1), (-2,0,3), (1,1,1)\rangle$.
}
\label{fig:christo378}
\end{figure}

%
%


We are now ready for the main result of this article which generalizes
Pirillo's theorem (Theorem~\ref{thm:pirillo}) to arbitrary dimension: a graph
$M\subseteq\bE_d$ is a translate of its flip if and only if it is a
Christoffel graph of width $\omega$.

\begin{theorem}
[\bf $d$-dimensional Pirillo's theorem]
\label{thm:pirilloforalld}
Let $K$ be a subgroup of finite index of $\Z^d$.
Let $M\subseteq\bE_d$ be a subset of edges invariant for the group of
translations $K$ such that the
edges of $M$ incident to zero mod $K$ are
$\Qzero\cap M=\{(0,e_i)\mid 1\leq i\leq d\}+K$.
There exists $t\in\Z^d$ such that $M=\flip(M+t)$ if and only if $M=\Hao$ is
the Christoffel graph of normal vector $\a$ and width $\omega$.
\end{theorem}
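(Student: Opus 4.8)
The forward direction (if $M=\Hao$ then some translate of $M$ equals $\flip(M)$) is already essentially established: it is Proposition~\ref{prop:translation}, which shows $\Ha+t=\flip(\Ha)$ whenever $\Fa(t)=1$; the same argument works verbatim for $\Hao$ with $\Fao$ in place of $\Fa$ (replacing $s$ by $\omega$ throughout, using Lemma~\ref{lem:edgesintervalforHan} in place of Lemma~\ref{lem:edgesinterval}), and such a $t$ exists because $\Fao$ is surjective onto $\Z/\omega\Z$. So the real content is the converse, and this is where the machinery of Propositions~\ref{prop:existbi}, Lemmas~\ref{lem:generatedbyt}, \ref{lem:MgeneratedbyST}, \ref{lem:existsdivisor} and Theorem~\ref{thm:Mischristo} has been built up.

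For the converse, I would proceed as follows. Assume $K$ has finite index, $M$ is $K$-invariant, $\Qzero\cap M=\{(0,e_i)\mid 1\le i\le d\}+K$, and $\flip(M+t)=M$ for some $t\in\Z^d$. First observe that these are exactly the hypotheses under which the whole preceding development applies: from $\flip(M)=M+t$ one extracts (as in the remark preceding Lemma~\ref{lem:ST}) that for each $i$ exactly one of $(0,e_i),(-e_i,0)$ lies in $M$, and the normalization in the hypothesis fixes this to be $(0,e_i)$. Then Proposition~\ref{prop:existbi} produces the integers $b_i$ with $0<b_i<\omega$ (where $\omega=\mathrm{order}_{\Z^d/K}(t)$) and $e_i+b_it\in K$, together with the precise description of the orbit of each leg; Lemma~\ref{lem:MgeneratedbyST} then shows $M$ is completely determined by $t$, $K$ and the vector $\b=(b_1,\dots,b_d)$. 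Setting $a_i=\omega-b_i$ and $\a=(a_1,\dots,a_d)$, Lemma~\ref{lem:existsdivisor} gives $\omega q=\sum a_i$ and $\omega\ell=\sum b_i$ with $q+\ell=d$ and $0<q,\ell<d$; in particular $0<\omega q<\omega d$ forces... wait, one should note $q\geq1$ and so $\omega\leq\sum a_i$, consistent with $\omega$ being the width. The key point is that $s/\omega=q$ satisfies $0<q<d$, exactly the constraint imposed on widths in the definition of $\Hao$.

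The crux is then to identify $M$ with $\Hao$. I would split on the two symmetric possibilities. Theorem~\ref{thm:Mischristo}(i) handles the case $\sum a_i=\omega$ (i.e. $q=1$), giving $M=\Ha=\Ha_{,\omega}$ outright. For $q\geq2$ the situation is that $\Fao$ (defined via $\sum a_ix_i\bmod\omega$) has image of size $\omega$ but $\sum a_i = q\omega > \omega$, which is precisely the "superposition of $q$ discrete planes of width $\omega$" regime the paper flagged when introducing $\Hao$. I would show directly, mimicking the computation in the proof of Theorem~\ref{thm:Mischristo}(i), that the isomorphism $\Z^d/K\cong\Z/\omega\Z$ sending $t\mapsto 1$ identifies $u\in\Z^d$ with $k=\sum(-b_ix_i)\bmod\omega=\sum a_ix_i\bmod\omega=\Fao(u)$ (using $a_i\equiv -b_i\pmod\omega$), and then the orbit description from Proposition~\ref{prop:existbi} reads $(u,u+e_i)\in M\iff 0\le\Fao(u)<b_i=\omega-a_i\iff\Fao(u)\in[0,\omega-a_i-1]$, which by Lemma~\ref{lem:edgesintervalforHan} is exactly the condition $(u,u+e_i)\in\Hao$. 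This gives $M=\Hao$ for the vector $\a$ and width $\omega$ in all cases. (The alternative $\ell=1$ branch of the earlier corollary, giving $M^c=-\Hb$, is not a separate case here because $-\Hb^c$ is itself a Christoffel graph $\Hao$ of some width, as the caption of Figure~\ref{fig:christo378} illustrates with $\a=(15,11,10)$, $\omega=18$; I would either invoke this identification or simply note that the direct computation above never needed $q=1$.)

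\textbf{Main obstacle.} The delicate step is the bookkeeping around $\omega$ versus $\sum a_i$: one must be careful that $\Fao$ is well-defined on $\Z^d/K$ (which needs $(1,\dots,1)\in K$, given) and that the identification $\Z^d/K\cong\Z/\omega\Z$, $t\mapsto1$, is consistent with $\Fao$. Concretely, $\Fao(t)$ should equal $1$: since $a_it\equiv e_i$ and $\Fao(e_i)=a_i\bmod\omega$, one gets $a_i\Fao(t)\equiv a_i$, and as the $a_i$ are relatively prime this pins down $\Fao(t)\equiv1$. Verifying that $\gcd(a_1,\dots,a_d)=1$ (so that $\Fao$ really has image all of $\Z/\omega\Z$ and $\a$ is an admissible normal vector) is the one genuinely new check: it follows because $t$ generates $\Z^d/K$ (Lemma~\ref{lem:generatedbyt}) and $e_i\equiv a_it$, so the $a_i$ generate $\Z/\omega\Z$, i.e. $\gcd(a_i,\omega)$ over all $i$ equals... more precisely $\gcd(a_1,\dots,a_d,\omega)=1$; combined with $q\mid\omega$ being forced by nothing and $\sum a_i=q\omega$, a short argument upgrades this to $\gcd(a_1,\dots,a_d)=1$. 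Once these arithmetic compatibilities are in place the rest is the substitution already carried out in Theorem~\ref{thm:Mischristo}.
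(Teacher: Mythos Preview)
Your approach is essentially the paper's own: for the converse you invoke Proposition~\ref{prop:existbi} and Lemma~\ref{lem:generatedbyt} to identify $u\equiv kt$ with $k=\sum(-b_ix_i)\bmod\omega=\sum a_ix_i\bmod\omega=\Fao(u)$, and then Lemma~\ref{lem:edgesintervalforHan} finishes; for the forward direction you rerun Proposition~\ref{prop:translation} with $\omega$ in place of $s$. This is exactly what the paper does. Your detour through the case split $q=1$ (Theorem~\ref{thm:Mischristo}(i)) versus $q\ge2$ is unnecessary, as you yourself observe at the end: the direct computation you describe for $q\ge2$ is valid for all $q$ and is precisely the paper's argument.

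One caution about your ``main obstacle'': the short argument you allude to, upgrading $\gcd(a_1,\dots,a_d,\omega)=1$ to $\gcd(a_1,\dots,a_d)=1$, does not exist in general. Take $d=4$, $\omega=5$, and $K=\ker\bigl((x_1,x_2,x_3,x_4)\mapsto 2x_1+2x_2+2x_3+4x_4\bmod5\bigr)$; then $(1,1,1,1)\in K$, the quotient is cyclic of order $5$, and with $t$ chosen so that $\Fao(t)=1$ one finds $\a=(2,2,2,4)$, $\b=(3,3,3,1)$, $q=2$, and $\gcd(a_1,\dots,a_4)=2$. The identity $M=\Hao$ still holds for this $\a$ and $\omega$; the coprimality hypothesis in the paper's definition of $\Hao$ is simply not forced by the hypotheses of the theorem, and the paper does not verify it either. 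So your proof is fine, but drop the claim that $\gcd(\a)=1$ can be recovered.
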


\begin{proof}
Suppose $M=\flip(M+t)$ for some $t\in\Z^d$.
From Lemma~\ref{lem:generatedbyt},
for all $u=(x_1,x_2,\cdots,x_d)\in\Z^d$ there exists an integer $k$ such that
$u=kt$ with
\[
k
= \sum (-b_ix_i)\bmod \omega
= \sum (a_i-\omega)x_i\bmod \omega
= \sum a_ix_i\bmod \omega
= \Fao(u).
\]
We want to show that $M=\Hao$.
We have that $(u,u+e_i)=(0,e_i)+kt\in M$ if and only if $0\leq k<b_i$ if and
only if $\Fao(u)\in[0,\omega-a_i-1]$ if and only if $(u,u+e_i)\in\Hao$ from
Lemma~\ref{lem:edgesintervalforHan}.

Reciprocally,
suppose $\Hao$ is
the Christoffel graph of normal vector $\a$ of width $\omega$.
We can show that $\Hao+t=\flip(\Hao)$ where $t\in\Z^d$ is such that
$\Fao(t)=1$.
The proof goes along the same lines as Proposition~\ref{prop:translation}
using Lemma~\ref{lem:edgesintervalforHan} instead of
Lemma~\ref{lem:edgesinterval}. 
\end{proof}


\section{Appendix: Discrete planes}\label{sec:observ}

In this section, we show some results on standard discrete planes.  Discrete
planes were introduced in \cite{Reveilles_1991} and standard discrete planes
were further studied in \cite{MR1382845}.  The projection of a standard
discrete plane gives a tiling of $\D$ by three kinds of rhombus
\cite{MR1782038} thus yielding a coding of it by $\Z^2$-actions by rotations on
the unit circle \cite{MR1906478,MR2330996}.
Our construction of the discretized hyperplane is equivalent, for the dimension
3, to that in \cite{MR1906478}. 
Our point of view is slightly different from the classical one;
inspired by the 2-dimensional case (discrete lines), we
define a discrete hyperplane by ``what the observer sees": the observer is at
$-\infty$ in the direction $(1,1,\ldots,1)$ and he looks towards the
``transparent" hyperplane all the unit hypercubes which are located on the
other side. This may be modelled mathematically; all the results are
intuitively clear, but require a proof. We prove them, since we could not find
precise references. We recover some known results.

Imagine the $d$-dimensional space filled with unit hypercubes with opaque faces.
Consider a transverse hyperplane generated by its integer points (formally, of
equation $\sum a_i x_i=0$, $a_i>0$ coprime integers). As an observer, we install
in the open half-space $H_-$ bounded by the plane. Then, we remove all the cubes in this half-space,
including the cubes intersecting this half-space; in other words, we keep only
the cubes contained in $H_+$. 
Figure~\ref{fig:graph25observer} illustrates this construction for $d=2$.
\begin{figure}[h]
\begin{center}
\includegraphics{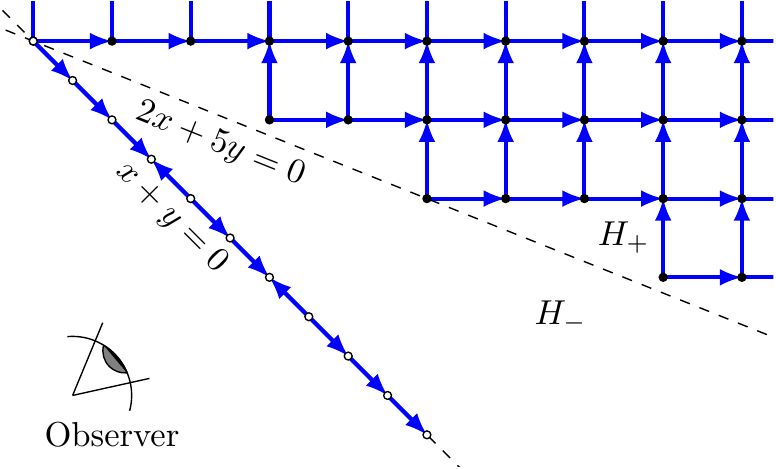}
\end{center}
\caption{Observation in dimension 2. What the observer sees can be projected
 parallel to the vector $(1,1)$ on the line $x+y=0$.}
\label{fig:graph25observer}
\end{figure}
For $d=3$, when we look towards $H_+$ parallely to the vector $(1,1,1)$, then
we see something like in Figure~\ref{fig:graph235}.
\begin{figure}[h]
\begin{center}
\includegraphics{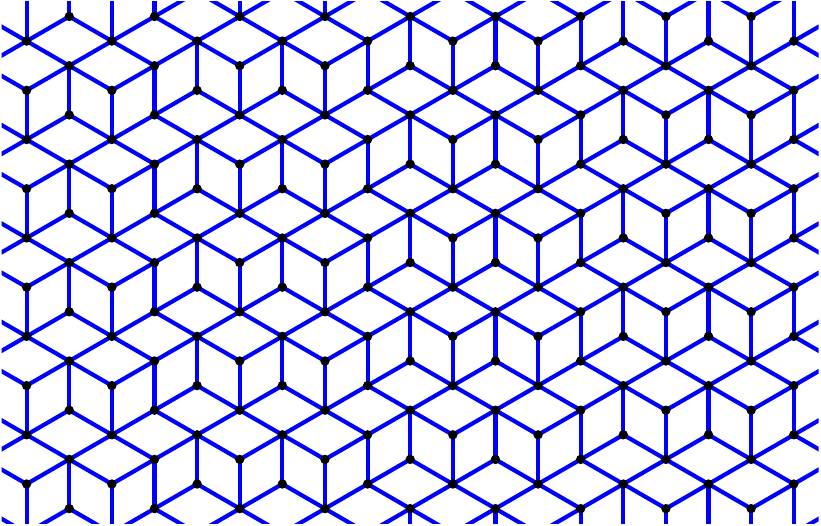}
\end{center}
\caption{What the observer sees in dimension 3. The surface of cubes was projected
parallel to the vector $(1,1,1)$ on the plane $x+y+z=0$.}
\label{fig:graph235}
\end{figure}

Let $s$ be the sum $s=\sum_i a_i$.
We denote $\a=(a_1,a_2,\ldots,a_d) \in \Z^d$.
The complement of $H$ has two connected components $H_-$ and $H_+$, where the first is determined by the inequation $\sum_ia_ix_i<0$.
We consider the unit cubes in $\R^d$ and their facets. Such a facet is a subset of $\R^d$ of the form $M+\sum_{j\neq i} [0,1]e_j$, for some coordinate $i\in \{1,\ldots,d\}$ and some integral point $M\in \Z^d$. Denote by $\cal C$ the standard unit cube.

Consider the unit hypercubes that are contained in the closed half-space $H\cup H_+$ and their facets; denote by $\cal U_+$ the union of all these facets.
Note that a unit cube $M+\cal C$ ($M\in \Z^d$) is contained in $H\cup H_+$ if and only if $M\in H\cup H_+$ if and only if $\sum_ja_jm_j\geq 0$.

We say that a point $M$ in $\R^d$ is {\em visible} if the open half-line $M+]-\infty,0[ (1,1,\ldots,1)$ does not contain any point in $\cal U_+$. Intuitively, this means that, all facets being opaque, 
that an observer located at infinity in the direction of the vector $-(1,1,\ldots,1)$ can see this point $M$, because no point in $\cal U_+$ hides this point. 

Now, we consider the set of visible points which belong to $\cal U_+$. This we may call the {\em discretized hyperplane associated to } $H$. Intuitively, it is the set of facets that the observer can sees, 
as is explained in the introduction.

We characterize now the discretized hyperplane. For this, we denote by $R$ the following subset of $\R^d$: $R=\{(x_i) \mid 0\leq \sum_ia_ix_i <s\}$.
Note that $R\subset H\cup H_+$.

Denote by $\S$ the union of the facets that are contained in $R$. In other
words, $${\S}=\bigcup_{M\in \Z^d,1\leq i\leq d, M+\sum_{j\neq i} [0,1]e_j\subset R} (M+\sum_{j\neq i} [0,1]e_j).$$ 

Observe that the condition $M+\sum_{j\neq i} [0,1]e_j\subset R$ is equivalent to: $\sum_ja_jm_j\geq 0$ and $a_im_i+\sum_{j\neq i}a_j(m_j+1) <s$.
Note also that $\S \subset \cal U_+$.

\begin{theorem}
The discretized hyperplane is equal to $\S$.
\end{theorem}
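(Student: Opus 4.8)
The plan is to prove the two inclusions $\S \subseteq \mathcal{U}_+^{\rm vis}$ and $\mathcal{U}_+^{\rm vis} \subseteq \S$, where $\mathcal{U}_+^{\rm vis}$ denotes the set of visible facets in $\mathcal{U}_+$ (the discretized hyperplane). The whole argument rests on one local observation: along a line of direction $(1,1,\ldots,1)$, the relevant facets of the cube decomposition are pierced in a controlled order, and the quantity $\sum_j a_j x_j$ is strictly increasing along such a line (since all $a_j>0$). So "visibility" translates into a statement about which facet, along each such line, is the first one met.

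First I would set up the combinatorial picture. Fix a facet $\Phi = M + \sum_{j\neq i}[0,1]e_j$, and consider the line $L$ through an interior point of $\Phi$ in direction $(1,1,\ldots,1)$. Moving in the direction $-(1,1,\ldots,1)$ from a point of $\Phi$, the line successively crosses facets of the cube decomposition; I would show that the facets of $\mathcal{U}_+$ met by $L$ are linearly ordered by the value of $\sum_j a_j x_j$ on them, and that the visible facet of $\mathcal{U}_+$ on $L$ is exactly the one minimizing this value among facets in $\mathcal{U}_+$. The key numeric fact is: if $\Phi = M + \sum_{j\neq i}[0,1]e_j$, the values of $\sum_j a_j x_j$ over $\Phi$ range over the half-open interval $[\sum_j a_j m_j + a_{\min(\neq i)}?, \ldots]$ — more precisely the closure is $[\sum_{j\neq i} a_j m_j + a_i m_i, \ \sum_{j\neq i} a_j(m_j+1) + a_i m_i]$, i.e. an interval of length $\sum_{j\neq i} a_j = s - a_i$. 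Containment $\Phi\subset R$ is then exactly the inequalities recorded just before the theorem: $\sum_j a_j m_j \geq 0$ and $a_i m_i + \sum_{j\neq i} a_j(m_j+1) < s$.

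For $\S \subseteq \mathcal{U}_+^{\rm vis}$: let $\Phi \subset R$, so $\Phi \subset H\cup H_+$ and hence $\Phi\subset\mathcal{U}_+$; I must check $\Phi$ is visible, i.e. the open half-line below an interior point of $\Phi$ misses $\mathcal{U}_+$. Any point of $\mathcal{U}_+$ on that half-line lies on some facet $\Phi'$ with smaller value of $\sum_j a_j x_j$; using that the value on $\Phi$ is $<s$ and the value on $\Phi'$ would be $\geq 0$ but the two facets are "stacked" along a $(1,\ldots,1)$-line at distance making the gap in $\sum a_j x_j$ at least $s$ (because passing from one facet transversal to the line to the next one, after traversing a full cube, increases $\sum a_j x_j$ by $s$), I would derive a contradiction with $0 \le \sum a_j x_j|_{\Phi'} < \sum a_j x_j|_\Phi < s$. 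Conversely, for $\mathcal{U}_+^{\rm vis}\subseteq\S$: let $\Phi\in\mathcal{U}_+$ be visible; I show $\Phi\subset R$. Since $\Phi\subset\mathcal U_+$ we already have the lower bound $\sum a_j m_j\ge0$; if the upper inequality failed, the value $\sum_j a_j x_j$ on $\Phi$ would reach $\ge s$ somewhere, and then shifting $\Phi$ down by $(1,\ldots,1)$ (or exhibiting an intermediate facet of a cube contained in $H\cup H_+$) produces a facet of $\mathcal U_+$ strictly below an interior point of $\Phi$ on the $(1,\ldots,1)$-line, contradicting visibility.

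The main obstacle I expect is the careful bookkeeping in the "stacking" step: making rigorous the claim that between consecutive facets of $\mathcal U_+$ met by a generic $(1,\ldots,1)$-line the value of $\sum a_j x_j$ jumps by exactly $s$ (and, for a generic line avoiding lower-dimensional skeleta, that the line meets exactly one facet per unit cube it traverses). Genericity of the base point in $\Phi$ handles the measure-zero bad directions, but one must phrase everything so that it applies to \emph{every} interior point of $\Phi$, or reduce to a dense set and then argue $\S$ and the discretized hyperplane are both closed unions of facets. I would dispatch this by working facet-by-facet: it suffices to compare, for each coordinate direction $i$ and each facet $\Phi$ normal to $e_i$, the facet $\Phi$ with its translate $\Phi - (1,1,\ldots,1)$ and with the "intermediate" facets of the two cubes adjacent to $\Phi$, all of whose $\sum a_j x_j$-values differ from those on $\Phi$ by explicit amounts between $1$ and $s$; the visibility condition is precisely that none of the intermediate ones, nor $\Phi-(1,\ldots,1)$, lies in $H\cup H_+$, and a short case analysis on the signs converts this into the pair of inequalities defining $R$.
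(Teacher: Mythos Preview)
Your overall strategy—analyze visibility along lines of direction $(1,\ldots,1)$ and exploit monotonicity of $\sum_j a_j x_j$—is the paper's strategy too, but the key step you rely on is false. You assert that between consecutive facets of $\mathcal U_+$ on a generic $(1,\ldots,1)$-line the value of $\sum_j a_j x_j$ jumps by $s$ (``after traversing a full cube''); in fact such a line meets $d$ facets per unit of $t$, and if the cube $(M-e_i)+\mathcal C$ immediately below $\Phi$ happens to lie in $H\cup H_+$, \emph{all} of its facets belong to $\mathcal U_+$ and are hit at increments $a_k<s$. So the contradiction you sketch for $\S\subseteq\mathcal U_+^{\rm vis}$ does not close. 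The case analysis you defer to is also mis-scoped: the intermediate facets to inspect are not those of ``the two cubes adjacent to $\Phi$'' but of every cube the line traverses between $\Phi$ and $\Phi-(1,\ldots,1)$.

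The paper avoids this bookkeeping by first establishing a pointwise characterization (Proposition~\ref{characF}): a point $X$ with some integer coordinate lies in $\S$ iff $\sum_j a_j\lfloor x_j\rfloor\ge 0$ and $\sum_j a_j\lceil x_j\rceil<s$. The visibility argument then becomes a one-line ceiling computation: if $Y\in\mathcal U_+$ lies strictly below $X\in\S$, then $Y\in N+\mathcal C$ for some $N$ with $\sum_j a_j n_j\ge0$, whence $x_j>y_j\ge n_j$ forces $\lceil x_j\rceil\ge n_j+1$ and $\sum_j a_j\lceil x_j\rceil\ge s$, contradicting $X\in\S$. Your ``intermediate facet'' idea for the reverse inclusion $\mathcal U_+^{\rm vis}\subseteq\S$ is sound—the upper inequality for $\Phi\subset R$ fails exactly when $(M-e_i)+\mathcal C\subset H\cup H_+$, and then any interior point of $\Phi$ is blocked by the exit facet of that cube—but for the forward inclusion you need either the ceiling trick above or the equivalent observation that once $(M-e_i)+\mathcal C\notin H\cup H_+$, every cube the line meets further down is excluded as well.
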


Observe that if we project $\S$ onto the hyperplane perpendicular to the vector $(1,1,\ldots,1)$, we obtain exactly what the observer sees.

An example of this, for $d=3$, is given in Figure~\ref{fig:graph235}.
This observation motivates the introduction of the graph $\Ia$ in
Section~\ref{sec:Ia}.

We first give an simple characterization of $\S$.

\begin{proposition}\label{characF}
Let $X=(x_i)\in \R^d$. Then $X$ is in $\S$ if and only if the three conditions below hold:

(i) some coordinate of $X$ is an integer;

(ii)$ \sum_i a_i \lfloor x_i \rfloor \geq 0$;

(iii) $\sum_i a_i \lceil x_i \rceil <s$.
\end{proposition}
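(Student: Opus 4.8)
\textbf{Proof plan for Proposition~\ref{characF}.}

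The plan is to prove both implications by unwinding the definition of $\S$ as the union of facets $M+\sum_{j\neq i}[0,1]e_j$ contained in $R$, and matching this up with the three stated conditions on a point $X=(x_i)$. The key preliminary observation (already recorded just before the statement) is that a facet $F = M + \sum_{j\neq i}[0,1]e_j$ with $M=(m_1,\ldots,m_d)\in\Z^d$ lies in $R$ if and only if $\sum_j a_j m_j \geq 0$ and $a_i m_i + \sum_{j\neq i} a_j(m_j+1) < s$; here the first inequality comes from evaluating the linear form $\sum_j a_j x_j$ at the ``lowest'' corner of $F$ (all free coordinates equal to $0$) and the second from the ``highest'' corner (all free coordinates equal to $1$), using that $a_j>0$ so the linear form is monotone in each free coordinate.

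First I would prove the forward direction. Suppose $X \in \S$, so $X$ belongs to some facet $F = M + \sum_{j\neq i}[0,1]e_j \subset R$. Then $x_i = m_i \in \Z$, giving (i); and for $j\neq i$ we have $x_j \in [m_j, m_j+1]$, so $\lfloor x_j\rfloor \in\{m_j, \text{possibly } m_j+1 \text{ if } x_j=m_j+1\}$ — to avoid this annoyance I would instead argue directly with the linear form: since $a_j>0$, we get $\sum_j a_j x_j \geq \sum_j a_j m_j$ with $m_i = x_i$, and $\sum_j a_j m_j \geq 0$ by the facet-in-$R$ criterion, so $\sum_j a_j x_j \geq 0$; but $\sum_j a_j\lfloor x_j\rfloor = \sum_j a_j x_j - \sum_j a_j\{x_j\}$, and since each $\{x_j\}<1$ and $\lfloor x_i\rfloor = x_i$, one checks $\sum_j a_j\lfloor x_j\rfloor > \sum_j a_j x_j - s \geq -s$, hence $\geq 0$ once we know it is $\geq \sum_j a_j m_j$; cleaner: $\lfloor x_j\rfloor \geq m_j$ for all $j$ (with equality at $j=i$), so (ii) follows immediately. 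Symmetrically $\lceil x_j\rceil \leq m_j+1$ for $j\neq i$ and $\lceil x_i\rceil = m_i$, so $\sum_j a_j\lceil x_j\rceil \leq a_i m_i + \sum_{j\neq i} a_j(m_j+1) < s$, giving (iii).

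For the converse, suppose $X=(x_i)$ satisfies (i)--(iii); let $i$ be a coordinate with $x_i\in\Z$ and set $m_j = \lfloor x_j\rfloor$ for all $j$ (so $m_i = x_i$), and $M=(m_j)$. Then $X \in M + \sum_{j\neq i}[0,1]e_j =: F$ since each $x_j\in[m_j,m_j+1]$ and $x_i=m_i$. It remains to check $F\subset R$, i.e. the two-part criterion: the first part $\sum_j a_j m_j = \sum_j a_j\lfloor x_j\rfloor \geq 0$ is exactly (ii); for the second part I need $a_i m_i + \sum_{j\neq i} a_j(m_j+1) < s$, and since $\lceil x_j\rceil = m_j$ or $m_j+1$ according as $x_j\in\Z$ or not, we have $\lceil x_j\rceil \leq m_j+1$ for $j\neq i$ and $\lceil x_i\rceil = m_i$, so $a_i m_i + \sum_{j\neq i}a_j(m_j+1) \geq a_i\lceil x_i\rceil + \sum_{j\neq i}a_j\lceil x_j\rceil = \sum_j a_j\lceil x_j\rceil$ — wrong direction. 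I instead want an \emph{upper} bound, so I should note the inequality goes the other way only if some $x_j$ is an integer; to handle this I would pick $i$ more carefully or argue that replacing $m_j+1$ by $\lceil x_j\rceil$ can only decrease the left side when $x_j\notin\Z$ and leaves it unchanged otherwise, hence $a_i m_i+\sum_{j\neq i}a_j(m_j+1) \leq$ ... no: $m_j+1 \geq \lceil x_j\rceil$ always, so in fact we only get $a_im_i+\sum_{j\neq i}a_j(m_j+1)$ could exceed $s$.

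\emph{The main obstacle}, then, is precisely this: when several coordinates of $X$ are integers, the naive choice of facet need not lie in $R$, and one must choose the right distinguished integer coordinate $i$. The fix is to choose $i$ to be an integer coordinate of $X$ and to define $m_j=\lfloor x_j\rfloor$ for $j\neq i$ but keep track that for those other $j$ with $x_j\in\Z$ we have a choice of facet; the cleanest resolution is to show that $X$ lies on \emph{some} facet in $R$ by an argument on the linear form: from (iii), $\sum_j a_j\lceil x_j\rceil < s$; now perturb $X$ slightly to $X' = X - \varepsilon(1,\ldots,1)$ for small $\varepsilon>0$ so that $X'$ has a \emph{unique} integer coordinate (namely none, unless we are careful) — better, perturb so that exactly the coordinate $i$ stays controlled. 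Concretely I would set, for the chosen integer coordinate $i$: $m_i := x_i$ and for $j \neq i$, $m_j := \lceil x_j \rceil - 1$ (so $m_j = \lfloor x_j\rfloor$ if $x_j\notin\Z$ and $m_j = x_j - 1$ if $x_j\in\Z$); then $m_j+1 = \lceil x_j\rceil$ for $j\neq i$, so the second criterion becomes $a_i\lceil x_i\rceil + \sum_{j\neq i}a_j\lceil x_j\rceil = \sum_j a_j\lceil x_j\rceil < s$, which is (iii). For the first criterion, $\sum_j a_j m_j = a_i x_i + \sum_{j\neq i} a_j(\lceil x_j\rceil - 1) \geq a_i\lfloor x_i\rfloor + \sum_{j\neq i}a_j\lfloor x_j\rfloor = \sum_j a_j\lfloor x_j\rfloor \geq 0$ by (ii), using $\lceil x_j\rceil - 1 \geq \lfloor x_j\rfloor$. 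And $X \in F$ since $x_j \in [\lceil x_j\rceil - 1, \lceil x_j\rceil] = [m_j, m_j+1]$ for $j\neq i$ and $x_i = m_i$. This establishes $X\in\S$, completing the proof; the only subtlety worth spelling out is this asymmetric choice $m_j = \lceil x_j\rceil - 1$, which simultaneously respects (ii) via floors and (iii) via ceilings.
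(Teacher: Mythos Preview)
Your forward direction is fine and matches the paper's.

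Your converse has a genuine gap. The inequality $\lceil x_j\rceil - 1 \geq \lfloor x_j\rfloor$ that you invoke to verify the first criterion $\sum_j a_j m_j\geq 0$ is \emph{false} when $x_j\in\Z$: in that case $\lceil x_j\rceil - 1 = x_j - 1 < x_j = \lfloor x_j\rfloor$. So your choice $m_j = \lceil x_j\rceil - 1$ for $j\neq i$ can give $\sum_j a_j m_j < \sum_j a_j\lfloor x_j\rfloor$, and in particular $\sum_j a_j m_j$ may be negative even though (ii) holds. Concretely: take $d=2$, $a_1=a_2=1$, $s=2$, and $X=(0,0)$. Conditions (i)--(iii) hold, but with $i=1$ your recipe gives $M=(0,-1)$ and $\sum_j a_j m_j = -1 < 0$, so your facet is not contained in $R$; the same happens with $i=2$. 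The point $(0,0)$ \emph{does} lie in $\S$ (for instance on the facet $(0,0)+[0,1]e_2$), but your construction misses it.

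The real difficulty is exactly the one you yourself flagged: when several coordinates of $X$ are integers, for each such coordinate $j\neq i$ one must decide between $m_j = x_j$ and $m_j = x_j - 1$, and neither uniform choice works in general. The paper handles this with a hybrid choice. After permuting coordinates so that $x_1,\ldots,x_i\in\Z$ and $x_{i+1},\ldots,x_d\notin\Z$, it lets $p\in\{0,1,\ldots,i\}$ be \emph{maximal} subject to
\[
\sum_{j\leq p}a_j(x_j-1)+\sum_{j>p}a_j\lfloor x_j\rfloor\ \geq\ 0,
\]
which exists since $p=0$ is admissible by (ii). Condition (iii) is used to rule out $p=i$, so $p+1\leq i$ and $x_{p+1}\in\Z$. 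One then takes $M=(x_1-1,\ldots,x_p-1,\lfloor x_{p+1}\rfloor,\ldots,\lfloor x_d\rfloor)$ and the facet perpendicular to $e_{p+1}$: the first criterion holds by the choice of $p$, and the second follows from the \emph{maximality} of $p$ (failure at $p+1$ gives exactly the needed strict inequality). You should replace your final paragraph with an argument of this kind.
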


We recover Proposition 1 of \cite{MR1906478}. 
\begin{corollary}
    Let $X=(x_i)\in \Z^d$. Then $X$ is in $\S$ if and only if $0\leq \sum_ia_ix_i<s$.
\end{corollary}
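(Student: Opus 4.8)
The plan is to prove the corollary directly from Proposition~\ref{characF}, restricting the three stated conditions to the case where $X=(x_i)$ lies in $\Z^d$. The key observation is that for an integral point, all coordinates are integers, so condition (i) is automatically satisfied and moreover $\lfloor x_i\rfloor = \lceil x_i\rceil = x_i$ for every $i$. Hence conditions (ii) and (iii) simplify to $\sum_i a_i x_i \geq 0$ and $\sum_i a_i x_i < s$ respectively, which together are exactly the inequality $0\leq \sum_i a_i x_i < s$.

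The argument therefore has two short directions. For the forward implication, if $X\in\Z^d$ lies in $\S$, then by Proposition~\ref{characF} conditions (ii) and (iii) hold, and since each $x_i$ is an integer, these read $\sum_i a_i x_i\geq 0$ and $\sum_i a_i x_i < s$, giving $0\leq\sum_i a_i x_i<s$. For the converse, suppose $X\in\Z^d$ satisfies $0\leq\sum_i a_i x_i<s$. Then condition (i) of Proposition~\ref{characF} holds trivially (indeed every coordinate is an integer), condition (ii) is $\sum_i a_i x_i\geq 0$, and condition (iii) is $\sum_i a_i x_i < s$; all three hold, so $X\in\S$.

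There is essentially no obstacle here: the only thing to be careful about is that Proposition~\ref{characF} is stated for arbitrary $X\in\R^d$ and uses floors and ceilings, so one must explicitly invoke $\lfloor x_i\rfloor=\lceil x_i\rceil=x_i$ for integral coordinates to collapse (ii) and (iii) to the desired form, and note that (i) is vacuously true. This is why the corollary is an immediate consequence rather than requiring any new idea.
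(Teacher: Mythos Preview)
Your proof is correct and is exactly the intended argument: the paper does not give a separate proof of this corollary, treating it as immediate from Proposition~\ref{characF}, and your derivation is precisely how that immediacy unfolds (integral $X$ makes condition (i) trivial and collapses $\lfloor x_i\rfloor=\lceil x_i\rceil=x_i$ in (ii) and (iii)).
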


\begin{proof}[Proof of the proposition.]
Suppose that $X\in \S$. Then $X\in M+\sum_{j\neq i} [0,1]e_j\subset \S$ and the coordinates $m_j$ of $M$ are integers. Thus, by an observation made previously,  $0\leq \sum_ja_jm_j\leq \sum_ja_j\lfloor x_j \rfloor$, since $x_j=m_j+\theta_j$, with $0\leq\theta_j\leq 1$ and $\theta_i=0$. Moreover, $\lceil x_i \rceil=m_i$, and $\lceil x_j \rceil\leq m_j+1$ if $j\neq i$. Thus, $ \sum_j a_j \lceil x_j \rceil \leq a_im_i+\sum_{j\neq i}(m_j+1) <s$, by the same observation.

Conversely, suppose that the three conditions of the proposition hold. Without restricting the generality (by 
permutation of the coordinates), we may assume that for some $i\in \{1,\ldots,d\}$, one has $x_i,
\ldots,x_i\in \Z$ and $x_{i+1},\ldots,x_d\notin \Z$. Let $0\leq p\leq i$ be maximum subject to the condition $\sum_{j\leq p}a_j(x_j-1)+\sum_{j> p}a_j\lfloor x_j\rfloor \geq 0$ (note that $p$ exists since the inequality is satisfied for $p=0$). Suppose that $p=i$; then $\sum_j a_j \lceil x_j \rceil =\sum_{j\leq i}a_jx_j+\sum_{j>i}a_j(\lfloor x_j \rfloor+1)=a_1+\cdots+a_d+\sum_{j\leq i}a_j(x_j-1)+\sum_{j>i}a_j\lfloor x_j \rfloor \geq a_1+\cdots+a_d$ (since $p=i$) $=s$; thus we obtain a contradiction with condition (iii).

Thus $p<i$ and $p+1\leq i$. We have by maximality the inequality $\sum_{j\leq p+1}a_j(x_j-1)+\sum_{j> p+1}a_j\lfloor x_j\rfloor < 0$.
 Let $M=(m_j)=(x_1-1,\ldots,x_p-1,\lfloor x_{p+1} \rfloor,\ldots,\lfloor x_d \rfloor \in \Z^d$. We have $\sum_ja_jm_j\geq 0$ (by definition of $p$) and $a_{p+1}m_{p+1}+\sum_{j\neq p+1}a_j(m_j+1)=\sum_{j\leq p}(a_j(x_j-1)+a_j)+a_{p+1}(x_{p+1}-1)+a_{p+1}+\sum_{j> p+1}(a_j\lfloor x_j\rfloor+a_j)
 =s+\sum_{j\leq p+1}a_j(x_j-1)+\sum_{j> p+1}a_j\lfloor x_j\rfloor  <s$, by the previous inequality.
 Thus $M+\sum_{j\neq i} [0,1]e_j\subset \S$, by the observation made above. Moreover $X\in M+\sum_{j\neq i} [0,1]e_j$ since $p+1\leq i$.
\end{proof}

\begin{corollary}\label{XY}
For each point $X$ in $\R^d$, there is a unique point $Y$ in $\S$ such that $XY$ is parallel to the vector $(1,1,\ldots,1)$.
\end{corollary}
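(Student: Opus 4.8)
The plan is to read the statement as an existence-and-uniqueness claim about the intersection of the line $X + \R\,(1,1,\ldots,1)$ with the set $\S$, and to derive it directly from the characterization of $\S$ given in Proposition~\ref{characF}. Parametrize points on the line by $Y(\lambda) = X + \lambda(1,1,\ldots,1)$ for $\lambda\in\R$, and set $f(\lambda) = \sum_i a_i(x_i+\lambda) = \sum_i a_i x_i + s\lambda$. Since all $a_i>0$ and $s = \sum_i a_i>0$, the function $f$ is a strictly increasing affine bijection $\R\to\R$. The three conditions of Proposition~\ref{characF} for $Y(\lambda)\in\S$ are: (i) some coordinate $x_i+\lambda$ is an integer; (ii) $\sum_i a_i\lfloor x_i+\lambda\rfloor \ge 0$; (iii) $\sum_i a_i\lceil x_i+\lambda\rceil < s$.

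First I would address existence. The set of $\lambda$ satisfying (i) is $\bigcup_i(\Z - x_i)$, a discrete, bi-infinite, locally finite subset of $\R$; list these special values in increasing order as $\cdots<\lambda_{-1}<\lambda_0<\lambda_1<\cdots$. On each closed interval $[\lambda_k,\lambda_{k+1}]$, every floor $\lfloor x_i+\lambda\rfloor$ is constant (at a possibly larger value at the right endpoint, but constant on $[\lambda_k,\lambda_{k+1})$), so $g(\lambda):=\sum_i a_i\lfloor x_i+\lambda\rfloor$ is a nondecreasing step function, right-continuous, increasing by exactly $a_i$ (or by a sum of such, in the degenerate case several coordinates become integral simultaneously) as $\lambda$ crosses each $\lambda_k$; similarly $h(\lambda):=\sum_i a_i\lceil x_i+\lambda\rceil$ is nondecreasing and left-continuous, and at a special value $\lambda_k$ where exactly the coordinate set $J$ becomes integral one has $h(\lambda_k) = g(\lambda_k) + \sum_{i\notin J}a_i$, hence $h(\lambda_k) - g(\lambda_k) = s - \sum_{i\in J}a_i \le s - \min_i a_i < s$. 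As $\lambda\to-\infty$, $g(\lambda)\to-\infty$, so (ii) fails; as $\lambda\to+\infty$, $h(\lambda)\to+\infty$, so (iii) fails. Let $\lambda^\ast$ be the smallest special value with $g(\lambda^\ast)\ge 0$; this exists because $g$ is a step function tending to $-\infty$ at $-\infty$ and $+\infty$ at $+\infty$, jumping up only at special values. Then (i) holds at $\lambda^\ast$, (ii) holds at $\lambda^\ast$, and for (iii): just before $\lambda^\ast$ we had $g<0$, and $g$ jumped by at most $\sum_{i\in J}a_i$ across $\lambda^\ast$, so $g(\lambda^\ast)\le \sum_{i\in J}a_i - 1$, whence $h(\lambda^\ast) = g(\lambda^\ast) + (s-\sum_{i\in J}a_i)\le s-1 < s$. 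Thus $Y(\lambda^\ast)\in\S$.

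For uniqueness, suppose $\lambda<\mu$ are two special values with $Y(\lambda),Y(\mu)\in\S$. From $Y(\mu)\in\S$ and condition (iii), $h(\mu)<s$; from $Y(\lambda)\in\S$ and condition (ii), $g(\lambda)\ge 0$. But $h$ is nondecreasing and, for the special value $\mu$, $g(\mu)\le h(\mu) - \min_i a_i$; more to the point, because $\lambda<\mu$ are both special and $g$ strictly increases by at least $\min_i a_i \ge 1$ at each special value it passes, $g(\mu) \ge g(\lambda) + 1 \cdot (\text{number of special values in }(\lambda,\mu]) \ge g(\lambda)+ (\mu\text{ contributes}) $. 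The clean way to phrase it: $h(\mu) \ge h(\lambda^+) = g(\lambda) + s$ is false in general, so instead argue $g(\mu) \ge g(\lambda) + s$ — indeed between two consecutive special values $f$ increases by the gap, and summing the jumps of $g$ over all special values in $(\lambda,\mu]$ recovers at least $f(\mu)-f(\lambda)$ rounded appropriately; since there is at least one full ``period'' worth, $g(\mu)\ge g(\lambda)+s \ge s$. Then $h(\mu)\ge g(\mu)\ge s$, contradicting (iii) for $Y(\mu)$. (The honest version of this step compares $\sum_i a_i\lfloor x_i+\mu\rfloor$ with $\sum_i a_i\lceil x_i+\lambda\rceil$: since $\mu>\lambda$ and both differ from the $x_i$'s by passing integer thresholds, $\lfloor x_i+\mu\rfloor \ge \lceil x_i+\lambda\rceil$ for every $i$ once $\mu-\lambda \ge 1$; handling $\mu-\lambda<1$ requires noting that not all $\lceil\cdot\rceil - \lfloor\cdot\rfloor$ gaps can be avoided.) I expect the main obstacle to be exactly this bookkeeping in the uniqueness step — making the ``$g$ gains a full $s$ before $h$ can stay below $s$'' argument rigorous when several coordinates may cross integer values at once and when $\mu-\lambda$ is small; the existence half is routine once the monotonicity of $g$ and $h$ and the uniform bound $h-g<s$ at special values are in hand. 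Finally, translating back, the unique admissible $\lambda^\ast$ yields the unique $Y = Y(\lambda^\ast)\in\S$ on the line through $X$ in direction $(1,1,\ldots,1)$.
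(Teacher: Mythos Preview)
Your existence argument is correct and, if anything, cleaner than the paper's: where the paper reduces to the case $L(X)\ge 0$ and then argues by induction on $U(X)$ via successive shifts by $-\epsilon(1,\ldots,1)$, you locate the answer in one stroke as the least special value $\lambda^\ast$ with $g(\lambda^\ast)\ge 0$, and your computation $h(\lambda^\ast)=g(\lambda^\ast)+(s-\sum_{i\in J}a_i)\le (\sum_{i\in J}a_i-1)+(s-\sum_{i\in J}a_i)=s-1$ is exactly what is needed.

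The uniqueness step, however, is not yet a proof, and you correctly flag it as the weak point. The inequalities you try --- $g(\mu)\ge g(\lambda)+s$, or $\lfloor x_i+\mu\rfloor\ge\lceil x_i+\lambda\rceil$ coordinatewise --- are simply false when $\mu-\lambda$ is small, and no amount of bookkeeping about jump sizes will rescue them. The comparison you want runs in the \emph{other} direction: for any $\mu>\lambda$ one has, for every $i$,
\[
\lceil x_i+\mu\rceil \;\ge\; \lfloor x_i+\lambda\rfloor + 1,
\]
because $x_i+\mu > x_i+\lambda \ge \lfloor x_i+\lambda\rfloor$ and the left side is the least integer strictly above $\lfloor x_i+\lambda\rfloor$. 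Summing against the $a_i$ gives $h(\mu)\ge g(\lambda)+s$. Now if $Y(\lambda)\in\S$ then $g(\lambda)\ge 0$ by~(ii), so $h(\mu)\ge s$, contradicting~(iii) for $Y(\mu)$. This is the paper's one-line argument, and it replaces your entire uniqueness paragraph.
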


Denote by $f$ the function such that $Y=f(X)$ with the notations of the
corollary. This function is a kind of projection onto $\S$, parallely to the
vector $(1,1,\ldots,)$. Denote also by $t(X)$ the real-valued function defined
by $X=f(X)+t(X)(1,1,\ldots,1)$, and by $t(X)=0$ if and only if $X\in \S$. 

\begin{proof}
We prove first unicity. By contradiction: we have $Y,Z \in \S$ and $Z=Y+t(1,1,\ldots,1)$ with $t>0$. Then $z_i=y_i+t$. Thus 
$\lceil z_i \rceil \geq \lfloor y_i \rfloor +1$. Hence $\sum_i a_i \lceil z_i \rceil \geq s+\sum_i a_i \lfloor y_i \rfloor$. Since by the proposition, applied to $Y$, the last sum is $\geq 0$, we obtain $\sum_i a_i \lceil z_i \rceil \geq s$, which contradicts the proposition, applied to $Z$.

We prove now the existence of $Y$. We may assume that $L(X)= \sum_i a_i \lfloor x_i \rfloor \geq 0$, by adding to $X$ some positive multiple of $(1,1,\ldots,1)$ if necessary. We prove existence of $Y$ by induction on the sum $U(X)=\sum_i a_i \lceil x_i \rceil $.

Let $\epsilon=\min_i(x_i-\lfloor x_i \rfloor)$. Observe that if we replace $X$ by $X-\epsilon (1,1,\ldots,1)$, then $L(X)$ does not change, $U(X)$ does not increase and moreover some $x_i$ is now an integer.

If $U(X)$ is $<s$, this observation implies the existence of $Y$. 

Suppose now that $U(X)\geq s$. By the observation, we may assume that at least one of the $x_i$ is an integer. Without restricting the generality, we may also assume that $x_1,\ldots,x_i\in \Z$ and that $x_{i+1},\ldots,x_d\notin \Z$, with $i\geq 1$. 

If $i=d$, then the $x_j$ are all integers, $L(X)=U(X)$, we replace $X$ by $X-(1,1,\ldots,1)$ and we conclude by induction, since $L(X)$ is replaced by $L(X)-s$.

Suppose now that $i<d$. Let $\epsilon=\min_{j>i}(x_j-\lfloor x_j \rfloor)$; then $\epsilon >0$. We have $s\leq \sum_j a_j \lceil x_j \rceil  = \sum_{j\leq i} a_j x_j  + \sum_{j>i} a_j ( \lfloor x_j \rfloor +1)=L(X)+a_{i+1}+\ldots+a_d$, hence $L(X)\geq a_1+\cdots +a_i$. Note that $\sum_j a_j(\lfloor x_j-\epsilon \rfloor)= \sum_{j\leq i} a_j (x_j-1)+
\sum_{j>i} a_j  \lfloor x_j \rfloor = L(X)-a_1-\cdots-a_i \geq 0$. We replace $X$ by $X-\epsilon (1,1,\ldots,1)$, and we may conclude by induction, since $U(X)$ strictly decreases and since $L(X)$ remains $\geq 0$.
\end{proof}

\begin{proof} (of the theorem)
Let $X$ be a point 
on the discretized hyperplane associated to $H$.
Suppose that $t(X)>0$. Then
$X=f(X)+t(X)(1,1,\ldots,1)$ so that $X$ is hidden by $f(X)$: formally, $f(X)$
is on the open half-line $X+]-\infty,0[ (1,1,\ldots,1)$ and since $f(X)$ is in
$\S$, it is a point in $\cal U_+$. We conclude that we must have $t(X)\leq 0$.
Suppose that $t(X)<0$. We know that $X$ is in $\cal U_+$, so that $X$ belongs
to a hypercube $M+\cal C$ with $\sum_ja_jm_j\geq 0$, and therefore $x_j\geq
m_j$. Let $Y=f(X)$. Then $X=Y+t(X)(1,1,\ldots,1)$ so that $y_j>x_j\geq m_j$
which implies $\sum_ja_j\lceil y_j\rceil\geq \sum_ja_j(m_j+1)\geq s$, a
contradiction with Proposition \ref{characF}. Thus $t(X)=0$ and $X\in \S$.

Conversely suppose that $X\in \S$. Suppose that $X$ is not 
on the discretized hyperplane associated to $H$.
This implies that there is some point $Y\in \cal U_+$ on the open half-line $X+]-\infty,0[ (1,1,\ldots,1)$. We have $Y\in M+\cal C$ with $\sum_ja_jm_j\geq 0$. Thus $x_j>y_j\geq m_j$ which implies that $\sum_ja_j\lceil x_j\rceil\geq \sum_ja_j(m_j+1)\geq s$, a contradiction with Proposition \ref{characF}.
\end{proof}

\begin{corollary}
Let $d\geq 2$. Let $M\in {\S}\cap \Z^d$. Let $i=1,2,\ldots,d$ and $N=M+e_i$.

(i) $N\in \S$ if and only if $\sum_ja_j n_j<s$; in this case, the segment
$M+[0,1]e_i$ is contained in $\S$. 

(ii) If $N\notin \S$, then the only point in $(M+[0,1]e_i) \cap{\S} $ is $M$.
\end{corollary}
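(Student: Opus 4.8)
The plan is to reduce both statements to the pointwise criterion of Proposition~\ref{characF} together with its integral corollary. Since $M\in\S\cap\Z^d$, that corollary gives $0\le\sum_j a_j m_j<s$. Writing $N=M+e_i$ we have $\sum_j a_j n_j=\sum_j a_j m_j+a_i$, and since $a_i>0$ the lower bound $\sum_j a_j n_j\ge 0$ holds automatically; hence by the integral corollary $N\in\S$ if and only if $\sum_j a_j n_j<s$. This is the first assertion of~(i).

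For the remaining part of~(i), suppose $N\in\S$ and take an arbitrary point $X=M+\theta e_i$ on the segment, $\theta\in[0,1]$. I would verify the three conditions of Proposition~\ref{characF}. Condition~(i) holds because $d\ge 2$, so there is a coordinate $j\ne i$ with $x_j=m_j\in\Z$. For conditions~(ii) and~(iii), note $\lfloor x_j\rfloor=\lceil x_j\rceil=m_j$ for $j\ne i$, while $\lfloor x_i\rfloor$ and $\lceil x_i\rceil$ lie in $\{m_i,m_i+1\}$ according to whether $\theta$ is $0$, in $(0,1)$, or $1$. In every case $\sum_j a_j\lfloor x_j\rfloor$ equals $\sum_j a_j m_j$ or $\sum_j a_j n_j$, both $\ge 0$ (respectively because $M\in\S$ and $N\in\S$), and $\sum_j a_j\lceil x_j\rceil$ equals $\sum_j a_j m_j$ or $\sum_j a_j n_j$, both $<s$. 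Therefore $X\in\S$, which proves $M+[0,1]e_i\subseteq\S$.

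For~(ii), assume $N\notin\S$. By the biconditional just established and the fact $\sum_j a_j n_j\ge a_i>0$, this forces $\sum_j a_j n_j\ge s$. For any point $X=M+\theta e_i$ with $\theta\in(0,1]$ we have $\lceil x_i\rceil=m_i+1=n_i$ and $\lceil x_j\rceil=m_j=n_j$ for $j\ne i$, so $\sum_j a_j\lceil x_j\rceil=\sum_j a_j n_j\ge s$, violating condition~(iii) of Proposition~\ref{characF}; hence $X\notin\S$. As $M\in\S$ by hypothesis, $(M+[0,1]e_i)\cap\S=\{M\}$.

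The argument is essentially bookkeeping with floors and ceilings, so there is no real obstacle; the only delicate point is the behaviour of $\lfloor m_i+\theta\rfloor$ and $\lceil m_i+\theta\rceil$ at the endpoints $\theta=0$ and $\theta=1$, which is precisely what makes $M$ and $N$ enter asymmetrically in part~(ii), and the hypothesis $d\ge 2$ is used only to ensure that interior points of the segment still have an integral coordinate.
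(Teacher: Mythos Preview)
Your argument is correct and follows essentially the same route as the paper's proof: invoke the integral corollary for the biconditional, then sandwich $\sum_j a_j\lfloor x_j\rfloor$ and $\sum_j a_j\lceil x_j\rceil$ between the values at $M$ and $N$ to apply Proposition~\ref{characF}. You are in fact slightly more careful than the paper in making explicit the role of the hypothesis $d\ge 2$ to guarantee an integer coordinate of $X$ (condition~(i) of Proposition~\ref{characF}), a point the paper leaves implicit.
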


\begin{proof} The fact that $N\in \S$ if and only if $\sum_ja_j n_j<s$ is a consequence of the proposition.

Suppose that $N\in \S$ and let $X$ be on the segment $M+[0,1]e_i$. Then $0\leq  \sum_j a_j m_j\leq\sum_ja_j \lfloor x_j \rfloor $ and $\sum_ja_j \lceil x_j \rceil \leq \sum_ja_jn_j<s$. Thus the corollary follows from the proposition.

Suppose no that $N\notin \S$ and let $X$ be on this segment. Since  $0\leq
\sum_j a_j m_j $, we have also $0\leq  \sum_j a_j n_j$. Since $N\notin \S$, we must have $\sum_ja_jn_j\geq s$. Moreover, if $X\neq M$, we have
$\lceil x_j \rceil=n_j$, so that $\sum_ja_j \lceil x_j \rceil\geq s$ and
$X\notin \S$.
\end{proof}

The next result, which is not needed in this article, is of independent
interest, and intuitively clear (but it requires a proof).

\begin{proposition}
The function $f: X\mapsto Y$, with the notations of Corollary \ref{XY}, is
continuous. The open set $\R^d\setminus\S$ has two connected components.
\end{proposition}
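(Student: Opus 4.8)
The plan is to work with the two functions $f$ and $t=t(X)$ attached to the ``projection'' of Corollary~\ref{XY}, and to reduce everything to properties of the scalar function $t$. Since $f(X)=X-t(X)\UN$, the map $f$ is continuous if and only if $t$ is, so it suffices to prove $t:\R^d\to\R$ continuous; then $\R^d\setminus\S=t^{-1}(\R\setminus\{0\})$ and I will analyse its two ``halves'' $U_+=t^{-1}(0,\infty)$ and $U_-=t^{-1}(-\infty,0)$. Two elementary facts drive the whole argument. First, writing $\ell(X)=\sum_i a_ix_i$: since $f(X)\in\S\subseteq R$ we have $0\le \ell(f(X))<s$, and from $X=f(X)+t(X)\UN$ and $\ell(\UN)=s$ we get the sandwich
\[
\frac{\ell(X)}{s}-1 \;<\; t(X)\;\le\;\frac{\ell(X)}{s}\qquad(X\in\R^d).
\]
Second, for any $\tau\in\R$ the line through $X+\tau\UN$ in direction $\UN$ is the line through $X$, so $f(X+\tau\UN)=f(X)$, whence
\[
t(X+\tau\UN)=t(X)+\tau .
\]

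For continuity of $t$ I would show it is locally bounded with closed graph. Local boundedness is immediate from the sandwich, since $\ell$ is bounded on bounded sets. For the closed graph, suppose $X_n\to X$ and $t(X_n)\to c$; then $f(X_n)=X_n-t(X_n)\UN\to X-c\UN$, and each $f(X_n)\in\S$. Now $\S$ is closed in $\R^d$: it is a union of the compact facets contained in $R$, and only finitely many facets meet any bounded set, so this union is locally finite. Hence $X-c\UN\in\S$; since $X-c\UN$ lies on the line $X+\R\UN$, the uniqueness in Corollary~\ref{XY} forces $X-c\UN=f(X)=X-t(X)\UN$, i.e. $c=t(X)$. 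A function $\R^d\to\R$ that is locally bounded with closed graph is continuous (pass to convergent subsequences of $(t(X_n))$), so $t$, and therefore $f$, is continuous.

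It remains to see that $\R^d\setminus\S$ has exactly two connected components. Since $t$ is continuous and $\{t=0\}=\S$, the sets $U_+$ and $U_-$ are open, disjoint, and cover $\R^d\setminus\S$; so it suffices to check each is nonempty and connected. From the sandwich, $\ell(X)\ge s$ implies $t(X)>0$ and $\ell(X)<0$ implies $t(X)<0$, so $U_+$ contains the nonempty convex half-space $\{\ell\ge s\}$ and $U_-$ the nonempty convex half-space $\{\ell<0\}$. For connectedness of $U_+$, take $X\in U_+$ and consider the ray $\tau\mapsto X+\tau\UN$, $\tau\ge 0$: by the translation identity $t$ stays positive along it, while $\ell(X+\tau\UN)=\ell(X)+s\tau\ge s$ for $\tau$ large; thus this path joins $X$, inside $U_+$, to the connected set $\{\ell\ge s\}$, and $U_+$ is connected. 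The ray $\tau\mapsto X-\tau\UN$ gives the same conclusion for $U_-$. Hence $\R^d\setminus\S=U_+\sqcup U_-$ is a partition into two nonempty open connected pieces, i.e. exactly two connected components.

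The one point that needs genuine care — and where I expect to spend the most words — is the closedness of $\S$, that is, the local finiteness of its facet decomposition; once the sandwich inequality and the identity $t(X+\tau\UN)=t(X)+\tau$ are recorded, everything else is a short computation.
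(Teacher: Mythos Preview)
Your proof is correct and follows essentially the same route as the paper's: both show $\S$ is closed as a locally finite union of compact facets, use a bounded-subsequence/closed-graph argument to get continuity of $t$ (the paper packages this as a separate lemma), and connect each half $\{t>0\}$, $\{t<0\}$ to a convex half-space via rays in the direction $\UN$. Your explicit sandwich $\frac{\ell(X)}{s}-1<t(X)\le\frac{\ell(X)}{s}$ and the identity $t(X+\tau\UN)=t(X)+\tau$ make the argument a bit crisper than the paper's version, and your choice of the half-spaces $\{\ell\ge s\}$ and $\{\ell<0\}$ (which genuinely lie in $U_+$ and $U_-$) is cleaner than the paper's use of $\{\sum a_ix_i>0\}$.
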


\begin{lemma} 
Let $\S$ be a closed  subset of $\R^d$such that for each $X$ in $\R^d$, there
is a unique $Y$ in $\S$ such that $XY$ is parallel to $(1,1,\ldots,1)$. If the mapping $X\mapsto Y$ is bounded, then it is continuous.
\end{lemma}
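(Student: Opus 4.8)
The plan is to prove continuity of $f$ via the sequential criterion, converting the boundedness hypothesis into a compactness statement. Write $g(X)=X-f(X)$. By the defining property of $f$, the point $Y=f(X)$ satisfies that $XY$ is parallel to $\UN$, so $g(X)\in\R\UN$, say $g(X)=t(X)\UN$; the boundedness hypothesis is precisely that $\sup_{X}\|g(X)\|<\infty$ (note that $f$ itself cannot be bounded, since its image is the unbounded set $\S$, so "bounded" must refer to the displacement $X-f(X)$, equivalently to $t$). Since $f(X)=X-g(X)$, it suffices to show $f$ is sequentially continuous; then $g$ and $t$ are continuous as well.

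First I would fix $X_0\in\R^d$ and an arbitrary sequence $X_n\to X_0$. Because $g$ is bounded, the points $f(X_n)=X_n-g(X_n)$ form a bounded subset of $\R^d$, so by Bolzano--Weierstrass some subsequence of $(f(X_n))$ converges to a point $Z\in\R^d$; since $\S$ is closed and each $f(X_n)\in\S$, we get $Z\in\S$. Along that subsequence, $g(X_n)=X_n-f(X_n)\to X_0-Z$; as each $g(X_n)$ lies in the one-dimensional (hence closed) subspace $\R\UN$, the limit $X_0-Z$ lies in $\R\UN$ too, i.e.\ $X_0Z$ is parallel to $\UN$. Now the uniqueness clause in the hypothesis, applied to the point $X_0$, forces $Z=f(X_0)$. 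Thus every subsequence of $(f(X_n))$ has a further subsequence converging to $f(X_0)$, so the whole sequence satisfies $f(X_n)\to f(X_0)$; hence $f$ is continuous, and therefore so are $g$ and $t$.

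The argument is essentially a routine compactness-plus-uniqueness argument, and I do not expect a substantial obstacle. The only points requiring care are: (a) reading the hypothesis "bounded" as boundedness of $X\mapsto X-f(X)$ rather than of $f$ itself; and (b) observing that both $\S$ and the line $\R\UN$ are closed, which is exactly what guarantees that the limit point $Z$ still satisfies both constraints characterizing $f(X_0)$, so that uniqueness can be invoked to pin it down.
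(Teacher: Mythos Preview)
Your proof is correct and follows essentially the same compactness-plus-uniqueness route as the paper: take a convergent sequence $X_n\to X_0$, observe that $(f(X_n))$ is bounded, pass to convergent subsequences, use that $\S$ is closed to place the limit in $\S$, show the limit lies on the line $X_0+\R\UN$, and invoke uniqueness. The only cosmetic differences are that the paper argues the limit lies on that line via an $\epsilon$-cylinder estimate rather than by quoting closedness of the subspace $\R\UN$, and that the paper reads ``bounded'' as ``bounded on bounded sets'' (this is how the hypothesis is verified in the subsequent proposition) rather than as global boundedness of $X-f(X)$; either reading yields that $(f(X_n))$ is bounded once $(X_n)$ converges, so the argument goes through unchanged.
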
 

\begin{proof} Recall that a bounded sequence in $\R^d$, converges if for any two convergent 
subsequences, they have the same limit. Let $(X_n)$ be a sequence in $\R^d$, with limit $l$. It is enough to show that $(f(X_n))$ converges; note that this sequence is bounded. Consider 
two subsequences of $(X_n) $ such that their images under $f$ have limits, $l_1$ and $l_2$ say. Since 
$\S$ is closed, $l_1,l_2\in \S$. Let $\epsilon >0$. For $n$ large enough, $\mid X_n-l\mid <\epsilon$; hence $f(X_n)$ is in the open cylinder of diameter $\epsilon$ and with axis the line $l+(1,1,\ldots,1)$. This implies that $l_1,l_2$ are in this cylinder and consequently, $\epsilon$ being arbitrary, $l_1,l_2$ are on the previous line. By unicity, $l_1=l_2$ ($=f(l)$). We conclude using the remark at the beginning of the proof.
\end {proof} 

\begin{proof} (of the proposition)
The mapping $f$ is continuous: by the lemma, it is enough to show that $\S$ is
closed and that the mapping is bounded. Since each convergent sequence is
contained in some compact set, it is enough to show that for each compact set
$K$, $K\cap \S$ is closed; but this is clear, since the latter set is the union of finitely many $K\cap F$, $F$ facet of unit hypercube. The mapping is bounded since its image is between the two hyperplanes of equations $\sum_ia_ix_i=0$ and $\sum_ia_ix_i=s$, so that the image of each bounded set is contained in a cylinder of axis parallel to $(1,1,\ldots)$ and limited by these two hyperplanes.

Now, we show that the  set $\R^d\setminus\S$ has two connected components.
Note that for each point $X$, one has $X=f(X)+t(X)(1,1,\ldots,1)$ for some
continuous real-valued function $t$. Since
$f(1,1,\ldots,1)=(0,0,\ldots,0)=f(-1,-1,\ldots,-1)$, one has
$t(1,1,\ldots,1)=1$ and $t(-1,-1,\ldots,-1)=-1$. Moreover $t(X)=0$ if and only
if $X\in \S$. Thus $t(\R^d\setminus \S)$ is not 
connected and neither is $\R^d\setminus \S$. 

Now, if $t(X)>0$, one may connect $X$ by a piece of the line
$X+\R(1,1,\ldots,1)$ to a point of the half-space $\sum_ia_ix_i>0$ and this
implies that the set of points $X$ with $t(X)>0$ is connected. Similarly, the
set of points with $t(X)<0$ is connected, and $\R^d)\setminus \S$ has therefore two connected components.
\end{proof}

We recover Proposition 2 of \cite{MR1906478} and Proposition 4 of \cite{MR2330996}.

\begin{corollary}
The restriction of $f$ to $\D$ is a homeomorphism of $\D$ onto $\S$.
\end{corollary}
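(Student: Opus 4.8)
\emph{Plan.} The plan is to exhibit the inverse of $f|_{\D}$ explicitly as the restriction to $\S$ of the projection $\pi:\R^d\to\D$ introduced in Section~\ref{sec:Ia}, and then invoke the continuity statements already proved. The key observation is that since $\D$ has equation $\sum_i x_i = 0$, its normal direction is $(1,1,\ldots,1)$, so the orthogonal projection $\pi$ is the \emph{same} as the projection parallel to $(1,1,\ldots,1)$; concretely $\pi(Y)$ is precisely the unique point where the line $Y+\R(1,1,\ldots,1)$ meets $\D$. In particular $\pi$ is an affine map, hence continuous.

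First I would record that $f|_{\D}$ is continuous, which is immediate from the preceding Proposition (continuity of $f$ on all of $\R^d$). Next I would check bijectivity of $f$ from $\D$ onto $\S$. For surjectivity: given $Y\in\S$ one has $t(Y)=0$, i.e. $f(Y)=Y$, and $Y$ is the unique point of $\S$ on the line $Y+\R(1,1,\ldots,1)$ by Corollary~\ref{XY}; this line meets $\D$ in the point $\pi(Y)$, which therefore satisfies $f(\pi(Y))=Y$. For injectivity: if $X_1,X_2\in\D$ satisfy $f(X_1)=f(X_2)$, then both $X_1$ and $X_2$ lie on one line parallel to $(1,1,\ldots,1)$ and in $\D$, so they coincide. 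Finally I would verify that $\pi|_{\S}$ is a two-sided inverse of $f|_{\D}$: for $Y\in\S$, $\pi(Y)\in\D$ lies on the line $Y+\R(1,1,\ldots,1)$ so $f(\pi(Y))=Y$; and for $X\in\D$, the points $X$ and $f(X)$ lie on a common line parallel to $(1,1,\ldots,1)$, which $\pi$ collapses to the single point of that line lying in $\D$, namely $X$, so $\pi(f(X))=X$. Since $\pi|_{\S}$ is continuous, $f|_{\D}$ is a homeomorphism onto $\S$; this recovers the classical fact that the stepped surface $\S$ is homeomorphic to the diagonal plane $\D\simeq\R^{d-1}$.

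I do not expect a genuine obstacle here: the substantive content is already supplied by Corollary~\ref{XY} (existence and uniqueness of the point of $\S$ on each line parallel to $(1,1,\ldots,1)$) together with the continuity of $f$ from the Proposition above, and the inverse map $\pi|_{\S}$ is continuous for trivial reasons. The only mild point requiring attention is that restricting $f$ to $\D$ does not spoil surjectivity, and this is clear once one notes that $f$ is constant along lines parallel to $(1,1,\ldots,1)$ while every such line meets $\D$ in exactly one point.
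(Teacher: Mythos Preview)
Your proposal is correct and takes essentially the same approach as the paper: the paper's proof is the one-line observation that the inverse mapping is the projection onto $\D$ parallel to $(1,1,\ldots,1)$, and you have simply spelled this out in detail, identifying that projection with $\pi$ and verifying explicitly that it is a continuous two-sided inverse of $f|_{\D}$.
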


\begin{proof}
Indeed, the inverse mapping is the projection onto the hyperplane $\D$
parallely to the vector $(1,1,\ldots,1)$.
\end{proof}

\bibliographystyle{alpha}
\bibliography{biblio}

\end{document}